\definecolor{darkred}{rgb}{0.5,0,0}
\definecolor{darkblue}{rgb}{0,0,0.5}
\definecolor{darkgreen}{rgb}{0,0.5,0}
\newtheorem{theorem}{Theorem}[section]
\newtheorem{lemma}[theorem]{Lemma}
\newtheorem{corollary}[theorem]{Corollary}
\newtheorem*{claim}{Claim}
\newenvironment{claimproof}[1][Proof of the Claim.]{\proof[#1]}{\endproof}
\theoremstyle{definition}
\newtheorem{definition}[theorem]{Definition}
\newtheorem{example}[theorem]{Example}
\newcommand{\owlref}{\operatorname{owl-ref}}
\newcommand{\owl}{\operatorname{owl}}
\newcommand{\atp}{\operatorname{atp}}
\newcommand{\wl}{\operatorname{wl}}
\newcommand{\dom}{\operatorname{dom}}
\newcommand{\BP}{\operatorname{BP}}
\newcommand{\CR}{\operatorname{CR}}
\newcommand{\free}{\operatorname{free}}
\newcommand{\bound}{\operatorname{bound}}
\newcommand{\qr}{\operatorname{qr}}
\newcommand{\im}{\operatorname{im}}
\newcommand{\td}{\operatorname{td}}
\title{Finite Variable Counting Logics with Restricted Requantification\footnote{The research of the second and third author leading to these results has received funding from the European Research Council (ERC) under
the European Union’s Horizon 2020 research and innovation programme (EngageS: grant agreement No. 820148).}}
\newcommand{\mailadress}[1]{\href{mailto:#1@mathematik.tu-darmstadt.de}{\texttt{#1}}}
\author{Simon Raßmann \and Georg Schindling \\[4mm] TU Darmstadt \\
\(\mathclap{\{\mailadress{rassmann}, \mailadress{schindling}, \mailadress{schweitzer}\}\texttt{@mathematik.tu-darmstadt.de}}\)
\and Pascal Schweitzer}
\begin{document}

\maketitle

\begin{abstract}
    Counting logics with a bounded number of variables form one of the central concepts in descriptive complexity theory. Although they restrict the number of variables that a formula can contain, the variables can be nested within scopes of quantified occurrences of themselves. In other words, the variables can be requantified. We study the fragments obtained from counting logics by restricting requantification for some but not necessarily all the variables.

    Similar to the logics without limitation on requantification, we develop tools to investigate the restricted variants. Specifically, we introduce a bijective pebble game in which certain pebbles can only be placed once and for all, and a corresponding two-parametric family of Weisfeiler-Leman algorithms.
    We show close correspondences between the three concepts.

    By using a suitable cops-and-robber game and adaptations of the Cai-Fürer-Immerman construction, we completely clarify the relative expressive power of the new logics.

    We show that the restriction of requantification has beneficial algorithmic implications in terms of graph identification. Indeed, we argue that with regard to space complexity, non-requantifiable variables only incur an additive polynomial factor when testing for equivalence. In contrast, for all we know, requantifiable variables incur a multiplicative linear factor.

    Finally, we observe that graphs of bounded tree-depth and 3-connected planar graphs can be identified using no, respectively, only a very limited number of requantifiable variables.
\end{abstract}

\section{Introduction}

Descriptive complexity is a branch of finite model theory that essentially aims at characterizing how difficult logical expressions need to be in order to capture particular complexity classes. 
While we are yet to find or rule out a logic capturing the languages in the complexity class P,
there is an extensive body of work regarding the descriptive complexity of problems within~P. Most notably, there is the work of Cai, F\"{u}rer, and Immerman~\cite{cfi_opt} which studies a particular fragment of first-order logic.
This is the fragment~$\mathsf{C}^k$ in which counting quantifiers are introduced into the logic, but the number of variables is restricted to being at most~$k$.
The seminal result in~\cite{cfi_opt} shows that this logic fails to define certain graphs up to isomorphism,
which in turn proves that \emph{inflationary fixed-point logic with counting} \textsf{IFP+C} fails to capture~P.

Although the fragment~$\mathsf{C}^k$ restricts the number of variables, it is common for variables to be reused within a single logical formula. In particular, variables can be nested within scopes of quantified occurrences of themselves. In other words, they can be requantified. In our work, we are interested in understanding what happens if we limit the ability to reuse variables through requantification. In fact, we may think of reusability as a resource (in the vein of time, space, communication, proof length, advice etc.) that should be employed economically.

It turns out that the ability to limit requantification provides us with a more detailed lens into the landscape of descriptive complexities within P, much in the fashion of fine-grained complexity theory.

\subparagraph*{Results and techniques}
Let us denote by~$\mathsf{C}^{(k_1,k_2)}$ the fragment of first-order logic with counting quantifiers in which the formulas have at most~$k_1$ variables that may be requantified and at most~$k_2$ variables that may not be requantified.

First, we show that many of the traditional techniques of treating counting logics can be adapted to the setting of limited requantification. Specifically, it is well known that there is a close ternary correspondence between the logic~$\mathsf{C}^k$, the combinatorial bijective $k$-pebble game, and the famous $(k-1)$-dimensional Weisfeiler-Leman algorithm~\cite{cfi_opt,hella_logical_hierarchies, immerman_fo_approach}. We develop versions of the game and the algorithm that also have a limit on the reusability of resources. For the pebble game, a limit on requantification translates into pebbles that cannot be picked up anymore, once they have been placed. For the Weisfeiler-Leman algorithm, the limit on requantification translates into having some dimensions that ``cannot be reused''.
In fact the translation to the algorithmic viewpoint is not as straightforward as one might hope at first. Indeed, we do not know how to define a restricted version of the classical Weisfeiler-Leman algorithm that corresponds to the logic~$\mathsf{C}^{(k_1,k_2)}$. 
However, we circumvent this problem by employing the oblivious Weisfeiler-Leman algorithm (OWL).
This variant is often used in the context of machine learning. 
In fact, Grohe~\cite{grohe_gnn} recently showed that $k+1$-dimensional OWL is in fact exactly as powerful as $k$-dimensional (classical) WL.
We develop a resource-reuse restricted version of the oblivious algorithm and prove equivalence to our logic.
Indeed, we formally prove precisely matching correspondences between the limited requantification, limited pebble reusability, and the limited reusable dimensions (Theorem~\ref{thm:characterization}).

Next, we conclusively clarify the relation between the logics within the two-parametric family~$\mathsf{C}^{(k_1,k_2)}$. We show that in most cases limiting the requantifiability of a variable strictly reduces the power of the logic. We argue that no amount of requantification-restricted variables is sufficient to compensate the loss of an unrestricted variable. However, these statements are only true if at least some requantifiable variable remains. In fact, exceptionally, $\mathsf{C}^{(1,k_2)}$ is strictly less expressive than~$\mathsf{C}^{(0,k'_2)}$ whenever~$k_2'>2k_2$ (Theorem~\ref{thm:hierarchy}). To show the separation results, we adapt a construction of F\"{u}rer~\cite{fuerer_grid} and develop a cops-and-robber game similar to those in~\cite{Fluck2024, Grohe2023}. In this version, some of the cops may repeatedly change their location, while others can only choose a location once and for all. Using another graph construction, we rule out various a priori tempting ideas concerning normal forms in  $\mathsf{C}^{(k_1,k_2)}$. To this end we show that formulas in the logics can essentially become as complicated as possible, having to repeatedly requantify all of the requantifiable variables an unbounded number of times, before using a non-requantifiable variable (Corollary~\ref{cor:no:normalform}). In terms of the pebble game, it seems a priori unclear when an optimal strategy would employ the non-reusable pebbles. However, the corollary says that in general one has to conserve the non-reusable pebbles for possibly many moves until a favorable position calls for them.

Having gone through the technical challenges that come with the introduction of reusability, puts us into a position to discuss the implications.
Indeed, as our main result, we argue that our finer grained view on counting logics through restricted requantification has beneficial algorithmic implications. Specifically, we show that 
equivalence with respect to the logic $\mathsf{C}^{(k_1,k_2)}$ can be decided in polynomial time with a space complexity of \(O\bigl(n^{k_1}\log n\bigr)\), hiding
quadratic factors depending only on $k_1$ and $k_2$ (\Cref{thm:space:complexity}). This shows that while the requantifiable variables each incur a multiplicative linear factor in required space, the restricted variables only incur an additive polynomial factor. In particular,
equivalence with respect to the logics $\mathsf{C}^{(0,k_2)}$ can be decided in logarithmic space.
To show these statements, we leverage the fact that, because non-requantifiable variables cannot simultaneously occur free and bound,
the \(\mathsf{C}^{(k_1,k_2)}\)-type of a variable assignment does not depend on the \(\mathsf{C}^{(k_1,k_2)}\)-type of assignments which disagree regarding non-requantifiable variables.
Moreover, we use ideas from an algorithm of Lindell, which computes isomorphism of trees in logarithmic space~\cite{lindell_logspace} to implement the iteration steps of our algorithm. Generally, we believe the new viewpoint may be of interest in particular for applications in machine learning, where the WL-hierarchy appears to be too coarse for actual applications with graph neural networks (see for example~\cite{DBLP:conf/nips/BarceloGRR21,DBLP:conf/iclr/BevilacquaFLSCB22,DBLP:conf/nips/0001RM20,DBLP:conf/iclr/0002CWLF23}).
In the process of the space complexity proof, we also show that the iteration number of the resource-restricted Weisfeiler-Leman algorithm described above is at most \((k_2+1)n^{k_1}-1\) (\Cref{cor:iteration:number}).

Justifying the new concepts of restricted reusability, we observe that there are interesting graph classes that are identified by the logics~$\mathsf{C}^{(k_1,k_2)}$. We argue that \(\mathsf{C}^{(0,d+1)}\) identifies all graphs of tree-depth at most \(d\) (Theorem~\ref{thm:logic:vs:tree:depth}) and that \(\mathsf{C}^{(2,2)}\) identifies all \(3\)-connected planar graphs~(\Cref{thm:3-con:plan:graph:vs:logic}).

\subparagraph*{Outline of the paper} After briefly providing necessary preliminaries (Section~\ref{sec:preliminaries}) we formally introduce the logics $\mathsf{C}^{(k_1,k_2)}$, the pebble game with non-reusable pebbles, the \((k_1,k_2)\)-dimensional oblivious Weisfeiler-Leman algorithm, and prove the correspondence theorem between them (Section~\ref{sec:finite:var:log:res}). We then relate the power of the logics to each other and rule out certain normal forms (Section~\ref{sec:role:of:reuse}). 
We then analyze the space complexity (Section~\ref{sec:space:compl}) and finally provide two classes of graphs that are identified by our logics (\Cref{sec:planar:and:tree:depth}).

\subparagraph*{Further related work} In addition to the references above, let us mention related investigations. Over time, a large body of work on descriptive complexity has evolved. For insights into fundamental results regarding bounded variable logics, we refer to classic texts~\cite{DBLP:books/daglib/0095988,immerman_fo_approach,otto_bounded_var,DBLP:conf/asl/PikhurkoV09}. 
However, highlighting the importance of the counting logics~$\mathsf{C}^{k}$, let us at least mention the Immerman-Vardi theorem~\cite{DBLP:journals/iandc/Immerman86,DBLP:conf/stoc/Vardi82}. It says that 
on ordered structures, \emph{least fixed-point logic} \textsf{LFP} captures~P. Since \textsf{LFP} has the same expressive power as \textsf{IFP+C} on ordered structures, also \textsf{IFP+C}, whose expressive power is
closely related to the expressive power of the logics \(\mathsf{C}^k\), captures~P. We should also mention the work of Hella~\cite{hella_logical_hierarchies} introducing the bijective $k$-pebble game which forms the basis for our resource restricted versions. 

\emph{(Counting logics on graph classes)}
Because of the close correspondence between the logic $\mathsf{C}^k$ and the $(k-1)$-dimensional Weisfeiler-Leman algorithm, our investigations are closely related to the notion of the \emph{Weisfeiler-Leman dimension} of a graph defined in \cite{grohe_minors_book}. 
Given a graph $G$ this is the least number of variables $k$ such that $\mathsf{C}^{k+1}$ identifies $G$.
In particular, on every graph class of bounded Weisfeiler-Leman dimension, the corresponding finite variable counting logic captures isomorphism.
Graph classes with bounded Weisfeiler-Leman dimension include graphs with a forbidden minor~\cite{grohe_minors} and graphs of bounded rank-width (or equivalently clique width)~\cite{grohe_rank_width}, which in both cases is also shown to imply that \textsf{IFP+C} captures~P on these classes. 
For a comprehensive survey we refer to~\cite{Kiefer_2020}. Our observations for planar graphs follow from techniques bounding the number of variables required for the identification of planar graphs~\cite{kiefer_planar}. Other recent classes not already captured by the results on excluded minors and rank-width include, for example, some polyhedral graphs~\cite{Li2023WeisfeilerLeman}, some strongly regular graphs~\cite{Cai2023StronglyRegularGraphs}, and permutation graphs~\cite{Guo2023WeisfeilerLemanPerm}. 

\emph{(Logic and tree decompositions)} In \cite{Dvorak2010} and independently \cite{Dell2018} it was shown that $\mathsf{C}^k$-equivalence is characterized by homomorphism counts from graphs of tree-width at most $k-1$.
Likewise, homomorphism counts from bounded tree-depth graphs characterize equivalence in counting logic of bounded quantifier-rank~\cite{grohe_tree-depth}. 
Recently, these results were unified to characterize logical equivalence in finite variable counting logics with bounded quantifier-rank in terms of homomorphism counts~\cite{Fluck2024}.

\emph{(Space complexity)} Ideas underlying Lindell's logspace algorithm for tree isomorphism have also been used in the context of planar graphs~\cite{datta_logspace} and more generally bounded genus graphs~\cite{DBLP:conf/stoc/ElberfeldK14}. Similar results exist for classes of bounded tree-width~\cite{das_logspace,elberfeld_logspace}.

\emph{(Further recent results)} Let us mention some quite recent results in the vicinity of our work that cannot be found in the surveys mentioned above. 
Regarding the quantifier-rank within counting logics, there is a recent superlinear lower bound \cite{Grohe2023} improving Fürer's linear lower bound construction \cite{fuerer_grid}.  Further, very recent work on logics with counting includes results on rooted unranked trees~\cite{DBLP:conf/lics/HellingsGBG23} and inapproximability of questions on unique games~\cite{DBLP:conf/lics/Tucker-Foltz21}. Finally, there has been a surge in research on descriptive complexity within the context of machine learning (see~\cite{grohe_gnn,DBLP:conf/lics/Bergerem19,DBLP:phd/dnb/Bergerem23}).

\section{Preliminaries}\label{sec:preliminaries}

\paragraph*{General notation.}
For \(n\in\mathbb{N}_+\) we use \([n]\) to denote the \(n\)-element set \(\{1,\dots,n\}\). We use the notation \(\{\!\!\{v_1,\dotsc,v_n\}\!\!\}\) for \emph{multisets}.
For \(k_1,k_2 \in \mathbb{N}_+\), we fix the variable sets
\([x_{k_1}] \coloneqq \{x_1,\dotsc,x_{k_1}\}\),
\([y_{k_2}] \coloneqq \{y_1,\dotsc,y_{k_2}\}\), and
\([x_{k_1}, y_{k_2}] \coloneqq \{x_1,\dotsc,x_{k_1},y_1,\dotsc,y_{k_2}\}\).
Given a set \(V\), a \emph{partial function} \(\alpha \colon [x_{k_1}, y_{k_2}] \rightharpoonup V\) assigns to every variable \(z \in [x_{k_1}, y_{k_2}]\) at most one element \(\alpha(z) \in V\).
If \(\alpha\) does not assign an element to \(z\), we write \(\alpha(z) = \bot\). Also, we write \(\im(\alpha)\) for the \emph{image} of $\alpha$.
With a finite set \(V\) and \(\alpha\colon[x_{k_1},y_{k_2}] \rightharpoonup V\) we associate the total function \(\overline{\alpha}\colon[x_{k_1},y_{k_2}]\to V\mathbin{\dot\cup}\{\bot\}\), which we also view as a \([x_{k_1},y_{k_2}]\)-indexed \((k_1+k_2)\)-tuple.
For \(z \in [x_{k_1}, y_{k_2}]\) and \(v \in V\), the function \(\alpha[z/v]\) is defined as \(\alpha\) but with \(\alpha(z)\) replaced by \(v\).

\paragraph*{Graphs.}
A graph is a pair \(G=(V(G),E(G))\) consisting of a finite set \(V(G)\) of \emph{vertices} and a set \(E(G)\subseteq\binom{V(G)}{2}\) of \emph{edges}.
We write $|G|$ for the number of vertices, called the \emph{order} of $G$.
For a vertex $v \in V(G)$ we define the \emph{neighborhood} $N_G(v) \coloneqq \{w \in V(G) : \{v,w\} \in E(G)\}$ and the \emph{degree} $d_G(v) \coloneqq |N_G(v)|$ of $v$ in $G$.
We call $v$ \emph{universal} in $G$ if $N_G(v) = V(G) \setminus \{v\}$.

A \emph{colored graph} consists of a graph \(G\) and a coloring function \(\chi\colon V(G)\to C\) with a finite, ordered set \(C\) of \emph{colors}.
For a colored graph $G$ and vertices $v_1,\dotsc,v_n \in V(G)$ the graph $G_{(v_1,\dotsc,v_n)}$ is obtained by assigning new and distinct colors to the vertices $v_1,\dotsc,v_n$ in $G$.
The vertices $v_1,\dotsc,v_n$ are then called \emph{individualized}.

An isomorphism of (colored) graphs \(G\) and \(H\) is a bijection \(\varphi\colon V(G)\to V(H)\) that preserves edges, non-edges, and vertex-colors.

We denote the complete graph on $n$ vertices by $K_n$, that is, the graph with vertex set $[n]$ and all possible edges included.
A star of degree $n$ is a graph consisting of one universal vertex of degree $n$ and its neighbors of degree $1$.

\paragraph*{First-order logic with counting.}
First-order logic with counting \(\mathsf{C}\) is an extension of first-order logic by counting quantifiers \(\exists^{\geq k}\) for all \(k\in\mathbb{N}\). These intuitively state that there exist at least \(k\) distinct vertices satisfying the formula that follows.

Over the language of colored graphs with variable set \(\mathcal{V}\), formulas are inductively built up using:
\begin{itemize}
    \item atomic formulas, namely
          \begin{itemize}
              \item the formulas \(x=y\) for variables \(x,y\in\mathcal{V}\),
              \item the formula \(E(x,y)\) for variables \(x,y\in\mathcal{V}\), stating that the vertices \(x\) and \(y\) are adjacent,
              \item for every color \(c\) and variable \(x\in\mathcal{V}\), the formula \(U_c(x)\), stating that the vertex \(x\) has color \(c\),
          \end{itemize}
    \item negation (\(\neg\)), conjunction (\(\land\)), disjunction (\(\lor\)), and implication (\(\rightarrow\)),
    \item quantification over vertices via \(\forall\), \(\exists\) and the counting quantifiers \(\exists^{\geq k}\).
\end{itemize}
For a colored graph \(G\), a variable assignment \(\alpha\colon \mathcal{V}\to V(G)\), and a formula \(\varphi\in\mathsf{C}\),
we write \(G,\alpha\models\varphi\) if the graph \(G\) together with the variable assignment \(\alpha\) \emph{satisfies} the formula \(\varphi\).

The \emph{quantifier-rank} \(\qr(\varphi)\) of a formula \(\varphi\) is the maximum depth of nested quantifiers in the formula.
The set of \emph{free variables} $\free(\varphi)$ of a formula \(\varphi\) is defined as the set of all variables that occur outside the scope of a corresponding quantifier in $\varphi$.
If $\free(\varphi) = \emptyset$ the formula $\varphi$ is called a \emph{sentence}.
The set of \emph{bound variables} $\bound(\varphi)$ is the set of all variables that occur quantified in $\varphi$.

If we restrict to a finite set of $k \in \mathbb{N}_+$ variables, the resulting logic is called \emph{$k$-variable counting logic} and denoted by \(\mathsf{C}^k\).
For $r \in \mathbb{N}$ the \emph{quantifier-rank-$r$ counting logic} $\mathsf{C}_r$ is obtained by restricting formulas in $\mathsf{C}$ to quantifier-rank at most $r$.
The \emph{$k$-variable quantifier-rank-$r$ counting logic} is defined as $\mathsf{C}^k_r \coloneqq \mathsf{C}^k \cap \mathsf{C}_r$.
Note that in this logic, variables are allowed to be \emph{requantified}. For example, the following is a formula in \(\mathsf{C}^2_3\), stating
that every red vertex has a blue vertex at distance \(2\).
\[\forall x_1 \biggl(U_{\text{red}}(x_1)\rightarrow
	\exists x_2 \Bigl(E(x_1,x_2)\land
		\exists x_1 \bigl(E(x_2,x_1) \land U_{\text{blue}}(x_1)\bigr)
	\Bigr)
  \biggr).\]
As a general reference on finite variable logics, we refer to \cite{otto_bounded_var}.

\paragraph*{The Weisfeiler-Leman algorithm.}
Let \(k \geq 1\) and \(G\) be a colored graph.
The \emph{\(k\)-dimensional Weisfeiler-Leman algorithm} (short \(k\)-WL) iteratively computes a coloring of the \(k\)-tuples of vertices of \(G\). We also view the \(k\)-tuples as total functions \(\overline{\alpha} \colon \{x_1,\dotsc,x_k\} \to V(G)\).
Initially, each tuple \(\overline{\alpha} \in V(G)^k\) is colored by $$\wl^{(0)}_{k}(G, \overline{\alpha}) \coloneqq \operatorname{atp}_{k}(G, \overline{\alpha}).$$
Here, \(\operatorname{atp}_{k}(G,\overline{\alpha})\) is the \emph{atomic type} of \(\overline{\alpha}\) in \(G\),
i.e., the set of all atomic formulas with variables in $\{x_1,\dotsc,x_k\}$ satisfied by the colored graph \(G[\operatorname{im}(\alpha)]\) together with the assignment $\alpha$.
For every \(r\in\mathbb{N}\), we then inductively set
\begin{align*}
    \wl_{k}^{(r+1)}(G, \overline{\alpha}) \coloneqq
    \Bigl(\wl_k^{(r)}(G,\overline{\alpha}); \left\{\!\!\left\{ \left(\wl_{k}^{(r)}(G,\overline{\alpha}[x_i/u])\right)_{i\in[k]}: u \in V(G)\right\}\!\!\right\}\Bigr)
\end{align*}
whenever \(k \geq 2\), but for the case \(k=1\) we set
\[\wl_{k}^{(r+1)}(G, \overline{\alpha}) \coloneqq \Bigl(\wl_{k}^{(r)}(G, \overline{\alpha}) ; \left\{\!\!\left\{\wl_{k}^{(r)}(G, u)  : u \in N_G(\overline{\alpha})\right\}\!\!\right\}\Bigr).\]
We write \(\wl_k^{(r)}(G)\) for the coloring of all $k$-tuples of vertices of $G$ assigning $\wl_{k}^{(r+1)}(G, \overline{\alpha})$ to $\overline{\alpha}$.
Since the definition of \(\wl_{k}^{(r+1)}(G, \overline{\alpha})\) includes the color \(\wl_{k}^{(r)}(G, \overline{\alpha})\) of the previous iteration, the coloring \(\wl_{k}^{(r+1)}(G)\) \emph{refines} the coloring \(\wl_{k}^{(r)}(G)\).
That is,
whenever \(\wl_{k}^{(r+1)}(G, \overline{\alpha_1}) = \wl_{k}^{(r+1)}(G, \overline{\alpha_2})\)
for \(\overline{\alpha_1}, \overline{\alpha_2} \in V(G)^k\),
then we also have \(\wl_{k}^{(r)}(G, \overline{\alpha_1}) = \wl_{k}^{(r)}(G, \overline{\alpha_2})\).
Since there are exactly \(|V(G)|^k\)-many \(k\)-tuples of vertices, there exists an \(r \leq |V(G)|^k\) such that \(\wl_{k}^{(r)}(G)\) induces the same partition of color classes as \(\wl_{k}^{(r+1)}(G)\). It follows from the definition of the refinement
that this implies \(\wl_k^{(r)}(G)\) induces the same partition as \(\wl_k^{(r')}(G)\) for all \(r'\geq r\).
In this case we say that \(k\)-WL \emph{stabilizes} after at most \(r\) iterations on the graph \(G\).
If \(r\) is minimal with this property, we write \(\wl_{k}^{(\infty)}(G)\coloneqq\wl_k^{(r+1)}(G)\)
and call \(\wl_k^{(\infty)}(G)\) the \emph{stable coloring}.
For a second colored graph \(H\), we say that \(k\)-WL \emph{distinguishes} \(G\) and \(H\) after \(r\) iterations if there exists a color \(c\) such that
\[\left|\left\{\overline{\alpha} \in V(G)^k: \wl_k^{(r)}(G, \overline{\alpha}) = c\right\}\right| \neq
    \left|\left\{\overline{\beta}  \in V(H)^k: \wl_k^{(r)}(H, \overline{\beta})  = c\right\}\right|.\]

Besides the classical \(k\)-dimensional Weisfeiler-Leman algorithm, there also exists
a variant, called the \emph{\((k+1)\)-dimensional oblivious Weisfeiler-Leman algorithm} (short \((k+1)\)-OWL).
This variant colors \((k+1)\)-tuples of vertices, which we again view as total functions \(\overline{\alpha}\colon\{x_1,\dots,x_{k+1}\}\to V(G)\).
Initially, all tuples are colored by their atomic type:
\[\owl_{k+1}^{(0)}(G,\overline{\alpha})\coloneqq\atp_{k+1}(G,\overline{\alpha}).\]
Then, this coloring is iteratively refined by setting
\begin{align*}
    \owl_{k+1}^{(r+1)}(G,\overline{\alpha})\coloneqq\Bigl(\owl_{k+1}^{(r)}(G,\overline{\alpha}); \left\{\!\!\left\{\owl_{k+1}^{(r)}(G,\overline{\alpha}[x_i/u]): u\in V(G)\right\}\!\!\right\}_{i\in[k+1]}\Bigr).
\end{align*}
The stable coloring \(\owl_{k+1}^{(\infty)}(G)\) and the notion of distinguishing graphs is defined as for \(k\)-WL.

It turns out that \(k\)-WL and \((k+1)\)-OWL have the same distinguishing power:
\begin{lemma}[{\cite[Lemma A.1, Corollary V.7]{grohe_gnn}}] \label{lem:wl:vs:owl}
    Let \(G\) and \(H\) be graphs, \(\overline{\alpha}\in V(G)^{k+1}\) and \(\overline{\beta}\in V(H)^{k+1}\). Then the following are equivalent for every \(r\in\mathbb{N}\):
    \begin{enumerate}
        \item \(\owl_{k+1}^{(r)}(G,\overline{\alpha})=\owl_{k+1}^{(r)}(H,\overline{\beta})\),
        \item \(\atp_{k+1}(G,\overline{\alpha})=\atp_{k+1}(H,\overline{\beta})\) and for all \(i\in[k+1]\), we have
              \(\wl_k^{(r)}(G,\overline{\alpha}_{\neq i})=\wl_k^{(r)}(H,\overline{\beta}_{\neq i})\),
              where \(\overline{\alpha}_{\neq i}\) is the \(k\)-tuple obtained from \(\alpha\) by deleting
              the \(i\)-th entry.
    \end{enumerate}
    Moreover, two graphs are distinguished by \(k\)-WL if and only if they are distinguished by \((k+1)\)-OWL.
\end{lemma}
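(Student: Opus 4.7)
The plan is to prove (1)$\Leftrightarrow$(2) by induction on $r$ and then derive the distinguishing-power statement as a corollary. The base case $r=0$ is immediate: $\owl_{k+1}^{(0)}(G,\overline{\alpha}) = \atp_{k+1}(G,\overline{\alpha})$ and each $\wl_k^{(0)}(G,\overline{\alpha}_{\neq i}) = \atp_k(G,\overline{\alpha}_{\neq i})$; moreover, the atomic types of all $k$-subtuples are determined by the atomic type of the full tuple, so both (1) and (2) at $r=0$ reduce to $\atp_{k+1}(G,\overline{\alpha}) = \atp_{k+1}(H,\overline{\beta})$.

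For the inductive step, I unfold the OWL recursion: $\owl_{k+1}^{(r+1)}(G,\overline{\alpha}) = \owl_{k+1}^{(r+1)}(H,\overline{\beta})$ iff the level-$r$ OWL colors agree and, for every $i \in [k+1]$, the multisets $M_i^G \coloneqq \{\!\!\{\owl_{k+1}^{(r)}(G,\overline{\alpha}[x_i/u]) : u \in V(G)\}\!\!\}$ and $M_i^H$ coincide. By the induction hypothesis each entry of $M_i^G$ is determined by the atomic type of $\overline{\alpha}[x_i/u]$ together with the $\wl_k^{(r)}$-colors of its $k$-subtuples. Among these subtuples, the one obtained by deleting coordinate $i$ equals the fixed tuple $\overline{\alpha}_{\neq i}$, independent of $u$; for $j \neq i$, the subtuple $(\overline{\alpha}[x_i/u])_{\neq j}$ is obtained from $\overline{\alpha}_{\neq j}$ by replacing its $i$-th entry with $u$, which is precisely the tuple that appears in the $\wl_k$-refinement step of $\overline{\alpha}_{\neq j}$ at the corresponding coordinate. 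Matching $M_i^G = M_i^H$ across all $i$ therefore assembles the equality $\wl_k^{(r+1)}(G,\overline{\alpha}_{\neq j}) = \wl_k^{(r+1)}(H,\overline{\beta}_{\neq j})$ for each $j$; conversely, given these $\wl_k^{(r+1)}$-equalities and the atomic-type condition, one reassembles each OWL multiset.

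The main obstacle is the bookkeeping in the case $k=1$, since $\wl_1$ refines via $N_G(\cdot)$ rather than $V(G)$. Here the adjacency between the two coordinates of $\overline{\alpha}[x_i/u]$ is recorded by the atomic type, so the OWL multiset $M_i^G$ splits into an $N_G$-indexed part, a complement-indexed part, and the diagonal; matching the $N_G$-part is exactly the $\wl_1$-refinement, while the complementary counts are controlled by the harmless assumption $|V(G)| = |V(H)|$ (otherwise both algorithms trivially distinguish) together with the atomic-type condition. A smaller point is to keep track of coordinate re-indexing when reading $\wl_k^{(r+1)}(G,\overline{\alpha}_{\neq j})$ off $M_i^G$ for $i \neq j$; the atomic-type part of each OWL color identifies which coordinate is which.

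For the ``moreover'' claim, two graphs are distinguished by an algorithm iff the global multisets of its stable colors differ. A discrepancy in $(k+1)$-OWL color counts transfers to a discrepancy in $k$-WL color counts via (1)$\Leftrightarrow$(2) together with the fact that fixing a $k$-subtuple and varying one dropped coordinate produces exactly $|V(G)|$ extensions to a $(k+1)$-tuple, so the multiset of stable $k$-WL colors on $k$-tuples is recoverable from the multiset of stable $(k+1)$-OWL colors on $(k+1)$-tuples; the converse direction is symmetric.
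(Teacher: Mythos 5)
The paper does not actually prove this lemma---it is imported verbatim from \cite{grohe_gnn}---so your argument has to stand on its own, and it does not: the core of the inductive step has a genuine gap. Equality of the coordinate-\(i\) OWL multisets \(M_i^G=M_i^H\) gives you, for each \(i\) \emph{separately}, a bijection \(\pi_i\colon V(G)\to V(H)\) such that \(\owl_{k+1}^{(r)}(G,\overline{\alpha}[x_i/u])=\owl_{k+1}^{(r)}(H,\overline{\beta}[x_i/\pi_i(u)])\), i.e.\ (by the induction hypothesis) a matching \(u\mapsto\pi_i(u)\) preserving the colors \(\wl_k^{(r)}(G,(\overline{\alpha}[x_i/u])_{\neq j})\) \emph{jointly over all deleted coordinates} \(j\neq i\). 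But \(\wl_k^{(r+1)}(G,\overline{\alpha}_{\neq j})=\wl_k^{(r+1)}(H,\overline{\beta}_{\neq j})\) requires the transposed coupling: a \emph{single} bijection \(\sigma\) such that for every \(u\) the colors \(\wl_k^{(r)}(G,(\overline{\alpha}[x_i/u])_{\neq j})\) are preserved jointly over all \emph{replaced} coordinates \(i\neq j\), with the same \(u\) throughout---recall that classical \(k\)-WL records, for each \(u\), the whole \(k\)-tuple of colors indexed by the replaced position and only then takes a multiset over \(u\). Your bijections \(\pi_i\), \(i\neq j\), are in general different for different \(i\) and cannot be merged into such a \(\sigma\); for arbitrary color arrays, equality of all ``row-coupled'' multisets simply does not imply equality of the ``column-coupled'' ones (compare the matrices \(\bigl(\begin{smallmatrix}0&1\\0&1\end{smallmatrix}\bigr)\) and \(\bigl(\begin{smallmatrix}0&1\\1&0\end{smallmatrix}\bigr)\): equal row multisets, different column multisets). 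This decoupling is precisely why \(k\)-OWL is strictly weaker than \(k\)-WL at equal dimension, so it cannot be dismissed as re-indexing bookkeeping; the same transposition problem breaks your converse direction.

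The missing idea is to pair the classical refinement of \(\overline{\alpha}_{\neq j}\) with the OWL multiset of the \emph{same} index, \(M_j\), in which the deleted coordinate itself is the one replaced by \(u\). Note that \((\overline{\alpha}[x_i/u])_{\neq j}\) is a coordinate permutation of \((\overline{\alpha}[x_j/u])_{\neq i}\): both consist of the entries \(\alpha_m\), \(m\notin\{i,j\}\), together with \(u\). Hence, after proving the auxiliary fact that round-\(r\) classical WL colors are covariant under permuting tuple coordinates (an easy separate induction that you never state), the single color \(\owl_{k+1}^{(r)}(G,\overline{\alpha}[x_j/u])\) determines, via the induction hypothesis, the entire coupled tuple \(\bigl(\wl_k^{(r)}(G,(\overline{\alpha}[x_i/u])_{\neq j})\bigr)_{i\neq j}\) for that fixed \(u\), and the one bijection witnessing \(M_j^G=M_j^H\) is exactly the \(\sigma\) you need. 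The converse uses the same pairing and covariance, plus a recovery of \(\atp_{k+1}(G,\overline{\alpha}[x_i/u])\) from the classical colors, which works when \(k+1\geq 3\) and needs the neighbor/non-neighbor splitting you describe (together with \(|V(G)|=|V(H)|\)) precisely when \(k+1=2\). Two smaller remarks: your instinct that \(k=1\) is the main obstacle is backwards---with only two coordinates the two couplings coincide, so that is where your assembly accidentally works, while the real difficulty is \(k\geq 2\)---and your observation that one must assume \(|V(G)|=|V(H)|\) for the pointwise equivalence is correct (otherwise condition (2) can hold while the OWL multisets differ in cardinality), a caveat that the statement as quoted omits.
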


\paragraph*{The CFI construction.}
The CFI construction, introduced by Cai, Fürer and Immerman in their seminal paper \cite{cfi_opt},
is a method of constructing pairs of graphs that are hard for the Weisfeiler-Leman algorithm to distinguish.
The construction starts from a so-called \emph{base graph}, that is,
a connected and colored graph such that every vertex receives a unique natural number as color.
By our convention, the coloring induces a linear ordering on the vertices of the base graph.
The vertices and edges of the base graph are called \emph{base vertices} and \emph{base edges}, respectively.

From a base graph \(G\) the \emph{CFI graph} \(X(G)\) is constructed as follows:
Every base vertex \(v \in V(G)\) is replaced by a CFI gadget \(F(v)\) of degree \(d \coloneqq d_G(v)\), which consists of vertices \((v, \overline{a})\) for all \(\overline{a} \in \{0,1\}^d\) for which \(|\{i \in [d] : a_i = 1\}|\) is even. The gadget has no edges\footnote{There are generally two styles of defining the construction, we use the one from~\cite{fuerer_grid}, which has no edges in the gadgets.}.
Let \(e = \{u,v\}\) be a base edge and \(u,v\) have degree \(d\) and \(d'\) in \(G\) respectively.
When \(e\) is the \(i\)-th edge incident to \(u\) and the \(j\)-th edge incident to \(v\) according to the ordering of \(V(G)\), we add the edges \(\{ \{(u, \overline{a}), (v, \overline{b})\} : a_i = b_j\}\) to \(E(X(G))\).
The vertices of each gadget \(F(v)\) are colored with the color of the base vertex \(v\) in \(G\).

To \emph{twist} a base edge \(\{u,v\} \in E\) in \(X(G)\) means to replace every edge between the gadgets \(F(u)\) and \(F(v)\) by a non-edge and every non-edge by an edge.
For a set of base edges \(S \subseteq E\), the \emph{twisted CFI graph} \(\widetilde{X}_S(G)\) is obtained by twisting every edge \(e \in S\) in \(X(G)\).
We also write \(\widetilde{X}_e(G)\) for \(\widetilde{X}_{\{e\}}(G)\). Let us stress that the base graph~$G$ is connected.
\begin{lemma}[{\cite[Lemma 6.2]{cfi_opt}}] \label{lem:twist:iso}
    For all base edges \(e, e' \in E\) there exists an isomorphism \(\varphi_{e, e'} \colon \widetilde{X}_e(G) \to \widetilde{X}_{e'}(G)\).
    Moreover, for all sets of base edges $S, T \subseteq E$, the graphs $\widetilde{X}_T(G)$ and $\widetilde{X}_S(G)$ are isomorphic if and only if $|S| \equiv |T| \mod 2$.
\end{lemma}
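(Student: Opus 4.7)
The plan is to build the required isomorphisms from a family of local ``twist-shifting'' automorphisms and then establish, via a parity invariant, that no further isomorphisms exist. For each base vertex $v$ of degree $d$ and each subset $I\subseteq[d]$ of even size, I define a map $\sigma_v^I$ that acts on the gadget $F(v)$ by $(v,\overline{a})\mapsto(v,\overline{a}\oplus\chi_I)$ (with $\chi_I\in\{0,1\}^d$ the indicator vector of $I$) and fixes every vertex outside $F(v)$. Because $|I|$ is even the image remains inside $F(v)$, and a direct check of the inter-gadget adjacencies shows that $\sigma_v^I$ is an isomorphism $\widetilde{X}_S(G)\to\widetilde{X}_{S'}(G)$, where $S'$ is obtained from $S$ by toggling the membership of each base edge incident to $v$ whose position in the ordering at $v$ lies in $I$.

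For the first assertion and the ``if'' direction of the second, I would start from sets $S,T$ with $|S\triangle T|$ even, pair up the elements of $S\triangle T$ arbitrarily, and for each pair $(e,e')$ use connectedness of $G$ to pick a base-edge path from $e$ to $e'$. Composing the local two-position twists at the interior vertices of this path produces a map whose net effect on the twist set is to symmetrically difference in $\{e,e'\}$ (intermediate edges are toggled twice and hence cancel). Performing this for every pair modifies the twist set from $S$ to $T$, yielding an isomorphism $\widetilde{X}_S(G)\to\widetilde{X}_T(G)$. Specializing to $S=\{e\}$, $T=\{e'\}$ gives the first part.

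The hard part will be the ``only if'' direction, where parity needs to be an isomorphism invariant. Given an arbitrary isomorphism $\varphi\colon\widetilde{X}_S(G)\to\widetilde{X}_T(G)$, the unique color on each base vertex forces $\varphi$ to map each gadget $F(v)$ to itself setwise. The main obstacle, and technical heart of the argument, is a classification of the possible restrictions $\varphi|_{F(v)}$: analyzing the adjacency pattern between $F(v)$ and each neighboring gadget, one must show that $\varphi|_{F(v)}$ is itself a coordinate flip $\overline{a}\mapsto\overline{a}\oplus\chi_{J_v}$ for some $J_v\subseteq[d_G(v)]$ of even size. Once this is established, $\varphi$ agrees vertex-wise with the composition $\prod_v\sigma_v^{J_v}$, so the symmetric difference $S\triangle T$ equals the set of base edges toggled an odd number of times by this composition. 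The total number of toggles equals $\sum_v|J_v|$, which is even, so $|S\triangle T|$ is even and hence $|S|\equiv|T|\pmod 2$.
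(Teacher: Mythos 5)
The paper itself gives no proof of this lemma---it is imported verbatim from Cai--F\"urer--Immerman---so your proposal can only be judged against the standard argument, which it in fact follows. Your local maps $\sigma_v^I$ are well defined (evenness of $|I|$ keeps the image inside $F(v)$) and genuinely are isomorphisms $\widetilde{X}_S(G)\to\widetilde{X}_{S'}(G)$ toggling exactly the incident edges whose position at $v$ lies in $I$; the path-composition argument for the ``if'' direction is sound, since interior edges of the path are toggled twice and connectedness of $G$ supplies the paths; and the final parity bookkeeping ($|S\triangle T|\equiv\sum_v|J_v|\equiv 0 \pmod 2$) is correct, granted the rigidity step it rests on.

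That step, however, is a genuine gap: you write that ``one must show'' every gadget-preserving isomorphism $\varphi$ restricts on each $F(v)$ to a coordinate flip $\overline{a}\mapsto\overline{a}\oplus\chi_{J_v}$ with $|J_v|$ even, but you never prove it, and the whole ``only if'' direction stands or falls with this claim---without it there is no parity invariant at all. The claim is true and has a short proof, which you should supply. Fix a base vertex $v$, an incident base edge $e$ at position $i$ at $v$, with other endpoint $u$ and position $j$ at $u$. In both $\widetilde{X}_S(G)$ and $\widetilde{X}_T(G)$, the bipartite graph between $F(v)$ and $F(u)$ is a disjoint union of two complete bipartite graphs whose sides inside $F(v)$ are $A^i_0=\{(v,\overline{a}):a_i=0\}$ and $A^i_1=\{(v,\overline{a}):a_i=1\}$; which side of $F(v)$ is matched to which side of $F(u)$ depends on whether $e$ is twisted, but the partition $\{A^i_0,A^i_1\}$ itself does not. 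Hence two vertices of $F(v)$ have the same neighborhood in $F(u)$ if and only if they agree in the $i$-th coordinate. Since $\varphi$ maps $F(v)$ onto $F(v)$ and $F(u)$ onto $F(u)$ bijectively, it preserves the relation ``same neighborhood in $F(u)$'', so it maps the partition $\{A^i_0,A^i_1\}$ to itself: there is $\epsilon_i\in\{0,1\}$ with $\varphi(A^i_b)=A^i_{b\oplus\epsilon_i}$ for $b\in\{0,1\}$. Running over all $i\in[d_G(v)]$ gives $\varphi(v,\overline{a})=(v,\overline{a}\oplus\overline{\epsilon})$ for every $\overline{a}$, i.e.\ a coordinate flip with $J_v=\{i:\epsilon_i=1\}$, and $|J_v|$ is even because $\overline{a}\oplus\overline{\epsilon}$ must again have even weight to lie in $F(v)$. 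With this paragraph inserted, your proof is complete.
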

Up to isomorphism, it is justified to speak of the CFI graph \(X(G) = \widetilde{X}_\emptyset(G)\) and the twisted CFI graph \(\widetilde{X}(G) \coloneqq \widetilde{X}_e(G)\) for any \(e \in E\).

We will analyze combinatorial games played on CFI graphs, for which we use the following notion of connectivity:
Two base edges \(e, e' \in E(G)\) are called \emph{connected} with respect to a partial function \(\gamma \colon [x_{k_1}, y_{k_2}] \rightharpoonup V(G)\)
if either \(e=e'\) or there exists a path \(v_0,v_1,\dotsc,v_{\ell} \in V(G)\) such that it holds \(e = \{v_0,v_1\}, e' = \{v_{\ell-1},v_{\ell}\}\) and \(\gamma(z) \notin \{v_1,\dotsc,v_{\ell-1}\}\) for all \(z \in [x_{k_1}, y_{k_2}]\).
A component of \(G, \alpha\) is a connected component of edges with respect to \(\alpha\).
For a partial function $\alpha \colon [x_{k_1}, y_{k_2}] \rightharpoonup V(X(G))$ we write $\alpha_1 \colon [x_{k_1}, y_{k_2}] \rightharpoonup V(G)$ for the function mapping $z$ to the base vertex of $\alpha(z)$.
The following is a consequence of \cite[Lemma 2.9, 2.13]{tuprints24244}:

\begin{lemma} \label{lem:twist:component}
    Let \(G\) be a base graph with base edges \(e,e' \in E(G)\) and let \(\alpha, \beta \colon [x_{k_1}, y_{k_2}] \rightharpoonup V(\widetilde{X}(G)) \) with \(\dom(\alpha) = \dom(\beta)\) and $\alpha_1 = \beta_1$.
    Then there exists an isomorphism \(\varphi_{e, e'} \colon \widetilde{X}_e(G) \to \widetilde{X}_{e'}(G)\) with $\varphi_{e, e'}(\alpha(z)) = \beta(z)$ for all $z \in [x_{k_1}, y_{k_2}]$
    if and only if the base edges $e$ and $e'$ are within the same connected component with respect to $\alpha_1 = \beta_1$.
    Furthermore, there exists a path $P$ from $e$ to $e'$ in $G$ such that $\varphi_{e,e'}$ is the identity on all gadgets $F(v)$ for $v \in V(G) \setminus V(P)$.
\end{lemma}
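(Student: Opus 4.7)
\begin{proof_sketch}
My plan is to leverage the standard algebraic description of isomorphisms between twisted CFI graphs: any such isomorphism acting gadget-wise is determined by a family of \emph{flip vectors} $\bar{c}_v \in \{0,1\}^{d_G(v)}$ of even parity, one per base vertex, sending $(v,\bar{a}) \mapsto (v,\bar{a} \oplus \bar{c}_v)$. The set of base edges whose twist status is toggled by this map equals the coboundary $T' = \{\{u,v\} \in E(G) : (\bar{c}_u)_i \neq (\bar{c}_v)_j\}$, where $i,j$ denote the endpoint indices of the edge $\{u,v\}$. The pebble requirement $\varphi_{e,e'}(\alpha(z)) = \beta(z)$ forces $\bar{c}_v = \alpha(z) \oplus \beta(z)$ at each pebbled vertex $v = \alpha_1(z) = \beta_1(z)$; this is automatically of even parity since both $\alpha(z), \beta(z)$ lie in $F(v)$. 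Existence of the desired $\varphi_{e,e'}$ thus reduces to finding an even-parity flip-vector assignment on the unpebbled vertices whose induced coboundary equals $\{e, e'\}$.

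For the sufficiency direction, I take a path $P = v_0, v_1, \ldots, v_\ell$ from $e$ to $e'$ witnessing the component condition, so that $v_1, \ldots, v_{\ell-1} \notin \im(\alpha_1)$. At each interior $v_i$ I set $\bar{c}_{v_i}$ to have $1$s on exactly the two coordinates of the adjacent path edges; this has even parity and is available since $v_i$ is unpebbled. At each endpoint $v_j \in \{v_0, v_\ell\}$, I take the forced value (or $0$ if unpebbled) and additionally toggle the path-edge coordinate together with one companion coordinate to restore even parity. At every other base vertex, $\bar{c}_v$ is the forced value or $0$. A direct computation then shows that the induced coboundary is $\{e, e'\}$, producing the required isomorphism $\widetilde{X}_e(G) \to \widetilde{X}_{e'}(G)$ that maps $\alpha$ to $\beta$ and is the identity on $F(v)$ for $v \notin V(P)$.

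For the necessity direction, I would argue contrapositively. Assume $e$ and $e'$ lie in distinct components with respect to $\alpha_1$ and that an isomorphism $\varphi_{e,e'}$ with $\varphi_{e,e'}(\alpha(z)) = \beta(z)$ exists. Its flip vectors are forced at pebbled vertices and of even parity elsewhere, yet their coboundary must equal $\{e, e'\}$. Viewing the edge-endpoint labels as a $\mathbb{Z}/2$-chain, the even-sum condition at each unpebbled vertex forces any realization of coboundary $\{e,e'\}$ to decompose into closed walks together with a walk from $e$ to $e'$ whose interior lies entirely among unpebbled vertices, contradicting the component hypothesis.

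The main obstacle is the bookkeeping of companion coordinates at the path endpoints in the sufficiency direction, in particular when these endpoints are themselves pebbled: the companion coordinate must differ from the path-edge coordinate and must not conflict with the constraints imposed by $\alpha(z) \oplus \beta(z)$. Since there are $d_G(v_j) - 1$ candidate companions, such a choice is always possible once $d_G(v_j) \geq 2$, and the degenerate remaining case is handled separately. Much of the underlying algebraic framework---the correspondence between flip vectors and isomorphisms and the compatibility of coboundaries---is already provided by \cite[Lemma 2.9]{tuprints24244}, so our task essentially reduces to translating the combinatorial path condition into a valid flip-vector assignment.
\end{proof_sketch}
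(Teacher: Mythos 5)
The paper never proves this lemma itself --- it is imported wholesale as a consequence of \cite[Lemma 2.9, 2.13]{tuprints24244} --- so your attempt can only be measured against the standard flip-vector formalism, which you correctly identify as the right framework: color- and gadget-preservation force any isomorphism to act as \((v,\bar a)\mapsto(v,\bar a\oplus \bar c_v)\) with each \(\bar c_v\) of even parity, such a map is an isomorphism \(\widetilde{X}_S(G)\to\widetilde{X}_T(G)\) exactly when the set of base edges \(\{u,v\}\) with \((\bar c_u)_i\neq(\bar c_v)_j\) equals \(S\mathbin{\triangle}T\), and the pebble condition pins \(\bar c_v\) at pebbled base vertices. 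Your necessity direction (decomposing the support of the flip assignment into closed walks plus one walk from \(e\) to \(e'\) whose interior is unpebbled) is also the right idea in the case that actually matters.

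The genuine gap is in your sufficiency construction, at the path endpoints. To realize coboundary \(\{e\}\mathbin{\triangle}\{e'\}=\{e,e'\}\) one must flip the two path-edge coordinates at each \emph{interior} vertex \(v_1,\dots,v_{\ell-1}\) and flip \emph{nothing} at \(v_0\) and \(v_\ell\): then \(e\) has exactly one flipped endpoint coordinate (at \(v_1\)), \(e'\) exactly one (at \(v_{\ell-1}\)), the middle path edges have two (which cancel), and all other edges none. Your additional toggle of the path-edge coordinate plus a companion coordinate at \(v_0,v_\ell\) does the opposite: it cancels the flip on \(e\) and on \(e'\) and instead places the two companion edges into the coboundary, so the map you build is an isomorphism from \(\widetilde{X}_e(G)\) onto the wrong twisted graph. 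There is also no parity to ``restore'' (the forced value is already even), and at a \emph{pebbled} endpoint any deviation from the forced value immediately violates \(\varphi_{e,e'}(\alpha(z))=\beta(z)\), so the toggles are not even permitted there. A second unaddressed point: setting \(\bar c_v\) to ``the forced value'' at pebbled vertices off the path and claiming the coboundary is \(\{e,e'\}\) by direct computation fails whenever some forced flip is nonzero, since each of its flipped coordinates puts the corresponding incident edge into the coboundary with nothing to compensate. This is not merely a bookkeeping issue: under the hypothesis \(\alpha_1=\beta_1\) alone the stated equivalence is false in both directions (for instance, a pebbled degree-\(2\) vertex with both of its coordinates flipped carries the twist across itself, so an isomorphism exists although \(e\) and \(e'\) are separated). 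The lemma is applied in the paper only in the regime where the forced flips vanish, i.e.\ \(\varphi_{e,e'}\) fixes all pebbled gadgets pointwise; your write-up neither restricts to that case (where the corrected endpoint-free construction works verbatim) nor supplies the propagation argument the general case would need.
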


\section{Finite Variable Counting Logics with Restricted Requantification}\label{sec:finite:var:log:res}
When working in the logic $\mathsf{C}^k$, it is often necessary to \emph{requantify} variables in order to express certain properties.
We introduce finite variable first-order logic with counting quantifiers and \emph{restricted requantification} to study this issue.
We then define an Ehrenfeucht-Fraïssé-style game as an important tool for the analysis of the newly introduced logic by game-theoretic arguments.
Finally, we devise a variant of $k$-OWL that precisely captures the expressive power of the logic and game and prove a characterization that closely ties the reusable and non-reusable resources among these objects.
First, we give a precise definition of \emph{requantification}.

\begin{definition}
    Consider the counting logic $\mathsf{C}$ over a set of variables $\mathcal{V}$.
    A variable $x \in \mathcal{V}$ is said to be \emph{requantified} in a formula $\varphi \in \mathsf{C}$
    if either $x \in \free(\varphi) \cap \bound(\varphi)$ or if there exist a subformula $Q x \psi$ of $\varphi$
    and in turn a subformula $Q' x \chi$ of $\psi$ with $Q,Q' \in \{\forall, \exists\} \cup \{\exists^{\geq n} \colon n \in \mathbb{N}\}$.
    We define the logic $\mathsf{C}^{(k_1,k_2)}$ as the fragment of $\mathsf{C}$ over the fixed variable set $\mathcal{V} = [x_{k_1}, y_{k_2}]$
    consisting of those formulas in which the variables from $\{y_1,\dotsc,y_{k_2}\}$ are not requantified.
    The fragment of $\mathsf{C}^{(k_1,k_2)}$ with quantifier-rank at most $r \in \mathbb{N}$ is denoted by $\mathsf{C}^{(k_1,k_2)}_r$.
\end{definition}

\begin{example}
    Consider the following $\mathsf{C}^{(2,1)}_3$ formula:
    \begin{align*}
        \bigl(\exists y_1 \neg E(x_2,y_1)\bigr)\land
        \exists^{\geq 4} x_1 \Bigl(
        E(x_2,x_1) \land
        \exists y_1 \bigl(\neg E(x_1,y_1)\bigr) \land
        \forall x_2 \bigl(\neg E(x_2,x_1) \rightarrow \exists^{\geq 3} x_1 E(x_1,x_2)\bigr)
        \Bigr)
    \end{align*}
    expressing that the vertex $x_2$ is not universal and has at least four non-universal neighbors such that every non-neighbor of those has degree at least three.
    The variable $x_2$ is requantified in this formula since it occurs free and bound.
    The variable $x_1$ is requantified because the subformula $\exists^{\geq 3} x_1 E(x_1,x_2)$ occurs within the scope of the outermost quantification $\exists^{\geq 4} x_1$.
    The variable $y_1$ however is not requantified since neither of its quantifications occurs in the scope of the other.
\end{example}

The central question we will investigate in the following is how the non-requantifiability restriction affects the expressive power of the logic $\mathsf{C}^{(k_1,k_2)}$.
To this end, we use the notation \(\mathsf{C}^{(k_1,k_2)} \preceq \mathsf{C}^{(k'_1, k'_2)}\) if every pair of graphs distinguished by \(\mathsf{C}^{(k_1,k_2)}\) is also distinguished by \(\mathsf{C}^{(k'_1,k'_2)}\).
In the case of unrestricted requantification (i.e. $k_2=0$) it is clear that $\mathsf{C}^{(k_1,0)} \preceq \mathsf{C}^{(k_1+1,0)}$. In this terminology, the central result of \cite{cfi_opt} is that this relation is strict for all \(k_1\in\mathbb{N}\).
For the case of restricted requantification we make the simple observation that having more variables is at least as expressive as having fewer variables
(independent of their ability to be requantified). We also observe that requantifiable variables are at least as expressive as non-requantifiable variables.
That is, for all $k_1,k_2 \in \mathbb{N}$ with $k_1+k_2\geq1$ it holds that $$\mathsf{C}^{(k_1,k_2)} \preceq \mathsf{C}^{(k_1,k_2+1)} \preceq \mathsf{C}^{(k_1+1,k_2)}.$$

Also, observe that having only non-requantifiable variables (i.e. $k_1=0$) bounds the quantifier-rank to at most $k_2$ and in turn every sentence of quantifier-rank at most $k_2$ can be rewritten using at most $k_2$ non-requantifiable variables.
More precisely, we have $\mathsf{C}^{(0, k_2)} \equiv \mathsf{C}_{k_2}$. 

Next, we establish an Ehrenfeucht-Fraïssé-style game which closely corresponds to the power of the previously defined
logics with respect to distinguishing graphs.
The game is a variant of the bijective pebble game introduced in \cite{hella_logical_hierarchies} with the additional restriction that some pebbles may not be picked up again once placed.

\begin{definition}
    Suppose $k_1,k_2 \in \mathbb{N}$ and $k_1+k_2 \geq 1$.
    For colored graphs \(G\) and \(H\), we define the \emph{bijective $(k_1,k_2)$-pebble game} $\BP_{(k_1,k_2)}(G,H)$ as follows:

    The game is played by the players Spoiler, denoted by (S), and Duplicator, denoted by (D), with one pair of pebbles for each variable in $[x_{k_1},y_{k_2}]$.
    The pebble pairs in $[x_{k_1}]$ are called \emph{reusable} and the pebble pairs in $[y_{k_2}]$ are called \emph{non-reusable}.

    The game proceeds in rounds, each of which is associated with a pair of partial functions
    $$\alpha \colon [x_{k_1},y_{k_2}] \rightharpoonup V(G), \quad \beta \colon [x_{k_1},y_{k_2}] \rightharpoonup V(H)$$
    with $\dom(\alpha) = \dom(\beta)$. We call such a pair of partial functions a \emph{$(k_1,k_2)$-configuration} on the pair $G,H$.
    These functions indicate the placement of the pebble pairs on the graphs. For a pebble pair $z \in [x_{k_1},y_{k_2}]$ and vertices
    $v \in V(G), w \in V(H)$ we have $\alpha(z) = v, \beta(z)= w$ whenever the two pebbles of the pair $z$ are placed on \(v\) and \(w\), respectively.
    If not specified otherwise, both games start from the empty configuration given by $\dom(\alpha) = \dom(\beta) = \emptyset$.
    One round of the game with current configuration $(\alpha, \beta)$ consists of the following steps:
    \begin{enumerate}
        \item (S) picks up a pebble pair $z \in [x_{k_1},y_{k_2}]$ such that $z \in [x_{k_1}]$ or $\alpha(z)$ is undefined.
              If no such $z$ exists, the winning condition is checked directly.
        \item (D) chooses a bijection $f \colon V(G) \to V(H)$.
        \item (S) chooses $w \in V(G)$ and $f(w) \in V(H)$ to be pebbled with the pair $z$.
        \item The new configuration is given by \((\alpha[z / w], \beta[z / f(w)])\).
    \end{enumerate}
    The winning conditions are as follows:
    \begin{itemize}
        \item (S) wins immediately, if the initial configuration $(\alpha,\beta)$ does not induce a partial isomorphism.
              That is, the function $h \colon \im(\alpha) \to \im(\beta), \alpha(z) \mapsto \beta(z)$ is not a graph isomorphism from $G[\im(\alpha)]$ to $H[\im(\beta)]$.
        \item (S) wins if (D) cannot choose a bijection \(f\), i.e., if \(|G|\neq|H|\).
        \item (S) wins after the current round if the configuration $(\alpha,\beta)$ does not induce a partial isomorphism.
              Otherwise, the game continues and (D) wins the game if (S) never wins a round.
    \end{itemize}
    For $r \in \mathbb{N}_+$ we define the game variant $\BP_{(k_1,k_2)}^r$, which has the additional
    winning condition that (D) wins the game if (S) does not win after $r$ rounds.
    Furthermore, in the \(0\)-round game \(\BP_{(k_1,k_2)}^0\), player (D) wins if the initial configuration induces a partial isomorphism.
\end{definition}

We start our investigation with a simple observation on the distinguishing power of these games with few pebble pairs,
which nonetheless highlights the difference in expressive power of reusable and non-reusable pebbles.
The degree sequence of a graph $G$ is the multiset $\{\!\!\{d_G(v): v \in V(G)\}\!\!\}$ and the neighborhood-degree sequence is the multiset $\{\!\!\{ \{\!\!\{ d_G(w) : w \in N_G(v) \}\!\!\} : v \in V(G)\}\!\!\}$.
\begin{lemma} \label{lem:deg_seq}
    Let $G$ and $H$ be graphs, then (D) has a winning strategy for
    \begin{itemize}
        \item $\BP_{(0,2)}(G,H)$ if and only if the degree sequences of $G$ and $H$ coincide.
        \item $\BP_{(1,1)}(G,H)$ if and only if the neighborhood-degree sequences of $G$ and $H$ coincide.
    \end{itemize}
\end{lemma}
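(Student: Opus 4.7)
The plan is to prove each equivalence by describing winning strategies for both players. For the ``if'' direction in each case I exhibit a Duplicator strategy assuming the relevant multiset matches, and for the ``only if'' direction I argue by contrapositive, showing how Spoiler exploits any mismatch via a pigeonhole argument on Duplicator's bijections.

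For $\BP_{(0,2)}(G,H)$, the game terminates after at most two rounds since both pebble pairs are non-reusable. If the degree sequences coincide, Duplicator plays a degree-preserving bijection $f_1$ in round one and, after Spoiler pebbles some $w_1$, plays in round two a bijection $f_2$ satisfying $f_2(w_1) = f_1(w_1)$ that maps $N_G(w_1)$ bijectively onto $N_H(f_1(w_1))$; such an $f_2$ exists because both neighborhoods have equal size and so do their complements, and it ensures that the final configuration is a partial isomorphism no matter which $w_2$ Spoiler chooses. Conversely, if the degree sequences differ, every bijection $f_1$ must pair some $w_1$ with a vertex of different degree; after Spoiler picks this $w_1$, the sets $f_2(N_G(w_1))$ and $N_H(f_1(w_1))$ have different sizes for any bijection $f_2$, so Spoiler finds a $w_2$ whose adjacency with $w_1$ in $G$ disagrees with that of $f_2(w_2)$ with $f_1(w_1)$ in $H$.

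For $\BP_{(1,1)}(G,H)$, write $\tau(v) \coloneqq \{\!\!\{d_G(u) : u \in N_G(v)\}\!\!\}$ for the neighborhood-degree multiset of $v$; note that if the $\tau$-sequences of $G$ and $H$ coincide then so do the degree sequences. Assuming matching $\tau$-sequences, Duplicator's strategy is as follows: as long as $y_1$ has not yet been placed, she plays a $\tau$-preserving bijection (which is in particular degree-preserving); in the round in which $y_1$ is committed to a pair $(w_2, u_2)$ and in every subsequent round, she plays a bijection sending $w_2$ to $u_2$ and $N_G(w_2)$ bijectively onto $N_H(u_2)$. The crucial verification is that, when $x_1$ already rests on a pair $(w_1, f_1(w_1))$ with $\tau(w_1) = \tau(f_1(w_1))$ at the moment Spoiler places $y_1$, Duplicator can choose a single bijection $f$ that simultaneously fixes $f(w_1) = f_1(w_1)$, maps $N_G(w_1)$ bijectively onto $N_H(f_1(w_1))$, and is degree-preserving on all of $V(G)$; this is possible because $\tau(w_1) = \tau(f_1(w_1))$ allows a degree-preserving matching of the two neighborhoods, and the respective complements have matching degree multisets by a counting argument from the $\tau$-equality. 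Conversely, if the $\tau$-sequences differ, Spoiler plays $x_1$ in round one and picks some $w_1$ with $\tau(w_1) \neq \tau(f_1(w_1))$, which must exist for any $f_1$; choosing a degree $d$ with unequal multiplicities $k > k'$ in $\tau(w_1)$ and $\tau(f_1(w_1))$, Spoiler plays $y_1$ in round two and applies pigeonhole to the $k$-element set $f_2(\{v \in N_G(w_1) : d_G(v) = d\})$: either some element lies outside $N_H(f_1(w_1))$, which immediately breaks the partial isomorphism, or the whole set is contained in $N_H(f_1(w_1))$ but then must include an element of degree $\neq d$, pinning $y_1$ on a pair with mismatched degrees that a subsequent $x_1$-round exploits exactly as in the $\BP_{(0,2)}$ argument.

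The main obstacle is the second item: since $x_1$ is reusable, the game is not a priori bounded, so both players' strategies must remain robust under arbitrary sequences of moves. For Spoiler this forces a three-phase plan (probe with $x_1$, commit $y_1$, exploit with $x_1$), and the pigeonhole in the middle phase hinges on a delicate dichotomy between breaking the partial isomorphism immediately and merely setting up a degree mismatch to exploit in the next round. For Duplicator, simultaneously maintaining a $\tau$-preserving bijection before $y_1$ is placed and a degree-preserving bijection fixing the $x_1$-pair in the very round $y_1$ is placed calls for a careful multiset-counting argument on the neighborhood of $w_1$ together with its complement, and this is where the technical heart of the proof lies.
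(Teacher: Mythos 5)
Your overall structure matches the paper's: both ``if'' directions are handled by exhibiting Duplicator bijections with the right preservation properties, and the second item is reduced to the first via the three-phase pattern $x_1$, $y_1$, $x_1$. The first item and both Duplicator strategies are correct (and more detailed than the paper's compressed argument). However, the ``only if'' direction of the second item has a genuine gap: after Spoiler picks a witness $w_1$ with $\tau(w_1)\neq\tau(f_1(w_1))$, you assert the existence of a degree $d$ whose multiplicity $k$ in $\tau(w_1)$ strictly exceeds its multiplicity $k'$ in $\tau(f_1(w_1))$. Such a $d$ need not exist: if $d_G(w_1)<d_H(f_1(w_1))$, the multiset $\tau(w_1)$ can be strictly dominated by $\tau(f_1(w_1))$, and there are pairs of graphs for which \emph{every} available witness is of this dominated kind. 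Take $G$ to be two isolated vertices and $H$ a single edge: the $\tau$-sequences differ, yet for any bijection $f_1$ every witness satisfies $\tau(w_1)=\{\!\!\{\,\}\!\!\}$ versus $\tau(f_1(w_1))=\{\!\!\{1\}\!\!\}$, so no degree $d$ with $k>k'$ exists, your pigeonhole set $\{v\in N_G(w_1): d_G(v)=d\}$ is empty, and the described Spoiler strategy never gets started, even though Spoiler does win $\BP_{(1,1)}(G,H)$.

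The repair is short and is already implicit in your own first item: split on whether $f_1$ preserves degrees. If $f_1$ is not degree-preserving, Spoiler places $x_1$ on a pair with $d_G(w_1)\neq d_H(f_1(w_1))$ and then wins with $y_1$ exactly as in your $\BP_{(0,2)}$ mismatch argument, since $|f_2(N_G(w_1))|\neq|N_H(f_1(w_1))|$ forces an adjacency (or injectivity) violation in whichever direction the inequality goes. If $f_1$ is degree-preserving, then every witness satisfies $|\tau(w_1)|=d_G(w_1)=d_H(f_1(w_1))=|\tau(f_1(w_1))|$, and two unequal multisets of equal cardinality must disagree in both directions, so the degree $d$ with $k>k'$ exists and your pigeonhole goes through. (Alternatively one can symmetrize the pigeonhole by applying it to $f_2^{-1}$ of the $H$-side set when the excess lies on the $H$ side.) With this case distinction your argument is complete and agrees in substance with the paper's proof, which compresses the same analysis into the observation that Duplicator's second-round bijection must simultaneously be degree-preserving and map $N_G(w_1)$ onto $N_H(f_1(w_1))$, forcing the first-round bijection to preserve neighborhood-degree multisets.
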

\begin{proof}
    We first note that in both games, if one non-reusable pebble pair is already placed, (D) has a winning strategy
    if and only if the two graphs have the same order and the two pebbled vertices have the same degree.
	Because of this, (D) has a winning strategy in the game \(\BP_{(0,2)}(G,H)\) if and only
	if there exists a degree-preserving bijection they can choose in the first round.	

	Next, consider the game \(\BP_{(1,1)}(G,H)\). Because it is never worth it for (S) to pick up the same
	pebble pair twice in a row, we can assume without loss of generality that they first pick up the reusable
	pebble pair, then the non-reusable pair and finally the reusable pair again.
	Thus, in order to win, (D) must ensure that their bijection in the second round is degree-preserving.
	But because this bijection must also preserve the neighborhood of the pebble pair placed in the first round,
	the bijection chosen by (D) in the first round must preserve the multisets of degrees of neighbors.
	Such a bijection exists if and only if the neighborhood-degree sequences of \(G\) and \(H\) coincide.
\end{proof}

We now turn to an algorithmic counterpart of the logic $\mathsf{C}^{(k_1,k_2)}$ and the game $\BP_{(k_1,k_2)}$.
It is an adaptation of the oblivious Weisfeiler-Leman algorithm \(k\)-OWL.

Indeed, to capture $\mathsf{C}^{(k_1,k_2)}$-equivalence, we iteratively color (partial) $(k_1+k_2)$-tuples of vertices of a given graph with the
previous color, and a sequence of multisets corresponding to variables as in $k_1$-OWL.
We deviate from the classical oblivious Weisfeiler-Leman algorithm by treating some entries of the tuple as \emph{non-reusable}:
For $y \in [y_{k_2}]$ with $\alpha(y) \neq \bot$, the variable $y$ is already assigned in the logic, respectively the non-reusable pebble is already placed in the game.
Thus, the entry in $\alpha$ corresponding to this variable should not be replaced by other vertices, but be kept fixed.
For this reason we utilize the advantage of oblivious Weisfeiler-Leman that each multiset corresponds to exactly one variable and pebble pair respectively.

Recall that we can view a variable assignment \(\alpha\colon[x_{k_1},y_{k_2}] \rightharpoonup V(G)\)
for a graph \(G\) as a \([x_{k_1},y_{k_2}]\)-indexed \((k_1+k_2)\)-tuple over \(V(G)\mathbin{\dot\cup}\{\bot\}\),
which we denote by \(\overline{\alpha}\).
For a variable assignment \(\alpha\), we set $J(\alpha) \coloneqq \{j \in [k_2]\colon \overline{\alpha}(y_j) = \bot\}$.
\begin{definition}
    Let $G$ be a graph and $k_1, k_2 \in \mathbb{N}$ with $k_1+k_2 \geq 1$.
    The \emph{\((k_1,k_2)\)-dimensional oblivious Weisfeiler-Leman algorithm} (short \((k_1,k_2)\)-OWL) iteratively computes a coloring
    of \([x_{k_1},y_{k_2}]\)-indexed \((k_1+k_2)\)-tuples over \(V(G)\mathbin{\dot\cup}\{\bot\}\).

    Initially, each tuple \(\overline{\alpha}\) is colored by its atomic type in \(G\):
    \[\owl^{(0)}_{(k_1,k_2)}(G, \overline{\alpha}) \coloneqq \operatorname{atp}_{k_1+k_2}(G, \overline{\alpha}).\]
    This coloring is then refined recursively: for every \(r\in\mathbb{N}\), we define
    \begin{align*}
        \owl^{(r+1)}_{(k_1,k_2)}(G,\overline{\alpha}) \coloneqq \Bigl(\owl^{(r)}_{(k_1,k_2)}(G, \overline{\alpha}),
         & \left\{\!\!\left\{ \owl^{(r)}_{(k_1,k_2)}(G, \overline{\alpha}[x_i/w])\colon w\in V(G) \right\}\!\!\right\}_{i\in [k_1]},            \\
         & \left\{\!\!\left\{ \owl^{(r)}_{(k_1,k_2)}(G, \overline{\alpha}[y_j/w]): w\in V(G) \right\}\!\!\right\}_{j\in J(\alpha)} \Bigr).
    \end{align*}
\end{definition}

Just as in the classical case, the coloring $\owl^{(r+1)}_{(k_1,k_2)}(G)$ refines $\owl^{(r)}_{(k_1,k_2)}(G)$ and eventually stabilizes.
We denote the stable coloring by $\owl^{(\infty)}_{(k_1,k_2)}(G)$.

The correspondence of counting logic, pebble game, and algorithm for restricted reusability now is as follows:

\begin{theorem} \label{thm:characterization}
    Let $G,H$ be colored graphs and $k_1,k_2 \in \mathbb{N}$ with $k_1 + k_2 \geq 1$.
    Then for all $(k_1,k_2)$-configurations $(\alpha, \beta)$ and $r \in \mathbb{N}$ the following are equivalent:
    \begin{enumerate}
        \item For every $\varphi \in \mathsf{C}^{(k_1,k_2)}_r$ with $\free(\varphi)\cap[x_{k_1}] \subseteq \dom(\alpha)\cap[x_{k_1}]$ and $\free(\varphi)\cap[y_{k_2}] = \dom(\alpha)\cap[y_{k_2}]$
              it holds that $G, \alpha \models \varphi \Leftrightarrow H, \beta \models \varphi$.
        \item (D) has a winning strategy for $\BP^r_{(k_1,k_2)}(G,H)$ with initial configuration $(\alpha, \beta)$.
        \item It holds that $\owl_{(k_1,k_2)}^{(r)}(G, \overline{\alpha}) = \owl_{(k_1,k_2)}^{(r)}(H, \overline{\beta})$.
    \end{enumerate}
\end{theorem}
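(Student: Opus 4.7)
The plan is to fix $k_1,k_2$ with $k_1+k_2\geq 1$ and prove the equivalences by induction on $r$, arranging them in the cyclic chain (3) $\Rightarrow$ (1) $\Rightarrow$ (2) $\Rightarrow$ (3). The base case $r=0$ is immediate: in all three items, the condition collapses to $\atp_{k_1+k_2}(G,\overline{\alpha})=\atp_{k_1+k_2}(H,\overline{\beta})$. For the induction step, the critical observation that aligns the three viewpoints is that a non-requantifiable variable $y_j$ may be quantified in $\varphi$ only if $y_j\notin\free(\varphi)$, Spoiler may pick up pair $y_j$ only if $\alpha(y_j)=\bot$, and the OWL refinement carries a multiset indexed by $j$ only if $j\in J(\alpha)$; all three conditions encode the same invariant.

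For (3) $\Rightarrow$ (1) in the step from $r$ to $r+1$, I would proceed by structural induction on $\varphi\in\mathsf{C}^{(k_1,k_2)}_{r+1}$. Atomic and Boolean cases are routine via $\owl^{(0)}$ and $\owl^{(r)}$-refinement. For $\exists^{\geq m}z\,\psi$ with $\qr(\psi)\leq r$: if $z=x_i$, the multiset indexed by $i$ in $\owl^{(r+1)}_{(k_1,k_2)}$ records the $\owl^{(r)}$-colors of all one-step substitutions $\overline{\alpha}[x_i/w]$, and by the outer inductive hypothesis these colors determine satisfaction of $\psi$, so multiset equality gives equal witness counts. If $z=y_j$, the syntactic restriction forces $j\in J(\alpha)$, so the multiset indexed by $j$ is available and the same argument applies.

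For (1) $\Rightarrow$ (2), after Spoiler picks up pebble pair $z$, I would classify vertices by their $\mathsf{C}^{(k_1,k_2)}_r$-type once $z$ is placed. Since the set of such types realisable over configurations with fixed domain is finite up to equivalence, each type $T$ is captured by a single formula $\Phi_T$. The sentence $\exists^{\geq m}z\,\Phi_T$ lies in $\mathsf{C}^{(k_1,k_2)}_{r+1}$ and respects the free-variable constraints of (1), so (1) forces the number of type-$T$ realisers in $G$ and $H$ to agree. Hence Duplicator can choose a type-preserving bijection $f$, and the resulting configuration satisfies (1) with parameter $r$, so the inductive hypothesis delivers the remaining rounds.

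For (2) $\Rightarrow$ (3) I would argue the contrapositive. If the OWL colors differ at round $r+1$ but agree at round $r$, some multiset for a pebble pair $z\in[x_{k_1}]$ or $z=y_j$ with $j\in J(\alpha)$ differs; in the latter case the game rules permit Spoiler to pick $z$. By Hall's marriage theorem applied to the bipartite graph on $V(G),V(H)$ joining $w$ to $v$ iff $\owl^{(r)}_{(k_1,k_2)}(G,\overline{\alpha}[z/w])=\owl^{(r)}_{(k_1,k_2)}(H,\overline{\beta}[z/v])$, no bijection preserves the $r$-th OWL coloring, so whatever bijection Duplicator offers, Spoiler picks a witness $w$ of mismatch and wins the remaining $r$ rounds by induction. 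The main obstacle I anticipate is bookkeeping the asymmetry between reusable and non-reusable variables: one must check at every inductive step that the three syntactic/combinatorial/algorithmic conditions on non-reusable slots remain synchronised through quantification, pebble placement, and OWL refinement. Once this is set up cleanly in the base case, each implication is a direct unfolding of the corresponding definition combined with Hall's theorem in the (2) $\Leftrightarrow$ (3) direction.
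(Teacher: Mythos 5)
Your proposal is correct and rests on the same pillars as the paper's proof: induction on $r$, a cyclic chain of implications, characteristic formulas for types/color classes built from counting quantifiers (using that over graphs of fixed order there are only finitely many inequivalent $\mathsf{C}^{(k_1,k_2)}_r$-formulas), color-preserving bijections for Duplicator, and the crucial synchronization that a non-requantifiable $y_j$ may be quantified iff it is not free, iff the pebble pair $y_j$ is unplaced, iff $j\in J(\alpha)$. The difference is the orientation of the cycle: the paper proves $(1)\Rightarrow(3)$, $(3)\Rightarrow(2)$ and $\neg(1)\Rightarrow\neg(2)$, while you prove $(3)\Rightarrow(1)$, $(1)\Rightarrow(2)$, $(2)\Rightarrow(3)$. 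Concretely, your $(1)\Rightarrow(2)$ fuses the paper's two steps (the formulas $\exists^{=N}z\,\psi_{c_z}$ forcing equal color-class sizes, then the resulting bijection as Duplicator's strategy); your $(3)\Rightarrow(1)$ requires a structural induction on formulas that the paper avoids entirely (its only formula surgery is peeling one quantifier in $\neg(1)\Rightarrow\neg(2)$); and your $(2)\Rightarrow(3)$ is a direct Spoiler-from-color-mismatch argument not spelled out in the paper. Three small repairs: (i) Hall's theorem is unnecessary --- if the two multisets of round-$r$ colors differ, then for \emph{every} bijection $f$ there is some $w$ with mismatched colors, since otherwise $f$ itself would witness multiset equality; (ii) your case split assumes the colors agree at round $r$ and differ at $r+1$; when they already differ at round $r$ you need the (easy but unstated) monotonicity that a Spoiler win in the $r$-round game is also a win in the $(r+1)$-round game; (iii) the ``bookkeeping'' you flag is real but benign: a subformula $\psi$ of $\exists^{\geq N}z\,\psi$ may have strictly fewer free variables than required by the equality condition $\free(\varphi)\cap[y_{k_2}] = \dom(\alpha)\cap[y_{k_2}]$ of item (1), so the inductive hypothesis does not apply verbatim; this is fixed by padding $\psi$ with trivial conjuncts such as $y_i=y_i$ --- a technicality the paper itself elides.
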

\begin{proof}
	We proceed by induction on $r$.
	The base case $r=0$ holds by definition: $\owl_{(k_1,k_2)}^{(0)}(G, \overline{\alpha}) = \owl_{(k_1,k_2)}^{(0)}(H,\overline{\beta})$ holds exactly if $G,\alpha$ and $H,\beta$ satisfy the same quantifier-free formulas with variables in $\operatorname{dom}(\alpha)$.
	This is true if and only if $\alpha(z) \mapsto \beta(z)$ defines a partial isomorphism and in turn if (D) wins the zero move game with initial configuration $(\alpha, \beta)$.
	Now assume that the equivalence holds for all $(k_1,k_2)$-configurations $(\alpha, \beta)$ and for all $r<m$.

	$(1 \Rightarrow 3)$: Assume that $G, \alpha$ and $H, \beta$ agree on all formulas of $\mathsf{C}^{(k_1,k_2)}_m$.
	Then $G,\alpha$ and $H, \beta$ also agree on all formulas of $\mathsf{C}^{(k_1,k_2)}_{m-1}$ and by induction $\owl_{(k_1,k_2)}^{(m-1)}(G, \overline{\alpha}) = \owl_{(k_1,k_2)}^{(m-1)}(H, \overline{\beta})$.
	It remains to show that for every $z \in [x_{k_1}] \cup \{y_j: j \in J(\alpha)\}$ and corresponding colors $c_z$ it holds that
	\begin{equation} \label{eq:sizes}
		|\{v \in V(G): \owl^{(m-1)}_{(k_1,k_2)}(G,\overline{\alpha}[z / v]) = c_z\}| = |\{w \in V(H): \owl^{(m-1)}_{(k_1,k_2)}(H,\overline{\beta}[z / w]) = c_z\}|.
	\end{equation}
	By induction, the $k_1+k_2$ tuples \(\overline{\alpha}, \overline{\beta}\) are in the same $\owl_{(k_1,k_2)}^{(m-1)}$ color class exactly if they satisfy exactly the same $\mathsf{C}^{(k_1,k_2)}_{m-1}$ formulas
	with free variables in \(\dom(\alpha) = \dom(\beta)\).
	Over the class of graphs of order $|G|$ there are only finitely many inequivalent formulas in $\mathsf{C}^{(k_1,k_2)}_{m-1}$,
	so let $\Psi(c_z)$ be a finite set of such formulas characterizing the color class of $c_z$ and set $\psi_{c_z} \coloneqq \bigwedge_{\psi \in \Psi(c_z)} \psi$.
	By assumption, $G,\alpha$ and $H,\beta$ agree on the $\mathsf{C}^{(k_1,k_2)}_{m}$ formulas of the form $\exists^{=N_i} x_i \psi_{c_{x_i}}$ and $\exists^{=N_j} y_j \psi_{c_{y_j}}$
	where $\exists^{= N} z \varphi$ is an abbreviation for $\exists^{\geq N} z \varphi \land \neg \exists^{\geq N+1} z \varphi$.
	Then $N_i$ and $N_j$ are chosen as the sizes of the color classes of $c_{x_i}$ and $c_{y_j}$ respectively.
	Note that in $\psi_{c_{y_j}}$ the variable $y_j$ must occur free (and not bound) and hence $y_j$ is not requantified in the formula $\exists^{=N_j} y_j \psi_{c_{y_j}}$.

	$(3 \Rightarrow 2)$: Assume $\owl_{(k_1,k_2)}^{(m)}(G,\overline{\alpha}) = \owl_{(k_1,k_2)}^{(m)}(H,\overline{\beta})$.
	Then \Cref{eq:sizes} holds for all $z \in [x_{k_1}] \cup \{y_j: j \in J(\alpha)\}$ and every color $c_z$.
	We describe a winning strategy for (D):
	On the first move, (S) picks up some pebble pair, say $y_j$.
	Then (D) can pick a bijection $f \colon V(G) \to V(H)$ such that for every $v \in V(G)$ it holds
	\[ \owl^{(m-1)}_{(k_1,k_2)}(G,\overline{\alpha}[y_j / v]) = \owl^{(m-1)}_{(k_1,k_2)}(H,\overline{\beta}[y_j / f(v)]).\]
	Then (S) chooses some $v \in V(G)$ and the new configuration is given by $(\alpha[y_j / v], \beta[y_j / f(v)])$.
	Since the $\owl_{(k_1,k_2)}^{(m-1)}$ colors of the tuples $\overline{\alpha}[y_j / v], \overline{\beta}[y_j / f(v)]$ coincide by assumption, (D) has a winning strategy for the remaining $m-1$ moves by induction.

	$(\neg 1 \Rightarrow \neg 2)$: Assume $G, \alpha \models \varphi$ and $H, \beta \centernot\models \varphi$ holds for some $\varphi \in \mathsf{C}^{(k_1,k_2)}_m$.
	If $\varphi$ is a disjunction, $G,\alpha$ and $H, \beta$ differ on at least one of the disjuncts and if $\varphi = \neg \psi$ we have $G, \alpha \centernot\models \psi$ and $H, \beta \models \psi$, so assume $\varphi = \exists^{\geq N} z \psi$ with $\psi \in \mathsf{C}^{(k_1,k_2)}_{m-1}$.
	We provide a winning strategy for (S):
	If $\varphi = \exists^{\geq N} x_i \psi$, player (S) picks up the $x_i$ pebble pair on the first move.
	Then (D) responds with a bijection $f \colon V(G) \to V(H)$ and since
	$$|\{v \in V(G): G, \alpha[x_i / v] \models \psi\}| \geq N > |\{w \in V(H): H, \beta[x_i / w] \models \psi\}|$$
	there exists some vertex $v \in V(G)$ such that $G, \alpha[x_i / v] \models \psi$ and $H, \beta[x_i / f(v)] \centernot\models \psi$.
	Player (S) chooses the pebble pair $x_i$ to be placed on $v, f(v)$.
	The new configuration is given by $(\alpha[x_i / v], \beta[x_i / f(v)])$ and differs on the formula $\psi \in \mathsf{C}^{(k_1,k_2)}_{m-1}$.
	By induction, (S) has a winning strategy for the remaining game in $m-1$ moves from this position.
	If $\varphi = \exists^{\geq N} y_j \psi$, the pebble pair $y_j$ is not yet placed since $y_j \notin \operatorname{free}(\varphi) \cap [y_{k_2}] = \operatorname{dom}(\alpha) \cap [y_{k_2}]$.
	Thus, (S) picks up the pebble pair $y_j$ and (D) chooses a bijection $f \colon V(G) \to V(H)$.
	Again, there exists some $v \in V(G)$ such that $G, \alpha[y_j / v] \models \psi$ and $H, \beta[y_j / f(v)] \centernot\models \psi$.
	Thus, (S) chooses the pebble pair $y_j$ to be placed on $v, f(v)$ and wins the remaining $m-1$ move game by induction.
\end{proof}

\section{The Role of Reusability}\label{sec:role:of:reuse}

We investigate the interplay of requantifiable and non-requanti\-fiable variables in $\mathsf{C}^{(k_1,k_2)}$ using the game-theoretic characterization provided by \Cref{thm:characterization}.
To this end, we utilize the CFI construction and introduce a variant of the cops-and-robber game used in \cite{Grohe2023} to simulate the game \(\BP_{(k_1,k_2)}\) on CFI graphs via a game played only on the base graph.
Our variant involves non-reusable cops as a way of restricting reusability of resources.

\begin{definition}
    The \emph{cops-and-robber game} $\CR_{(k_1,k_2)}(G)$ is played on a base graph $G$ between a group of $k_1+k_2$ \emph{cops} and one \emph{robber}.
    The cops are denoted by the elements of $[x_{k_1}, y_{k_2}]$ and a cop $x_i$ is called \emph{reusable} while a cop $y_j$ is called \emph{non-reusable}.
    Each round of the game is associated with a partial function $\gamma \colon [x_{k_1},y_{k_2}] \rightharpoonup V(G)$ and an edge $e \in E(G)$.
    The function $\gamma$ encodes the current positions of the cops while the edge $e$ is the position of the robber.
    Initially, there are no cops on the vertices and the robber is placed on some edge of the base graph.
    One round of the game with current position $(\gamma, e)$ consists of the following steps:
    \begin{enumerate}
        \item The cops choose $z \in [x_{k_1}, y_{k_2}]$ such that $z \in [x_{k_1}]$ or $\gamma(z)$ is undefined.
        If no such $z$ exists, the winning condition is checked directly. 
        Then a destination $w \in V(G)$ for $z$ is declared.
        \item The robber chooses an edge $e'$ in the connected component of $e$ with respect to~$\gamma[z/\bot]$.
        \item The cop $z$ is placed on the vertex $w$.
        \item The new position of the game is given by $(\gamma[z/w], e')$.
    \end{enumerate}
    The winning condition is as follows:
    \begin{itemize}
        \item  The cops win the game if at the end of the current round both vertices incident to the robber edge $e'$ hold cops.
              The robber wins if the cops never win.
    \end{itemize}
    We also introduce the game $\CR^r_{(k_1, k_2)}(G)$ with the additional winning condition that the robber wins if the cops do not win in $r$ rounds.
\end{definition}

Using this game, plays of the bijective $(k_1,k_2)$-pebble game on CFI graphs $X(G), \widetilde{X}(G)$ can be simulated by plays of the cops-and-robber game played only on the base graph $G$.
Intuitively, Spoiler has to \emph{catch} the twist in $\widetilde{X}(G)$ with pebbles to show the difference of the graphs.
This corresponds to moving the cops (according to the reusability of the used pebbles) in $\CR_{(k_1,k_2)}(G)$.
Duplicator, however, moves the twist in $\widetilde{X}(G)$ using automorphisms of the graph to hide the difference, which corresponds to moving the robber in $\CR_{(k_1,k_2)}(G)$.
Following similar arguments from \cite{Dawar2007, fuerer_grid}, this yields the following lemma, stating that the bijective pebble game on CFI graphs can be simulated appropriately.
\begin{lemma}\label{lem:CR:vs:BP}
    Let $k_1+k_2 \geq 2$ and $r \in \mathbb{N}$. Then the robber has a winning strategy in $\CR^r_{(k_1,k_2)}(G)$ if and only if (D) has a winning strategy in $\BP^r_{(k_1, k_2)}(X(G), \widetilde{X}(G))$.
\end{lemma}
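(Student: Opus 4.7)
The proof will proceed by induction on $r$, establishing both directions simultaneously. The overarching idea is to interpret cop positions $\gamma$ as the base-vertex projections $\alpha_1 = \beta_1$ of the pebble configuration and to interpret the robber's edge as the ``current virtual location'' of the twist distinguishing $X(G)$ from $\widetilde{X}(G)$. Writing $X(G) = \widetilde{X}_\emptyset(G)$ and $\widetilde{X}(G) = \widetilde{X}_{e_0}(G)$ for some fixed edge $e_0$, the invariant I would maintain throughout the game is that $\alpha_1 = \beta_1 = \gamma$ and that the robber's current edge $e$ lies in the same connected component as $e_0$ with respect to $\gamma$; by \Cref{lem:twist:component} this is precisely the condition that guarantees the existence of an isomorphism $\widetilde{X}_{e_0}(G) \to \widetilde{X}_{e}(G)$ fixing the pebbled vertices gadget-wise, so that $(\alpha, \beta)$ induces a partial isomorphism between the induced subgraphs. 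The base case $r=0$ then reduces to this invariant for empty configurations.

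For the forward direction (robber winning strategy yields a Duplicator winning strategy), suppose Spoiler picks up a pebble pair $z$. Duplicator would query the robber's strategy for every hypothetical cop destination $w \in V(G)$, obtaining an edge $e_w$ in the component of $e$ with respect to $\gamma[z/\bot]$. Using the ``moreover'' part of \Cref{lem:twist:component}, which provides an isomorphism $\varphi_{e_0, e_w}$ acting as the identity outside a path from $e_0$ to $e_w$, Duplicator assembles a single bijection $f \colon V(X(G)) \to V(\widetilde{X}(G))$ that respects the gadget decomposition and whose restriction $f|_{F(w)}$ is compatible with $\varphi_{e_0, e_w}$. When Spoiler then selects $\tilde v \in F(w)$, the new configuration $(\alpha[z/\tilde v], \beta[z/f(\tilde v)])$ still satisfies the invariant, with updated cop position $\gamma[z/w]$ and robber edge $e_w$, and the induction hypothesis finishes the $r-1$ remaining rounds.

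For the backward direction, I take the contrapositive: from a cop winning strategy I build a Spoiler winning strategy. Spoiler simulates the cops by picking up the same pebble pair $z$ that the cops announce. After Duplicator commits to a bijection $f$, the robber inspects how $f$ acts on the gadget $F(w)$ over the announced destination $w$ and selects the unique edge $e'$ such that $f|_{F(w)}$ realises moving the twist from $e$ to $e'$; maintaining the invariant requires showing $e'$ must lie in the component of $e$ with respect to $\gamma[z/\bot]$, which follows because any vertex in a cop-free component could equally well host the twist without $f$ detecting it. Once the cops catch the robber, the two endpoints of the trapped edge carry pebbles, and the parity argument underlying \Cref{lem:twist:iso} forces $(\alpha, \beta)$ to fail to be a partial isomorphism, yielding an immediate Spoiler win.

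The principal technical obstacle is the correct treatment of non-reusable resources and the ordering mismatch between the two games: in $\BP$ Duplicator commits to a bijection before Spoiler picks a vertex, while in $\CR$ the cops announce their destination before the robber moves. Overcoming this requires Duplicator to precompute the robber's reply for every base destination simultaneously and amalgamate these into one global bijection, which is feasible precisely because \Cref{lem:twist:component} yields twist-moving isomorphisms that are the identity outside a short path and hence can be combined coherently across different hypothetical destinations. The reusable/non-reusable distinction is then automatic, since the legality condition ``$z \in [x_{k_1}]$ or $\gamma(z) = \bot$'' in $\CR$ matches verbatim the condition ``$z \in [x_{k_1}]$ or $\alpha(z) = \bot$'' for picking up a pebble pair in $\BP$, so no additional care beyond tracking $\dom(\alpha)$ and $\dom(\gamma)$ throughout the induction is needed.
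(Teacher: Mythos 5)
Your forward direction is essentially the paper's proof: Duplicator queries the robber's escape edge for every hypothetical cop destination and glues the resulting isomorphisms from \Cref{lem:twist:component} gadget-by-gadget into one global bijection, maintaining the invariant that some isomorphism \(\widetilde{X}_{e_\ell}(G)\to\widetilde{X}_{e_0}(G)\) carries \(\alpha\) to \(\beta\). One small inaccuracy there: the gluing is legitimate simply because each piece maps every gadget bijectively onto itself --- Duplicator's bijection need not be an isomorphism --- whereas the identity-outside-a-path property is needed for a different purpose, namely to verify that the previously pebbled vertices are still mapped correctly, i.e.\ \(f_\ell\circ\alpha'=\beta'\).

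The backward direction has a genuine gap. Your Spoiler strategy is only defined when Duplicator's bijection \(f\) is ``twist-moving'', i.e.\ when there exists an edge \(e'\) such that twisting \(e'\) in the pebbled image \((f(X(G)),f\circ\alpha)\) yields a pebbled graph isomorphic to \((\widetilde{X}(G),\beta)\). Duplicator is under no obligation to play such bijections: \(f\) may send pebbled vertices into the wrong gadgets, or may produce an image differing from \(\widetilde{X}(G)\) by three or more twists (parity, via \Cref{lem:twist:iso}, only forces an odd number). In those cases no such \(e'\) exists, your simulation produces no robber move, and the strategy is simply undefined. The paper devotes the final third of its proof to exactly this case: if no single twist explains \(f\), then either some pebble pair has \(f(\alpha(z))\) and \(\beta(z)\) in different gadgets, and Spoiler wins by pebbling \(\alpha(z), f(\alpha(z))\) (forcing an equality/color violation), or some gadget carries two pebbles whose bit patterns agree on the \(X(G)\)-side but disagree on the \(\widetilde{X}(G)\)-side, and Spoiler wins by pebbling a common neighbor that exists in one graph but not in the other. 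This extra move is also where the hypothesis \(k_1+k_2\geq 2\) enters; your proof never appeals to that hypothesis, which is a symptom of the missing case. Two further inaccuracies in the same part: the twist edge \(e'\) cannot be read off from \(f|_{F(w)}\) alone and need not be unique --- its location is a global property of \(f\), obtained by comparing the edge sets of \(f(X(G))\) and \(\widetilde{X}(G)\); and the claim that \(e'\) lies in the robber's current component does not follow from your remark that cop-free components ``could equally well host the twist'', but rather from \Cref{lem:twist:component} applied to the invariant of the previous round, since the old and new twist locations are related by an isomorphism fixing all vertices pebbled under \(\gamma[z/\bot]\).
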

\begin{proof}
    For the first part, we turn a winning strategy of the robber in $\CR^r_{(k_1,k_2)}(G)$ into a winning strategy of (D) in $\BP^r_{(k_1, k_2)}(X(G), \widetilde{X}(G))$.
    By \Cref{lem:twist:iso} we might assume that $\widetilde{X}(G) = \widetilde{X}_{e_0}(G)$ for some edge $e_0 \in E(G)$ and the robber is initially placed on $e_0$ in $\CR^r_{(k_1,k_2)}(G)$.
    For a $(k_1,k_2)$-configuration $(\alpha, \beta)$ on $X(G), \widetilde{X}(G)$ we write $\alpha_1,\beta_1 \colon [x_{k_1}, y_{k_2}] \rightharpoonup V(G)$ for the mapping to the corresponding base vertices.
    For every vertex $v \in V(G)$ that is declared as destination for a cop $z$ in $\CR^r_{(k_1,k_2)}(G)$ with cop positions $\gamma = \alpha_1 = \beta_1$, the robber selects some edge $e^v \in E(G)$ to escape to.
    With each round $\ell \in [r]$ of $\BP^r_{(k_1, k_2)}(X(G), \widetilde{X}_{e_0}(G))$ we associate an edge $e_{\ell} \in E(G)$ and an isomorphism $f_{\ell} \colon  \widetilde{X}_{e_{\ell}}(G) \to \widetilde{X}_{e_0}(G)$.
    Since $|X(G)| = |\widetilde{X}(G)|$ the game $\BP^r_{(k_1, k_2)}(X(G), \widetilde{X}_{e_0}(G))$ starts off properly.
    In the first round, (S) picks up a pebble pair $z_1 \in [x_{k_1}, y_{k_2}]$.
    By \Cref{lem:twist:component}, for each destination $v \in V(G)$ of the cop $z_1$ there exists an isomorphism $\varphi_{e^v, e_0} \colon \widetilde{X}_{e^v}(G) \to \widetilde{X}_{e_0}(G)$ with $\varphi_{e_0,e^v} \circ \alpha = \beta$.
    Then (D) chooses the bijection $f \colon X(G) \to \widetilde{X}_{e_0}(G), (v, \overline{a}) \mapsto \varphi_{e^v, e_0}((v, \overline{a}))$.
    Next, (S) chooses a vertex $(v_1, \overline{a}) \in V(X(G))$ and the pebble pair $z_1$ is placed on $(v_1, \overline{a}), f((v_1, \overline{a}))$.
    We set $e_1 \coloneqq e^{v_1}$ and $f_1 \coloneqq \varphi_{e^{v_1}, e_0}$, which is an isomorphism between $\widetilde{X}_{e_1}(G)$ and $\widetilde{X}_{e_0}(G)$. 
    At the beginning of round \(1 < \ell \leq r\) with current $(k_1,k_2)$-configuration $(\alpha, \beta)$, let (S) pick up the pebble pair $z \in [x_{k_1},y_{k_2}]$.
    Note that the reusability restrictions for the pebble pairs and corresponding cops are identical. 
    Again, for each $v \in V(G)$ declared as destination for the cop $z$ let $\varphi_{e^v, e_{\ell-1}} \colon \widetilde{X}_{e^v} \to \widetilde{X}_{e_{\ell-1}}$ denote the isomorphism from \Cref{lem:twist:component}.
    Since $\varphi_{e^v, e_{\ell-1}}((v, \cdot))$ is a bijection from $F(v)$ to itself and $f_{\ell-1}$ is a gadget-preserving bijection by induction,
    choosing $f \colon (v, \overline{a}) \mapsto f_{\ell-1}(\varphi_{e^v, e_{\ell-1}}(v, \overline{a}))$ is a valid choice for (D) that again preserves gadgets.
    Then (S) selects a vertex $(v_{\ell}, \overline{a}) \in V(X(G))$ and the configuration is updated to $(\alpha', \beta') = (\alpha[z / (v_{\ell}, \overline{a})], \beta[z/f(v_{\ell}, \overline{a})])$.
    Since the robber is not caught at this point in $\CR^r_{(k_1,k_2)}(G)$, we have $e^{v_{\ell}} \not\subseteq \im(\alpha'_1) = \im(\beta'_1)$.
    We set $e_{\ell} \coloneqq e^{v_{\ell}}$ and $f_{\ell} \coloneqq f_{\ell-1} \circ \varphi_{e_{\ell}, e_{\ell-1}}$.
    Then it holds $f_{\ell}(v_{\ell}, \overline{a}) = f(v_{\ell}, \overline{a})$ and since no vertex on the path from $e_{\ell}$ to $e_{\ell -1}$ for $\varphi_{e_{\ell}, e_{\ell-1}}$ is pebbled (occupied by a cop), also $f_{\ell} \circ \alpha' = \beta'$.
    Since $f_{\ell}$ restricts to an isomorphism from $X(G)$ (which results by twisting $e_{\ell}$ in $\widetilde{X}_{e_{\ell}}(G)$) to $\widetilde{X}_{e_0}(G)$ with $e_{\ell}$ twisted additionally, the map $\alpha'(z) \mapsto \beta'(z)$ is a partial isomorphism. 
    The robber has a winning strategy for the remaining game $\CR^{r-\ell}_{(k_1,k_2)}(G)$ with starting position $\alpha'_1, e_{\ell}$ and hence
    (D) inductively has a winning strategy for $\BP^{r-\ell}_{(k_1, k_2)}(X(G), \widetilde{X}_{e_0}(G))$ starting from $(\alpha', \beta')$.

    For the second part, assume that he cops have a winning strategy in $\CR^r_{(k_1, k_2)}(G)$. We provide a winning strategy for (S) in $\BP^r_{(k_1, k_2)}(X(G), \widetilde{X}(G))$.
    First, we assume that (D) always chooses bijections $f \colon V(X(G)) \to V(\widetilde{X}(G))$ with the property that one can apply an additional twist to the pebbled graph $f((X(G), \alpha)) = (f(X(G)), f \circ \alpha)$ such that the resulting pebbled graph is isomorphic to $\widetilde{X}(G), \beta$.
    In a round $\ell \leq r$ of $\BP^r_{(k_1, k_2)}(X(G), \widetilde{X}(G))$, player (S) picks up the pebble pair $z$ corresponding to the cop being picked up in $\CR^r_{(k_1,k_2)}(G)$.
    For every bijection $f$ chosen by (D) there exists an edge $e'$ in $f(X(G)), f \circ \alpha$ such that twisting $e'$ results in a graph isomorphic to $\widetilde{X}(G), \beta$.
    Then (S) can simulate this move in the cops-and-robber game by placing the robber on the edge $e'$ that must be in the same component as $e$ by \Cref{lem:twist:component}.
    In $\CR_{(k_1,k_2)}(G)$ the cop $z$ is moved to some vertex $v \in V(G)$ according to the winning strategy of the cops.
    Consequently, (S) selects vertices $(v, \overline{a}), f((v, \overline{a}))$ to be pebbled with the pair $z$.
    If the cops win $\CR^{r}_{(k_1,k_2)}(G)$ at this point, both endpoints $v,v'$ of $e'$ are occupied by cops.
    Therefore, the gadgets $F(v), F(v')$ both hold pebbles but since twisting $e'$ in $(f(X(G)), f \circ \alpha)$ results in $\widetilde{X}(G), \beta$ these pebbles induce an edge and a non-edge in $X(G)$ and $\widetilde{X}(G)$ respectively.
    If the cops do not win at this point, there is a winning strategy for the cops in the remaining $r-\ell$ rounds.
    Now assume that no twist of a base edge of $f((X(G), \alpha))$ results in a graph isomorphic to $(\widetilde{X}(G), \beta)$.
    If there is a pair of pebbles $z$ such that $f(\alpha(z))$ and $\beta(z)$ are in different gadgets, (S) wins by choosing $\alpha(z), f(\alpha(z))$.
    Otherwise, some gadget $F(u)$ in $f(X(G))$ contains two pebbles $z_1,z_2$ with $f(\alpha(z_1)) = (u, \overline{a}), f(\alpha(z_2)) = (u, \overline{b})$ and $a'_i = b'_i$
    whereas for $\beta(z_1) = (u, \overline{a'}), \beta(z_2) = (u, \overline{b'})$ it holds $a'_i \neq b'_i$.
    Hence, $f(\alpha(z_1)), f(\alpha(z_2))$ have a common neighbor in a gadget $F(v)$ while $\beta(z_1),\beta(z_2)$ do not have a common neighbor in $F(v)$.
    Then (S) chooses the common neighbor in $X(G)$ to be pebbled and wins the game.
\end{proof}

We now prove a strict hierarchy for the logics \(\mathsf{C}^{(k_1,k_2)}\) by providing, for every pair of logics we want to separate, two CFI graphs $X(G), \widetilde{X}(G)$ that are distinguished by one of the logics, but not the other.
To show this, it now suffices to provide strategies for the game $\CR_{(k_1, k_2)}(G)$ by \Cref{lem:CR:vs:BP} and \Cref{thm:characterization}.
The idea for the choice of the base graphs $G$ is inspired by \cite{fuerer_grid} where grid graphs were chosen as base graphs to show a linear lower bound on the number of iterations required by $k$-WL to distinguish non-isomorphic graphs.

\begin{definition}
    The graph
    \[G_{h \times \ell} \coloneqq (\{v_{i,j}: i \in [h], j \in [\ell]\}, \{\{v_{i,j}, v_{r,s}\}: |i-r| + |j-s|=1\})\]
    is called the \textit{grid graph with $h$ rows and $\ell$ columns} (see \Cref{fig:grid}). We also call $h$ the \emph{height} and $\ell$ the \emph{length} of the grid.

    We say that the cops \emph{build a barrier} in the game $\CR_{(k_1, k_2)}$ played on a graph $G$ containing a grid if they are placed on a separator of the graph $G$
    disconnecting the first column from the last column of the grid.
\end{definition}

When the cops-and-robber game is played on the grid graph $G_{h \times \ell}$ with at least $h+1$ cops, the cops can be placed on a column to form a barrier. They can move through the grid maintaining this formation by using the additional cop.
However, moving the barrier by one decreases the size of the component containing the robber only by one. In particular, the component size decreases at most by a constant per round.
Considering the setting with non-reusable cops, it is clear that these cannot be used to move the barrier through the graphs for an arbitrarily large distance and hence adding them to the game might not increase the capability to distinguish non-isomorphic graphs of a grid-type.

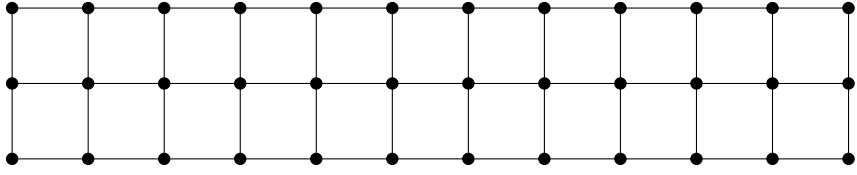
\begin{figure}
    \centering
    \begin{tikzpicture}
        \def\height{2} 
        \def\length{11} 

        \foreach \x in {0,...,\length} {
            \foreach \y in {0,...,\height} {
                \node[circle, draw, fill=black, inner sep=1.5pt] (\x,\y) at (\x, \y) {};
                
                \ifnum\x<\length
                    \draw (\x,\y) -- ({\x+1},\y);
                \fi
                \ifnum\y<\height
                    \draw (\x,\y) -- (\x,{\y+1});
                \fi
            }
        }
    \end{tikzpicture}
    \caption{A drawing of the graph $G_{3 \times 12}$}
    \label{fig:grid}
\end{figure}

Using the established framework, we precisely classify how the reusability of variables affects the relative expressive power of the logic $\mathsf{C}^{(k_1,k_2)}$.
As noted before, increasing the total number of variables or exchanging non-requantifiable variables for requantifiable ones may only increase the expressive power.
It turns out that in general for most parameters combinations, these are the only possibilities to increase the expressive power. However, as we will see, there are certain special cases for which this is not true.

First, we show the advantages of reusability:
When the cops-and-robber game is played on $G_{h \times \ell}$ with at least $h+1$ reusable cops, the cops can be placed on a column to form a barrier. The cops can move through the grid maintaining this formation by using the additional cop.
To show the separation, we choose the base graph as a grid of sufficient length such that the cops are required to move or build a barrier repeatedly. This makes reusability necessary as all non-reusable cops are placed at some point and can not be used to move a barrier any further.

\begin{lemma} \label{lem:reusable:advantage}
    For all \(k_1,k_2,k'_1,k'_2\geq 0\), if \(k_1>\max(k_1',1)\), then $\mathsf{C}^{(k_1,k_2)} \not\preceq \mathsf{C}^{(k_1',k_2')}$.
\end{lemma}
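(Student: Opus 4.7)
The plan is to exhibit, for every valid choice of parameters, a pair of graphs distinguished by $\mathsf{C}^{(k_1,k_2)}$ but not by $\mathsf{C}^{(k_1',k_2')}$. Set $h \coloneqq k_1 - 1$; note $h \ge 1$ since $k_1 \ge 2$. Let $\ell$ be a sufficiently large integer, to be chosen depending on $k_1'$ and $k_2'$. Take the base graph $G \coloneqq G_{h \times \ell}$ and consider the CFI pair $X(G), \widetilde{X}(G)$. By \Cref{thm:characterization} and \Cref{lem:CR:vs:BP} it suffices to show that the cops win $\CR_{(k_1,k_2)}(G)$ while the robber wins $\CR_{(k_1',k_2')}(G)$.

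For the cop side, I use the standard barrier-advance strategy, which needs only the $k_1 = h+1$ reusable cops and hence ignores the parameter $k_2$. First I place $h$ cops on the vertices of the first column, thereby forming a vertex separator between the first and last column. The remaining reusable cop acts as a scout: to advance the barrier from column $j$ to column $j+1$, the scout is placed at $(1, j+1)$, and then the cops in column $j$ are lifted one at a time (from the lowest row upward) and re-placed at the next free row of column $j+1$. A routine check of the four-vertex neighborhood of each lifted position shows that after every sub-step the currently placed cops still form a vertex separator, so the robber's component remains strictly to the right of the advancing barrier. Eventually the barrier reaches column $\ell$, the robber's component shrinks to a single edge, and both of its endpoints are covered by cops.

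For the robber side, the guiding observation is that with only $k_1' \le h$ reusable cops the cops cannot translate a vertex separator while preserving it: any reusable shift inside a size-$h$ column barrier tears it open, and reforming it one column further requires at least one auxiliary cop at the new column which, because $k_1' \le h$, must be non-reusable. Each such advance thus burns a non-reusable cop permanently, so the barrier can be advanced only boundedly often in $k_2'$. The robber's strategy is therefore to remain inside a connected component of $G\setminus \gamma$ that contains a long block of consecutive cop-free columns. Since at any time at most $k_1'+k_2'$ columns carry a cop, choosing $\ell$ large enough, for instance $\ell \ge (k_1'+k_2'+1)(k_1'+k_2'+2)$, guarantees by pigeonhole that some interval of at least $k_1'+k_2'+2$ consecutive columns is cop-free. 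Knowing the cops' declared destination $w$ before moving, the robber picks its new edge well inside a cop-free block that still has length at least two after $w$ is added; within such a block of width two there is a $2\times h$ subgrid of non-cop vertices (or an analogous three-vertex path segment in the degenerate case $h=1$), which provides an edge with at least one non-cop endpoint and hence prevents the cops from winning in that round.

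The main obstacle is formalizing the robber's invariant across a round. The delicate case is when the cops' declared destination $w$ lands inside the long cop-free block the robber currently uses; then I need to exhibit an alternative cop-free block in the new configuration that lies inside the component the robber can reach in $G \setminus \gamma[z/\bot]$ before $z$ is re-placed. This works because the transition from $\gamma$ to $\gamma[z/w]$ changes the cop-column set by at most two elements, so a sufficiently long cop-free segment of $[\ell]$ still survives and by the choice of $\ell$ it lies in the same component as the previous block. A careful but essentially combinatorial case analysis, closely paralleling the arguments of Fürer~\cite{fuerer_grid} and Grohe~\cite{Grohe2023}, confirms that the robber's invariant is preserved throughout the game.
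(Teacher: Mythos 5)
Your overall architecture (CFI graphs over a grid of height $k_1-1$, the reduction via \Cref{thm:characterization} and \Cref{lem:CR:vs:BP}, and the barrier-plus-scout strategy for the cop side) is exactly the paper's, and the cop side is fine. The gap is on the robber side, and it is not a matter of presentation: the polynomial choice $\ell \geq (k_1'+k_2'+1)(k_1'+k_2'+2)$ is genuinely too small, and the invariant you try to maintain cannot be preserved. Your accounting treats each non-reusable cop as advancing the cops' frontier by $O(1)$ columns (``each such advance thus burns a non-reusable cop permanently, so the barrier can be advanced only boundedly often''), but the cops need not advance column by column: they can place a fresh wall of $k_1-1$ non-reusable cops in the \emph{middle} of the robber's current component, forcing the robber into one of the two halves. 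Concretely, take $k_1=2$ and $k_1'=1$, so the base graph is a path $P_\ell$: the cops put the reusable cop at the midpoint, and thereafter place non-reusable cops at the midpoint of the robber's current subpath. If the robber keeps the side touching the reusable cop, that cop stays pinned but the component halves; if the robber takes the other side, the reusable cop is freed and performs a free halving in the next round. Either way the robber's territory halves per non-reusable cop, so the cops win whenever $\ell$ is below roughly $2^{k_2'}$ --- in particular on your polynomially long grid for all large $k_2'$, and the same pinning-and-halving strategy (with walls of $k_1-1$ non-reusable cops) defeats any polynomial bound for every fixed $k_1$. This is also exactly where your repair of the invariant breaks: after such a halving, the long cop-free interval guaranteed by pigeonhole still exists globally, but it lies on the far side of a barrier, so the claim that it ``lies in the same component as the previous block'' is false.

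The paper avoids this by taking the grid length \emph{exponential} in $k_2'$, namely $\ell = k_1 2^{2k_2'}+1$, and arguing by induction on $k_2'$: the robber always escapes to the larger of the two components created by a new wall, each wall costs the cops $k_1-1$ non-reusable cops while only (roughly) halving the robber's territory, and a barrier formed by the $k_1' \leq k_1-1$ reusable cops alone can never be relocated without breaking and releasing the robber. To repair your proof, replace the pigeonhole bound by this exponential length, and replace the ``long cop-free block'' invariant by a potential argument in which the guaranteed component length is allowed to halve each time the cops commit a batch of non-reusable cops --- which is essentially the paper's induction.
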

\begin{proof}
    First, consider the game $\CR_{(k_1,k_2)}(G_{(k_1-1) \times (k_1 2^{2k'_2}+1)})$.
	The cops can build a barrier in the middle of the grid and move it towards the robber only using reusable cops.
	This is a winning strategy for the cops since the size of the component containing the robber is decreased by a constant in each round and eventually vanishes. 

    On the other hand, we show that the robber has a winning strategy for $\CR_{(k'_1,k'_2)}(G_{(k_1-1) \times (k_1 2^{2k'_2}+1)})$ by induction on $k'_2$.
    The base case for $k'_2=0$ is the game $\CR_{(k'_1,0)}(G_{(k_1-1) \times (k_1+1)})$, for which the robber has a winning strategy as a barrier can be built, but not moved. 
    For the inductive step assume $k'_2 > 0$ and consider the game $\CR_{(k'_1,k'_2+1)}(G_{(k_1-1) \times (k_1 2^{2k'_2+2}+1)})$.
    Using only reusable cops, the cops can reduce the size of the robber component with a barrier.
    However, the size remains at least $(k_1-1)\cdot (k_1 2^{2k'_2+1}+1)$, since the robber can choose the larger induced component. 
    When a reusable cop is reused before a non-reusable cop was used to build another barrier, the reusable barrier breaks down and the robber as an escape strategy. 
    Using non-reusable cops, the cops can build another wall to reduce the size of the robber component to $(k_1-1)\cdot (k_1 2^{2k'_2}+1)$. 
    Again, the robber chooses the larger induced component. 
    But now the remaining game is $\CR_{(k'_1,k'_2-k_1+2)}(G_{(k_1-1) \times (k_1 2^{2k'_2}+1)})$ and we have $k'_2 \geq k'_2-k_1+2$.
    Thus, by the inductive hypothesis the robber has a winning strategy for the remaining game. 
\end{proof}

Second, we show the advantages of mere capacity:
When the base graph is chosen as a complete graph, the robber can choose any edge independently of the choice of vertices by the cops and the only possibility to win for the cops is to have sufficient capacity.
In this case, capacity is more valuable than reusability.

\begin{lemma} \label{lem:capacity:advantage}
    For all \(k_1,k_2,k_1',k_2'\geq 0\), if \(k_1+k_2>k_1'+k_2'\), then \(\mathsf{C}^{(k_1,k_2)}\not\preceq\mathsf{C}^{(k_1',k_2')}\).
\end{lemma}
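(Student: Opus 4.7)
The plan is to instantiate the separation via a pair of CFI graphs over a complete base graph. I take $G = K_n$ with $n \coloneqq k_1 + k_2$, endowed with distinct vertex colors as the CFI construction requires, and consider the pair $X(K_n), \widetilde{X}(K_n)$. The degenerate regime $k_1 + k_2 \leq 1$ (which forces $k_1' + k_2' = 0$) is handled separately using the one-vertex graph versus the empty graph, as $\mathsf{C}^{(0,0)}$ contains no non-trivial sentences at all while $\mathsf{C}^{(1,0)}$ and $\mathsf{C}^{(0,1)}$ do. From here on assume $n \geq 2$.

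By \Cref{lem:CR:vs:BP} combined with \Cref{thm:characterization}, it suffices to show that the cops win $\CR_{(k_1,k_2)}(K_n)$ while the robber wins $\CR_{(k_1',k_2')}(K_n)$. For the cops' side I would describe an almost trivial strategy: enumerate $V(K_n) = \{v_1, \dotsc, v_n\}$; in round $i \in [n]$ the cops pick any pebble pair $z$ with $\gamma(z) = \bot$ (such a pair exists because strictly fewer than $n$ cops have been placed so far) and declare $v_i$ as destination. After $n$ rounds every vertex is occupied, so in particular both endpoints of the robber's edge are, and the cops win; the split between reusable and non-reusable pebbles is irrelevant to this strategy.

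For the robber I would maintain the invariant that at the end of each round some endpoint of the current robber edge is not occupied by any cop. The invariant holds initially since no cops are placed. Assume it holds before a round with cop set $S$ (so $|S| \leq k_1' + k_2'$), and let the cops declare that pebble $z$ moves to $v^*$. Writing $S^+ \coloneqq S \cup \{v^*\}$ and letting $S_{\mathrm{int}}$ denote the intermediate cop set after $z$ is picked up, one verifies in both sub-cases (the pebble $z$ was or was not previously placed) that $S_{\mathrm{int}} \subseteq S^+$ and $|S^+| \leq k_1' + k_2' < n$. Hence some vertex $w \in V(K_n) \setminus S^+$ exists, and in particular $w \notin S_{\mathrm{int}}$. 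In $K_n$ every edge incident to $w$ lies in the $S_{\mathrm{int}}$-component of any edge having an endpoint outside $S_{\mathrm{int}}$; by the invariant the robber's current edge is such an edge, so the robber escapes to an edge incident to $w$ and the invariant is preserved because $w \notin S^+$.

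The main technical subtlety I anticipate is the uniform treatment of $S_{\mathrm{int}}$ across both sub-cases, and the verification that the robber's escape edge genuinely lies in the same component as its current edge with respect to $S_{\mathrm{int}}$ (rather than $S$ or $S^+$). Both points reduce to the observation that deleting any proper subset of the vertices of $K_n$ leaves all remaining edges mutually reachable through any remaining vertex, which makes the connectivity step routine; the rest is standard application of the established correspondences.
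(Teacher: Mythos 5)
Your proposal is correct and is essentially the paper's own proof: the paper likewise takes the complete base graph $K_{k_1+k_2}$, combines \Cref{lem:CR:vs:BP} with \Cref{thm:characterization}, lets the cops win $\CR_{(k_1,k_2)}(K_{k_1+k_2})$ by simply covering all base vertices, and lets the robber win $\CR_{(k_1',k_2')}(K_{k_1+k_2})$ because whenever a cop is lifted some edge avoids all cops. Your extra details (the explicit escape invariant and the degenerate regime $k_1+k_2\leq 1$) only make the argument more careful than the paper's terse version; the one shared imprecision is that both you and the paper apply \Cref{lem:CR:vs:BP} to the game with $k_1'+k_2'$ cops although that lemma is stated for at least two pebbles, so the corner case $k_1'+k_2'\leq 1<k_1+k_2$ strictly speaking needs a separate (easy, e.g.\ monotonicity-based) treatment.
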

\begin{proof}
    In the game $\CR_{(k_1,k_2)}(K_{k_1+k_2})$, the cops have a winning strategy just by covering all base vertices.
    In contrast, in the game $\CR_{(k'_1,k'_2)}(K_{k_1+k_2})$ the robber has a winning strategy. Whenever a cop is picked up there is one edge that is not incident to a cop and thus yields a safe escape for the robber.
\end{proof}

Third, we treat the special case of a single requantifiable variable:
Intuitively, at least two reusable cops are needed to move a barrier for an arbitrarily large distance in a base graph.
When only one single reusable cop is available, the distance that can be covered by a barrier of cops is bounded by $2k_1+1$ because for every other move a non-reusable cop must be used.
The \emph{perfect binary tree of depth $d$} is the binary tree $B^d$ such that all interior vertices have two children and all leaves have the same depth.
\begin{lemma} \label{lem:special_case}
    For all $k_2 \geq 1$ it holds that
    $\mathsf{C}^{(1,k_2)} \not\preceq \mathsf{C}^{(0,k'_2)}$ if and only if $k'_2 \leq 2 k_2$.
\end{lemma}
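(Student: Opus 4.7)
The plan is to handle the two directions separately. The forward direction uses the cops-and-robber correspondence of \Cref{lem:CR:vs:BP} together with \Cref{thm:characterization} on a CFI pair over a perfect binary tree, while the backward direction rests on a syntactic flattening of $\mathsf{C}^{(1,k_2)}$-sentences.

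For the direction $k'_2 \leq 2k_2 \Rightarrow \mathsf{C}^{(1,k_2)} \not\preceq \mathsf{C}^{(0,k'_2)}$, I take the base graph $G = B^{2k_2}$ and produce the separating pair $X(B^{2k_2})$ and $\widetilde{X}(B^{2k_2})$. The cops have a winning strategy in $\CR_{(1,k_2)}(B^{2k_2})$ of length $2k_2+1$: the reusable cop $x_1$ first occupies the root, then each even round $2j$ places the fresh non-reusable cop $y_j$ on the robber's current vertex at level $2j-1$, and each subsequent odd round $2j+1$ advances $x_1$ two levels deeper into the robber's subtree. In this alternating fashion the robber is forced one level deeper per round until, after $2k_2+1$ rounds, it is trapped on a leaf edge whose two endpoints hold $y_{k_2}$ and $x_1$. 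Conversely, the robber wins $\CR_{(0,k'_2)}(B^{2k_2})$ for $k'_2 \leq 2k_2$: each non-reusable cop rules out a single descending branch, and since $B^{2k_2}$ has depth $2k_2 \geq k'_2$, after every round there remains a child subtree free of cops into which the robber can retreat, keeping an unoccupied endpoint on its edge. Applying \Cref{lem:CR:vs:BP} and \Cref{thm:characterization} (and using $\mathsf{C}^{(0,k'_2)} \preceq \mathsf{C}^{(0,2)}$ to handle the small values of $k'_2$ for which \Cref{lem:CR:vs:BP} does not directly apply) yields the required separating $\mathsf{C}^{(1,k_2)}$-sentence together with the $\mathsf{C}^{(0,k'_2)}$-indistinguishability.

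For the direction $k'_2 > 2k_2 \Rightarrow \mathsf{C}^{(1,k_2)} \preceq \mathsf{C}^{(0,k'_2)}$, it suffices to show $\mathsf{C}^{(1,k_2)} \preceq \mathsf{C}^{(0,2k_2+1)}$, because $\mathsf{C}^{(0,2k_2+1)} \preceq \mathsf{C}^{(0,k'_2)}$ by monotonicity. I prove the stronger syntactic claim that every $\mathsf{C}^{(1,k_2)}$-sentence is equivalent to one in $\mathsf{C}^{(0,2k_2+1)}$, in two steps. First, I flatten nested requantifications of $x_1$: whenever two $\exists x_1$-quantifiers are nested along a single path without an intervening $\exists y_j$, the inner quantifier shadows the outer one, so temporarily renaming the inner bound variable to a fresh name makes the inner scope free of $x_1$, and then the equivalence $\exists x_1(A \land B) \equiv (\exists x_1 A) \land B$ (together with its variants for $\lor$ and via De Morgan through negations) pulls the inner quantifier out to a parallel position. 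After exhaustive flattening, every root-to-leaf path contains at most $k_2$ non-requantifiable $\exists y_j$ quantifiers and, within each of the $k_2+1$ segments of the path between consecutive such quantifiers, at most one $\exists x_1$, so the quantifier rank is at most $2k_2+1$. Second, I rename each quantifier in the parse tree by the nesting depth $d$ of its binding, using the fresh variable $z_{d+1}$; since quantifiers of equal depth are always incomparable in the parse tree, each $z_i$ remains non-requantified, yielding an equivalent sentence in $\mathsf{C}^{(0,2k_2+1)}$.

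The main obstacle is the rigorous justification of the flattening step for the entire $\mathsf{C}^{(1,k_2)}$ syntax. While the basic distributivity $\exists^{\geq n} x_1(A \land B) \equiv (\exists^{\geq n} x_1 A) \land B$ holds when $B$ is free of $x_1$, pushing counting quantifiers through $\lor$ or through $\neg$ requires more careful reasoning, and I also need to check that the depth-based renaming preserves non-requantifiability along every root-to-leaf path in the parse tree rather than only on the longest one.
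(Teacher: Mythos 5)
Your forward direction ($k'_2 \leq 2k_2$) is essentially the paper's own argument: the CFI pair over the perfect binary tree $B^{2k_2}$, the alternating cop strategy in $\CR_{(1,k_2)}(B^{2k_2})$, the robber surviving against $k'_2 \leq 2k_2$ non-reusable cops, and the transfer via \Cref{lem:CR:vs:BP} and \Cref{thm:characterization}; your explicit handling of the $k_1+k_2\geq 2$ hypothesis of \Cref{lem:CR:vs:BP} by monotonicity is a legitimate tightening of what the paper leaves implicit. One imprecision: against exactly $2k_2$ cops, your invariant that ``there remains a child subtree free of cops into which the robber can retreat'' fails at the very last placement --- after $2k_2-1$ cops the robber's cop-free subtree can have height $1$, and when the final cop is announced at its root the robber cannot retreat into any cop-free child subtree. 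It survives only because it ends on a leaf edge whose leaf endpoint is uncovered and no cop moves remain. The invariant worth maintaining is that one endpoint of the robber's edge roots a cop-free (possibly single-vertex) subtree, not the stronger retreat property.

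The backward direction is where you genuinely diverge, and where your proposal has a gap. The paper stays game-theoretic: a Spoiler winning strategy for $\BP_{(1,k_2)}(G,H)$ compresses to at most $2k_2+1$ rounds because two consecutive $x_1$-moves merge into one (the intermediate placement is overwritten, so if Spoiler wins in $r$ rounds with two successive $x_1$-moves, instantiating Duplicator's second bijection as the first shows Spoiler wins in $r-1$ rounds), and a $(2k_2+1)$-round strategy is simulated in $\BP_{(0,k'_2)}(G,H)$ by spending a fresh non-reusable pebble per round. Your route is syntactic and, if completed, proves something stronger --- a uniform translation of every $\mathsf{C}^{(1,k_2)}$-sentence into an equivalent $\mathsf{C}^{(0,2k_2+1)}$-sentence, not merely pairwise preservation of distinguishing power. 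But the flattening step, which you yourself flag as the obstacle, cannot be carried out by distributivity plus De Morgan as you describe: while $\exists^{\geq n}x_1(A\land B)\equiv(\exists^{\geq n}x_1 A)\land B$ holds for $x_1$-free $B$, the disjunctive analogue fails (on a one-vertex graph, $\exists^{\geq 2}x_1(A\lor B)$ is false while $(\exists^{\geq 2}x_1 A)\lor B$ can be true), and negations of counting quantifiers have no clean dual. The correct repair is a case split: if $\theta=\exists^{\geq m}x_1\chi$ occurs inside $\Phi$ with no intervening $y$-quantifier below the outer $\exists^{\geq n}x_1$, then $x_1$ is not free in $\theta$ and all other free variables of $\theta$ are bound outside, so
\[\exists^{\geq n}x_1\,\Phi[\theta]\;\equiv\;\bigl(\theta\land\exists^{\geq n}x_1\,\Phi[\top]\bigr)\lor\bigl(\neg\theta\land\exists^{\geq n}x_1\,\Phi[\bot]\bigr),\]
which extracts $\theta$ through an arbitrary Boolean context in one step and introduces no new requantification, since it only un-nests subformulas. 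Iterating innermost-first terminates, and the rest of your argument (at most $k_2$ $y$-quantifiers per path, at most one $x_1$-quantifier per segment, then depth-renaming, which is exactly the paper's $\mathsf{C}^{(0,r)}\equiv\mathsf{C}_r$) is correct. With that extraction lemma supplied your proof is complete; without it, ``pulls the inner quantifier out to a parallel position'' is an assertion rather than an argument, and it is precisely the point where the paper's two-line game compression does the real work.
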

\begin{proof}
    In the game $\CR_{(1,k_2)}(B^{2 k_2})$, the cops have a winning strategy by alternately using non-reusable cops and the reusable cop.
    In the game $\CR_{(0,2k_2)}(B^{2 k_2})$ the robber has a winning strategy as the non-reusable cops are exhausted before the robber is caught.
    This yields $\mathsf{C}^{(1,k_2)} \not\preceq \mathsf{C}^{(0,2k_2)}$. For $k'_2 \leq 2 k_2$ we get $\mathsf{C}^{(1,k_2)} \not\preceq \mathsf{C}^{(0,k'_2)}$ since the robber wins with the same strategy in $\CR_{(0,k'_2)}(B^{2 k_2})$.
    For $k'_2 > 2k_2$, let (S) have a winning strategy for $\BP_{(1,k_2)}(G,H)$.
    Then (S) has a winning strategy for $\BP^{2k_2+1}_{(1,k_2)}(G,H)$ since consecutive moves involving the pebble pair $x_1$ can be replaced by a single move instead.
    The winning strategy for (S) in $\BP^{2k_2+1}_{(1,k_2)}(G,H)$ directly yields a winning strategy for (S) in $\BP_{(0,k'_2)}(G,H)$:
    For every pebble pair played by (S) in $\BP^{2k_2+1}_{(1,k_2)}(G,H)$, the player (S) can use a new pair in $\BP_{(0,k'_2)}(G,H)$.
\end{proof}

With the previous lemmas we can determine the relation of the logics $\mathsf{C}^{(k_1,k_2)}$ and $\mathsf{C}^{(k'_1,k'_2)}$ for any given combination of parameters.

\begin{theorem} \label{thm:hierarchy}
    For all $k_1,k_2 \in \mathbb{N}$ and $k'_1,k'_2 \in \mathbb{N}$ with $k_1+k_2,k'_1+k'_2 \geq 2$ it holds that
    $\mathsf{C}^{(k_1,k_2)} \prec \mathsf{C}^{(k'_1,k'_2)}$ if and only if one of the following assertions holds:
    \begin{enumerate}
        \item $k_1 < k'_1$ and $k_1+k_2 \leq k'_1+k'_2$,
        \item $k_1 \leq k'_1$ and $k_1+k_2 < k'_1+k'_2$, or
        \item $k_1=1$, $k'_1 = 0$, and $k'_2 > 2k_2$.
    \end{enumerate}
    Furthermore, it holds that $\mathsf{C}^{(k_1,k_2)} \equiv \mathsf{C}^{(k'_1,k'_2)}$ if and only if~$(k_1,k_2) = (k'_1,k'_2)$.
\end{theorem}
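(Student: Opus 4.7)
The plan is to reduce the theorem to a complete characterization of when \(\mathsf{C}^{(k_1,k_2)} \preceq \mathsf{C}^{(k'_1,k'_2)}\) holds, obtained by combining the trivial inclusions \(\mathsf{C}^{(k_1,k_2)} \preceq \mathsf{C}^{(k_1,k_2+1)} \preceq \mathsf{C}^{(k_1+1,k_2)}\) with the three separation lemmas \Cref{lem:reusable:advantage}, \Cref{lem:capacity:advantage}, and \Cref{lem:special_case}. Concretely, my first step is to prove that \(\mathsf{C}^{(k_1,k_2)} \preceq \mathsf{C}^{(k'_1,k'_2)}\) holds if and only if either \emph{(i)} \(k_1 \leq k'_1\) and \(k_1+k_2 \leq k'_1+k'_2\), or \emph{(ii)} \(k_1 = 1\), \(k'_1 = 0\), and \(k'_2 > 2 k_2\). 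Sufficiency of \emph{(i)} follows from iterating the trivial chain; sufficiency of \emph{(ii)} is implicit in the proof of \Cref{lem:special_case}, which normalizes a Spoiler strategy in \(\BP_{(1,k_2)}\) to length at most \(2k_2+1\) and then transfers it to \(\BP_{(0,k'_2)}\) by using a fresh non-reusable pebble pair for each round. Necessity is verified contrapositively: if neither \emph{(i)} nor \emph{(ii)} holds, then at least one of \Cref{lem:reusable:advantage} (when \(k_1 > \max(k'_1,1)\)), \Cref{lem:capacity:advantage} (when \(k_1+k_2 > k'_1+k'_2\)), or \Cref{lem:special_case} (when \(k_1=1\), \(k'_1=0\), \(k'_2 \leq 2k_2\)) separates the two logics.

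For the implication \((1) \lor (2) \lor (3) \Rightarrow \mathsf{C}^{(k_1,k_2)} \prec \mathsf{C}^{(k'_1,k'_2)}\), I verify each condition. In Conditions~1 and~2, type~\emph{(i)} of the characterization yields \(\preceq\). Strictness in Condition~2 is immediate from \Cref{lem:capacity:advantage}. Condition~1 needs a small split: if \(k'_1 \geq 2\), \Cref{lem:reusable:advantage} applies directly; if \(k'_1 = 1\), then \(k_1 = 0\) and the hypothesis \(k_1+k_2 \geq 2\) forces \(k_2 \geq 2\), which together with \(k_1+k_2 \leq k'_1+k'_2\) gives \(k'_2 \geq 1\), hence \(k_2 \leq 1+k'_2 \leq 2k'_2\), and \Cref{lem:special_case} applies. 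In Condition~3, \(\preceq\) comes from type~\emph{(ii)}, and strictness follows from \Cref{lem:capacity:advantage} because \(k_2 \geq 1\) (again from \(k_1+k_2 \geq 2\)) gives \(k'_2 > 2k_2 \geq k_2+1 = k_1+k_2\).

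The reverse implication is then a direct unpacking of the characterization. Assuming \(\mathsf{C}^{(k_1,k_2)} \prec \mathsf{C}^{(k'_1,k'_2)}\), the \(\preceq\)-part places us in type~\emph{(i)} or~\emph{(ii)}. In type~\emph{(i)}, the failure \(\mathsf{C}^{(k'_1,k'_2)} \not\preceq \mathsf{C}^{(k_1,k_2)}\) excludes equality, yielding either \(k_1 < k'_1\) or \(k_1+k_2 < k'_1+k'_2\), that is, Condition~1 or~2; here one also checks that the alternative way \(A' \preceq A\) could hold (via type~\emph{(ii)} with roles swapped) is already arithmetically incompatible with \(A \preceq A'\) being type~\emph{(i)} under the standing hypotheses. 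Type~\emph{(ii)} is precisely Condition~3. The \emph{furthermore} part drops out of the same characterization: \(\mathsf{C}^{(k_1,k_2)} \equiv \mathsf{C}^{(k'_1,k'_2)}\) requires both directions of \(\preceq\), and inspecting the four combinations of types shows that only the both-\emph{(i)} case is arithmetically consistent, which forces \((k_1,k_2) = (k'_1,k'_2)\).

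The main obstacle will be managing the boundary regime \(k_1, k'_1 \in \{0,1\}\), where \Cref{lem:reusable:advantage} (which requires \(k_1 > 1\)) is unavailable and strictness must instead be obtained from \Cref{lem:special_case} with the right arithmetic. The standing hypotheses \(k_1+k_2 \geq 2\) and \(k'_1+k'_2 \geq 2\) are load-bearing in several such invocations, and the proof would fail without them; so I must justify them carefully at each boundary case rather than treating them as ornamental.
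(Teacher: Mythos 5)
Your proposal is correct and follows exactly the route the paper intends: the paper states \Cref{thm:hierarchy} without an explicit proof, remarking only that it follows ``with the previous lemmas,'' i.e.\ by combining the trivial inclusions $\mathsf{C}^{(k_1,k_2)} \preceq \mathsf{C}^{(k_1,k_2+1)} \preceq \mathsf{C}^{(k_1+1,k_2)}$ with \Cref{lem:reusable:advantage}, \Cref{lem:capacity:advantage}, and \Cref{lem:special_case}, which is precisely your characterization-plus-case-analysis. Your handling of the boundary cases ($k'_1=1$ forcing $k_2 \leq 2k'_2$ via the standing hypotheses, and the arithmetic exclusion of the mixed type combinations in the ``furthermore'' part) is sound and fills in the details the paper leaves implicit.
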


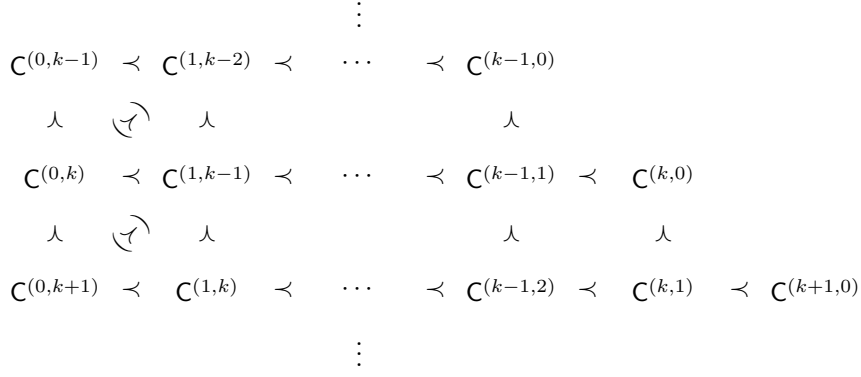
\begin{figure}
    \centering
    \begin{tikzpicture}
        \node at (0, 3) {$\vdots$};
        \node (0, k-1) at (-4,2.25) {$\mathsf{C}^{(0,k-1)}$};
        \node at (-3 , 2.25) {$\prec$};
        \node (1, k-2) at (-2,2.25) {$\mathsf{C}^{(1, k-2)}$};
        \node at (-1 , 2.25) {$\prec$};
        \node at (0, 2.25) {$\cdots$};
        \node at (1 , 2.25) {$\prec$};
        \node (k-1, 0) at (2, 2.25) {$\mathsf{C}^{(k-1,0)}$};

        \node [rotate=270] at (-4, 1.5) {$\prec$};
        \node [rotate=270] at (-2, 1.5) {$\prec$};
        \node [rotate=270] at (2, 1.5) {$\prec$};

        \node (0, k) at (-4,0.75) {$\mathsf{C}^{(0,k)}$};
        \node at (-3 , 0.75) {$\prec$};
        \node (1, k-1) at (-2,0.75) {$\mathsf{C}^{(1, k-1)}$};
        \node at (-1 , 0.75) {$\prec$};
        \node at (0, 0.75) {$\cdots$};
        \node at (1 , 0.75) {$\prec$};
        \node (k-1, 1) at (2 ,0.75) {$\mathsf{C}^{(k-1,1)}$};
        \node at (3 , 0.75) {$\prec$};
        \node (k, 0) at (4, 0.75) {$\mathsf{C}^{(k,0)}$};

        \node [rotate=270] at (-4, 0) {$\prec$};
        \node [rotate=270] at (-2, 0) {$\prec$};
        \node [rotate=270] at (2, 0) {$\prec$};
        \node [rotate=270] at (4, 0) {$\prec$};

        \node (0, k+1) at (-4,-0.75) {$\mathsf{C}^{(0,k+1)}$};
        \node at (-3 , -0.75) {$\prec$};
        \node (1, k) at (-2,-0.75) {$\mathsf{C}^{(1, k)}$};
        \node at (-1 , -0.75) {$\prec$};
        \node at (0, -0.75) {$\cdots$};
        \node at (1 , -0.75) {$\prec$};
        \node (k-1, 1) at (2,-0.75) {$\mathsf{C}^{(k-1,2)}$};
        \node at (3 , -0.75) {$\prec$};
        \node (k, 0) at (4,-0.75) {$\mathsf{C}^{(k,1)}$};
        \node at (5 , -0.75) {$\prec$};
        \node (k, 0) at (6, -0.75) {$\mathsf{C}^{(k+1,0)}$};

        \node at (0, -1.5) {$\vdots$};

        \node [rotate=225] at (-3, 1.5) {$(\prec)$};
        \node [rotate=225] at (-3, 0) {$(\prec)$};
    \end{tikzpicture}
    \caption{A graphical depiction of the hierarchy of logics from \Cref{thm:hierarchy}}
    \label{fig:hierarchy}
\end{figure}

This settles the question of how the use of non-requantifiable variables affects the expressive power of the logic.
The resulting hierarchy is depicted in \Cref{fig:hierarchy}.

For the investigation of the logic $\mathsf{C}^{(k_1,k_2)}$ for fixed parameters, it is also of interest how the non-requantifiable variables behave in concrete formulas of the logic.
This relates closely to asking whether there are normal forms for the logic $\mathsf{C}^{(k_1,k_2)}$ with respect to reusability.
We give a precise answer to this question that rules out many such normal forms.

\begin{definition}
    We construct the graph $\dot{G}_{h \times \ell}$ from the grid graph $G_{h \times \ell}$ by adding one additional vertex $b$,
    which we call \textit{bridge vertex}, and the edges $\{\{v_{i, \lfloor \frac{\ell}{2} \rfloor}, b\} : i \in [h]\} \cup \{\{b, v_{i, \lfloor \frac{\ell}{2}+1 \rfloor}\} : i \in [h]\}$ to $G_{h \times \ell}$.
    Using this modified grid, we define the following base graph:

    For $\ell \geq 2, h\geq 1$ and $d \geq 1$ we obtain the graph $B^d_{h \times \ell}$ by replacing every vertex of a perfect binary tree~$B^d$ of depth $d$
    by a grid $\dot{G}_{h \times \ell}$ and connect adjacent grids row-wise (see \Cref{fig:binary:tree}).
\end{definition}

\begin{theorem} \label{thm:normal:form}
    For all $k_1, k_2 \geq 1$ and $r \geq 1$ there exist graphs $G$ and $H$ such that (S) has a winning strategy for $\BP_{(k_1,k_2)}(G,H)$
    and in every winning strategy (S) must (re)use every reusable pebble pair at least $r$ times (once if $k_1=1$) before using a new non-reusable pebble pair.
\end{theorem}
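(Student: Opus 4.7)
The plan is to exhibit the required graphs via the CFI construction applied to the base graph $\Gamma \coloneqq B^{k_2}_{h \times \ell}$, with $h = k_1 - 1$ (in the case $k_1 \geq 2$; the case $k_1 = 1$ is handled separately below) and $\ell$ chosen as a suitably large multiple of $k_1$, say $\ell = 4rk_1$. We set $G \coloneqq X(\Gamma)$ and $H \coloneqq \widetilde{X}(\Gamma)$. By \Cref{thm:characterization} and \Cref{lem:CR:vs:BP}, it suffices to prove the analogous statement for the cops-and-robber game $\CR_{(k_1,k_2)}(\Gamma)$: the cops have a winning strategy, and every such winning strategy must use each reusable cop at least $r$ times between any two consecutive first-uses of non-reusable cops (and before the first such use).

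I would prove the upper bound by induction on the binary-tree depth $k_2$. Within the root grid of $\Gamma$, the cops use their $k_1$ reusable cops to slide a column-barrier of height $h = k_1 - 1$ across the grid: at any moment, $h$ cops form the barrier while the remaining reusable cop serves as an advance guard that allows rotating a previously-barrier cop into the next column without opening a persistent gap. Sliding continues until the barrier passes the bridge vertex $b$, confining the robber to one half of the grid. A single non-reusable cop placed on the connecting vertex to the unwanted child grid then commits the robber to exactly one of the two subtrees. With one fewer non-reusable cop remaining, the cops recurse on the surviving subtree, which is isomorphic to $B^{k_2-1}_{h \times \ell}$; the base case $k_2 = 0$ is the standard grid game with $k_1 = h+1$ reusable cops.

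For the lower bound I would describe an explicit robber strategy maintaining the invariant that some entirely cop-free subtree at the current ``active'' grid is reachable from the current robber edge. Decreasing the depth of this active subtree is the only way the cops can reduce the robber's freedom in the long run, so each use of a new non-reusable cop must follow such a decrease; and decreasing the active depth requires, in turn, separating the robber from one of the two child grids via a barrier in the current grid. Crucially, no new non-reusable cop may be spent inside the current grid: if it were, the cops would be left with strictly fewer than the $k_2 - (\text{depth already descended})$ non-reusable cops needed for the remaining tree depth, and the robber's subtree-escape strategy (by the same induction) would then succeed. The barrier must therefore consist solely of reusable cops—i.e.\ the full $h = k_1 - 1$ cops plus at least one more to advance it—and since $b$ lies in the middle of the length-$\ell$ grid, sliding the barrier across $\Omega(\ell)$ columns is necessary. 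Each column-advancement is one (re)use of a reusable cop, and a rotation argument (any cop left idle for too long forces a persistent gap the robber could exploit) shows that each of the $k_1$ reusable cops is (re)used at least $\ell/(2k_1) \geq r$ times between successive new non-reusable placements.

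The special case $k_1 = 1$ is handled analogously with $h = 1$, so the ``grids'' degenerate to paths $\dot{G}_{1 \times \ell}$; with only a single reusable cop, no barrier is possible, but any progress still requires at least one movement of that cop, yielding the ``once if $k_1 = 1$'' bound. The main obstacle is coupling the within-grid barrier-movement argument with the inductive robber strategy across the binary tree: formally ensuring that non-reusable cops cannot be wasted inside a single grid without losing the overall game, and that every one of the $k_1$ reusable cops is forced into the barrier rotation (rather than a single cop being over-used while others sit idle). The latter follows because with only $k_1 - 1$ reusable cops actively participating, a column-barrier in a grid of height $k_1 - 1$ cannot be advanced at all, so every winning strategy must activate and repeatedly reuse all $k_1$ of the reusable cops.
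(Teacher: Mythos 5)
Your proposal follows the paper's architecture (CFI graphs over a tree-of-grids base graph, reduction to the cops-and-robber game via \Cref{thm:characterization} and \Cref{lem:CR:vs:BP}, a sliding column-barrier, and the count \(\ell/(2k_1)\geq r\)), but it contains a genuine gap at the heart of the mechanism: you have misidentified what forces the cops to spend non-reusable cops. In the paper's proof, the bridge vertex \(b\) of each grid \(\dot{G}_{h\times\ell}\) is precisely the obstruction that prevents the barrier from crossing the middle of a grid with the \(k_1\) reusable cops alone. During the hand-off of the barrier from column \(\lfloor\ell/2\rfloor\) to column \(\lfloor\ell/2\rfloor+1\) (for \(k_1\geq 3\), i.e.\ \(h\geq 2\)), there is a moment when some vertex of column \(\lfloor\ell/2\rfloor\) and some vertex of column \(\lfloor\ell/2\rfloor+1\) are both free, and the robber slips through the bypass \(\text{(right side)}\to(\text{free vertex of column }\lfloor\ell/2\rfloor+1)\to b\to(\text{free vertex of column }\lfloor\ell/2\rfloor)\to\text{(left side)}\). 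Hence an additional \emph{static} cop must occupy \(b\) throughout the transition, and since all \(k_1\) reusable cops are engaged (\(k_1-1\) in the barrier, one ferrying), that cop must be non-reusable: this is exactly how the construction consumes one non-reusable cop per grid traversed. Your strategy instead asserts that the barrier ``slides past the bridge vertex'' with reusable cops only (false for \(k_1\geq 3\)) and spends the non-reusable cops on ``the connecting vertex to the unwanted child grid''---but no such single vertex exists: adjacent grids are connected row-wise by \(k_1-1\) edges, so one cop cannot separate a child subtree, and no blocking is needed anyway, since the barrier standing on the parent's last column already disconnects the two children from each other.

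This error undermines the lower bound as well. Your induction hinges on the claim that a wasted non-reusable cop leaves ``strictly fewer than the \(k_2-(\text{depth already descended})\) non-reusable cops needed for the remaining tree depth,'' i.e.\ that each remaining level necessarily consumes one non-reusable cop; but you never establish \emph{why} a level consumes one, and under your own upper-bound claims it would not: if the barrier really could pass bridge vertices using only reusable cops, the cops could sweep the entire tree without ever touching a non-reusable cop, the robber strategy you describe could not exist, and the theorem would be vacuous on your graphs. The paper's lower bound works because wasting a non-reusable cop in the wrong subtree leaves too few of them to cover the bridge vertices on the path into the subtree the robber then selects. Finally, your \(k_1=1\) case also fails: on a tree of long paths of length \(\Theta(r)\), a single reusable cop plus \(k_2\) non-reusable cops cannot even traverse one path (the frontier can only advance a bounded amount per permanently placed cop, and it collapses back to the last non-reusable cop whenever \(x_1\) is lifted), so the cops have no winning strategy at all; the paper instead uses the plain binary tree \(B^{2k_2}\) for \(k_1=1\), where the alternation ``\(x_1\), then a fresh \(y_j\)'' down the tree is both sufficient and forced.
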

\begin{proof}
    For \(G\) and \(H\), we again choose the twisted and untwisted CFI graphs over an appropriate base graph, and analyze winning strategies in the cops-and-robbers game.

    For $k_1 > 1$, we consider the game $\CR_{(k_1,k_2)}(B^{k_2+1}_{(k_1 -1) \times 2r})$, see \Cref{fig:binary:tree}.
    The cops have the following winning strategy: First, they build a barrier in the root grid (behind the bridge vertex) by occupying one full column using $k_1$ reusable cops.
    The barrier can then be moved towards the robber using the additional reusable cop. 
    When the barrier reaches the last column, the two subtrees induced by the children of the root grid are disconnected components with respect to the cops.
    Thus, the robber has to choose an edge in one of these components to escape to.
    The cops move the barrier into the corresponding subtree, which essentially results in the game $\CR_{(k_1,k_2)}(B^{k_2}_{(k_1-1)\times2r})$.
    In every grid of the tree, the cops encounter a bridge vertex that can be covered by one of the $k_2$ non-reusable cops.
    The game continues inductively for $\Omega(r k_2)$ rounds, until the cops use the last remaining non-reusable cop to cover the bridge vertex in a leaf grid.
    The barrier can be moved to the end of the grid and the robber will be caught.
    Now assume a new non-reusable cop $y$ has been used before all reusable cops have been (re)used $r$ times at some point of the game.
    Then the barrier was not moved out of the current grid at this point, since all reusable cops have to be moved at least $r$ times to achieve this.
    Hence, the robber has not chosen a new subtree so far and can pick an edge in a subtree that does not contain $y$.
    The cops need to move the barrier into that subtree and use non-reusable cops for the bridge vertices.
    Since $y$ was used in another subtree, at some point there will be no non-reusable cops left to cover a bridge vertex and the barrier
    cannot be moved further without breaking down. Thus, the robber can escape indefinitely.

    For $k_1 = 1$ we consider the game $\CR_{(1,k_2)}(B^{2 k_2})$. 
    The cops have the following winning strategy: 
    First, the reusable cop $x_1$ is placed in the root node. 
    This disconnects the subtrees induced by the two children of the root node for the robber, and the robber has to choose an edge in one of the subtrees of depth $2k_2-1$.
    Accordingly, a non-reusable cop $y_1$ is placed on the node inducing that subtree, which again disconnects two subtrees of depth $2 k_2 -2$.
    The cop $x_1$ can be picked up again from the root node to be placed on the corresponding subtree.
    Inductively, the cops alternately use non-reusable cops $y_j$ and the reusable cop $x_1$ to cover the next child node.
    After $2k_2$ moves, the induced subtree is of depth $0$ and the robber is caught in the edge to a leaf node. 
    If two non-reusable cops are used consecutively, similar to the case $k_1 > 1$, there are no non-reusable cops left at depth $2k_2-1$ and the remaining reusable cop does not suffice to catch the robber. 
\end{proof}

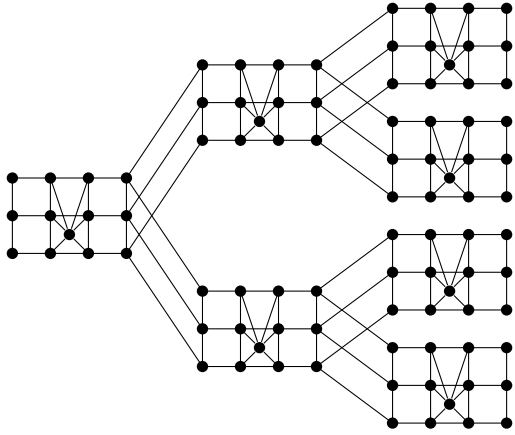
\begin{figure}
    \centering
    \begin{tikzpicture}[scale=0.5]
        \def\maxX{3} 
        \def\maxY{2}


        \foreach \x in {0,...,\maxX}{
                \foreach \y in {0,...,\maxY}{
                        \node[circle, fill = black, inner sep=1.5pt] (\x,\y) at (\x,\y) {};
                    }
            }
        \foreach \x in {0,...,\maxX}{
                \draw (\x, 0) to (\x, \maxY);
            }
        \foreach \y in {0,...,\maxY}{
                \draw (0,\y) to (\maxX, \y);
            }
        \node [circle, fill = black, inner sep=1.5pt] (b) at (\maxX / 2, 0.5) {};
        \foreach \y in {0,...,\maxY}{
                \draw (b) to ({(\maxX /2) - 0.5}, \y);
                \draw (b) to ({(\maxX /2) + 0.5}, \y);
            }


        \foreach \x in {0,...,\maxX}{
                \foreach \y in {0,...,\maxY}{
                        \node[circle, fill = black, inner sep=1.5pt] ({5+\x},{3+\y}) at ({5+\x},{3+\y}) {};
                    }
            }
        \foreach \x in {0,...,\maxX}{
                \draw ({5+\x}, 3) to ({5+\x}, {3+\maxY});
            }
        \foreach \y in {0,...,\maxY}{
                \draw (5,{3+\y}) to ({5+\maxX}, {3+\y});
            }
        \node [circle, fill = black, inner sep=1.5pt] (b_up) at ({5 +(\maxX / 2)}, {3.5}) {};
        \foreach \y in {0,...,\maxY}{
                \draw (b_up) to ({5 + (\maxX /2) - 0.5}, {3+\y});
                \draw (b_up) to ({5+ (\maxX /2) + 0.5}, {3+\y});
            }


        \foreach \x in {0,...,\maxX}{
                \foreach \y in {0,...,\maxY}{
                        \node[circle, fill = black, inner sep=1.5pt] ({10+\x},{4.5+\y}) at ({10+\x},{4.5+\y}) {};
                    }
            }
        \foreach \x in {0,...,\maxX}{
                \draw ({10+\x}, 4.5) to ({10+\x}, {4.5+\maxY});
            }
        \foreach \y in {0,...,\maxY}{
                \draw (10,{4.5+\y}) to ({10+\maxX}, {4.5+\y});
            }
        \node [circle, fill = black, inner sep=1.5pt] (b_up_up) at ({10 +(\maxX / 2)}, {5}) {};
        \foreach \y in {0,...,\maxY}{
                \draw (b_up_up) to ({10 + (\maxX /2) - 0.5}, {4.5+\y});
                \draw (b_up_up) to ({10+ (\maxX /2) + 0.5}, {4.5+\y});
            }


        \foreach \x in {0,...,\maxX}{
                \foreach \y in {0,...,\maxY}{
                        \node[circle, fill = black, inner sep=1.5pt] ({10+\x},{1.5+\y}) at ({10+\x},{1.5+\y}) {};
                    }
            }
        \foreach \x in {0,...,\maxX}{
                \draw ({10+\x}, 1.5) to ({10+\x}, {1.5+\maxY});
            }
        \foreach \y in {0,...,\maxY}{
                \draw (10,{1.5+\y}) to ({10+\maxX}, {1.5+\y});
            }
        \node [circle, fill = black, inner sep=1.5pt] (b_up_up) at ({10 +(\maxX / 2)}, {2}) {};
        \foreach \y in {0,...,\maxY}{
                \draw (b_up_up) to ({10 + (\maxX /2) - 0.5}, {1.5+\y});
                \draw (b_up_up) to ({10+ (\maxX /2) + 0.5}, {1.5+\y});
            }


        \foreach \x in {0,...,\maxX}{
                \foreach \y in {0,...,\maxY}{
                        \node[circle, fill = black, inner sep=1.5pt] ({5+\x},{-3+\y}) at ({5+\x},{-3+\y}) {};
                    }
            }
        \foreach \x in {0,...,\maxX}{
                \draw ({5+\x}, -3) to ({5+\x}, {-3+\maxY});
            }
        \foreach \y in {0,...,\maxY}{
                \draw (5,{-3+\y}) to ({5+\maxX}, {-3+\y});
            }
        \node [circle, fill = black, inner sep=1.5pt] (b_down) at ({5 +(\maxX / 2)}, {-2.5}) {};
        \foreach \y in {0,...,\maxY}{
                \draw (b_down) to ({5 + (\maxX /2) - 0.5}, {-3+\y});
                \draw (b_down) to ({5+ (\maxX /2) + 0.5}, {-3+\y});
            }


        \foreach \x in {0,...,\maxX}{
                \foreach \y in {0,...,\maxY}{
                        \node[circle, fill = black, inner sep=1.5pt] ({10+\x},{-1.5+\y}) at ({10+\x},{-1.5+\y}) {};
                    }
            }
        \foreach \x in {0,...,\maxX}{
                \draw ({10+\x}, -1.5) to ({10+\x}, {-1.5+\maxY});
            }
        \foreach \y in {0,...,\maxY}{
                \draw (10,{-1.5+\y}) to ({10+\maxX}, {-1.5+\y});
            }
        \node [circle, fill = black, inner sep=1.5pt] (b_down_up) at ({10 +(\maxX / 2)}, {-1}) {};
        \foreach \y in {0,...,\maxY}{
                \draw (b_down_up) to ({10 + (\maxX /2) - 0.5}, {-1.5+\y});
                \draw (b_down_up) to ({10+ (\maxX /2) + 0.5}, {-1.5+\y});
            }


        \foreach \x in {0,...,\maxX}{
                \foreach \y in {0,...,\maxY}{
                        \node[circle, fill = black, inner sep=1.5pt] ({10+\x},{-1.5+\y}) at ({10+\x},{-4.5+\y}) {};
                    }
            }
        \foreach \x in {0,...,\maxX}{
                \draw ({10+\x}, -4.5) to ({10+\x}, {-4.5+\maxY});
            }
        \foreach \y in {0,...,\maxY}{
                \draw (10,{-4.5+\y}) to ({10+\maxX}, {-4.5+\y});
            }
        \node [circle, fill = black, inner sep=1.5pt] (b_down_down) at ({10 +(\maxX / 2)}, {-4}) {};
        \foreach \y in {0,...,\maxY}{
                \draw (b_down_down) to ({10 + (\maxX /2) - 0.5}, {-4.5+\y});
                \draw (b_down_down) to ({10+ (\maxX /2) + 0.5}, {-4.5+\y});
            }


        \draw (3,0) to (5,3);
        \draw (3,1) to (5,4);
        \draw (3,2) to (5,5);

        \draw (3,0) to (5,-3);
        \draw (3,1) to (5,-2);
        \draw (3,2) to (5,-1);


        \draw (8,3) to (10,4.5);
        \draw (8,4) to (10,5.5);
        \draw (8,5) to (10,6.5);

        \draw (8,3) to (10,1.5);
        \draw (8,4) to (10,2.5);
        \draw (8,5) to (10,3.5);


        \draw (8,-3) to (10,-1.5);
        \draw (8,-2) to (10,-0.5);
        \draw (8,-1) to (10,0.5);

        \draw (8,-3) to (10,-4.5);
        \draw (8,-2) to (10,-3.5);
        \draw (8,-1) to (10,-2.5);
    \end{tikzpicture}
    \caption{A drawing of the graph $B^3_{3 \times 4}$}
    \label{fig:binary:tree}
\end{figure}

Intuitively, this result can be translated into the language of logic as follows:
In order to achieve the full expressive power of the logic $\mathsf{C}^{(k_1,k_2)}$, it is indispensable to use rather complicated quantification patterns.
That is, all requantifiable variables $x_i$ must be requantified arbitrarily often before a new non-requantifiable variable $y_j$ is quantified.
More precisely, by~\Cref{thm:characterization} we have the following:

\begin{corollary}\label{cor:no:normalform}
    For all $k_1, k_2 \geq 1$ and $r \geq 1$ there exist graphs $G, H$ such that $G$ and $H$ are not $\mathsf{C}^{(k_1,k_2)}$-equivalent and for every formula $\varphi \in \mathsf{C}^{(k_1,k_2)}$
    that distinguishes $G$ and $H$ the following holds:
    There exists a sequence of subformulas $\exists^{\geq n_1} y_{j_1} \psi_1, \dotsc, \exists^{\geq n_{k_2}} y_{j_{k_2}} \psi_{k_2}$ of $\varphi$ such that
    $\qr(\psi_{k_2}) = k_1$ and
    the formula $\psi_{\ell+1}$ is a subformula of $\psi_{\ell}$ with
    $$\qr(\psi_{\ell}) \geq \begin{cases} \qr(\psi_{\ell+1}) + k_1r & \text{ if } k_1 >1 \\
              \qr(\psi_{\ell+1}) +1     & \text{ if } k_1 =1\end{cases}$$ for $\ell \in [k_2-1]$.
    Moreover, between the quantifications $\exists^{\geq n_\ell} y_{j_\ell} \psi_\ell$ and $\exists^{\geq n_{\ell+1}} y_{j_{\ell+1}} \psi_{\ell+1}$ all requantifiable variables have to be requantified $r$ times (once if $k_1=1$) in \(\psi_\ell\).
\end{corollary}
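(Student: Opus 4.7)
The plan is to derive the corollary directly from Theorem \ref{thm:normal:form} via the correspondence established in Theorem \ref{thm:characterization}. Take $G$ and $H$ to be the pair of graphs produced by Theorem \ref{thm:normal:form}. Since Spoiler wins $\BP_{(k_1,k_2)}(G,H)$, the equivalence of items (1) and (2) of Theorem \ref{thm:characterization} applied to the empty initial configuration yields a distinguishing formula $\varphi \in \mathsf{C}^{(k_1,k_2)}$; in particular, $G$ and $H$ are not $\mathsf{C}^{(k_1,k_2)}$-equivalent.

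For the structural claim on an arbitrary distinguishing $\varphi$, I would recall that the $(\neg 1 \Rightarrow \neg 2)$ direction in the proof of Theorem \ref{thm:characterization} canonically extracts a winning Spoiler strategy from $\varphi$: recursively descending through the formula, at every quantifier $\exists^{\geq N} z\, \chi$ Spoiler picks up pebble $z$, and against Duplicator's bijection $f$ selects a witness $v$ with $G,\alpha[z/v] \models \chi$ but $H,\beta[z/f(v)] \not\models \chi$; Boolean connectives are resolved by descending into a subformula that still distinguishes the configurations.

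To obtain the sequence $\psi_1,\dotsc,\psi_{k_2}$, I would fix an arbitrary Duplicator strategy and consider the resulting play. This play traces a single root-to-leaf branch of the formula tree of $\varphi$, and each Spoiler move sits below exactly one quantifier on this branch. Because the induced strategy is winning, Theorem \ref{thm:normal:form} forces Spoiler to eventually introduce all $k_2$ non-reusable pebbles (say in order $y_{j_1},\dotsc,y_{j_{k_2}}$), and between the placements of $y_{j_\ell}$ and $y_{j_{\ell+1}}$ Spoiler must requantify each of the $k_1$ reusable pebbles at least $r$ times (once, if $k_1=1$). Let $\exists^{\geq n_\ell} y_{j_\ell}\, \psi_\ell$ be the quantifier on the traced branch corresponding to the placement of $y_{j_\ell}$. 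The nesting of quantifiers along the branch gives that $\psi_{\ell+1}$ is a subformula of $\psi_\ell$, while the intermediate requantifications contribute at least $k_1 r$ (respectively $1$) additional nested quantifiers, yielding the stated lower bounds on $\qr(\psi_\ell)$. Finally, after the last non-reusable pebble is placed Spoiler needs exactly $k_1$ more moves to complete the barrier in the base case of the proof of Theorem \ref{thm:normal:form}, which pins down $\qr(\psi_{k_2}) = k_1$.

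The main obstacle I expect is the bookkeeping at Boolean connectives, since the Spoiler strategy induced by $\varphi$ ramifies through $\land$, $\lor$, and $\neg$, and \emph{a priori} different Duplicator responses send the play into different subformulas of $\varphi$. I would circumvent this by committing to one specific play so that the induced branch of $\varphi$ — and hence each $\psi_\ell$ — is unambiguously defined; once the branch is fixed, translating the pebble-reuse requirements of Theorem \ref{thm:normal:form} into the quantifier-rank bookkeeping along this branch is routine.
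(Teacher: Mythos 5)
Your high-level route is exactly the paper's: the corollary is intended to follow by translating \Cref{thm:normal:form} through the game--logic correspondence of \Cref{thm:characterization}, using the Spoiler-strategy extraction from the \((\neg 1 \Rightarrow \neg 2)\) direction. The execution, however, has a genuine gap at the decisive step, namely where you ``fix an arbitrary Duplicator strategy and consider the resulting play.'' Against an arbitrary Duplicator, the inference ``the induced strategy is winning, hence \Cref{thm:normal:form} forces Spoiler to eventually introduce all \(k_2\) non-reusable pebbles with the reuse pattern in between'' is unjustified. The assertion of \Cref{thm:normal:form} that every winning strategy \emph{must} follow the pattern is, when unpacked via its proof, a statement about plays against best resistance: the proof exhibits a robber strategy that punishes any deviation, and this only constrains plays in which the robber (equivalently, via \Cref{lem:CR:vs:BP}, Duplicator) actually follows that strategy. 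A winning Spoiler strategy may perfectly well deviate from the pattern --- and still win --- in plays where Duplicator has already blundered, and in such plays the game can end after very few rounds, so the traced branch of \(\varphi\) is short and exhibits none of the required structure. (Note also that the bare statement of \Cref{thm:normal:form} never asserts that all \(k_2\) non-reusable pebbles get used; that too must be read off from its proof.) The missing idea is therefore to commit to the \emph{specific} Duplicator strategy obtained by lifting the robber's resistance strategy from the proof of \Cref{thm:normal:form} through \Cref{lem:CR:vs:BP}; only in the play against this Duplicator is the \(\varphi\)-induced Spoiler strategy guaranteed to place all \(k_2\) non-reusable pebbles, each preceded by at least \(r\) (re)uses of every reusable pebble, which is what makes the traced branch long and structured enough to read off \(\psi_1,\dotsc,\psi_{k_2}\).

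A second flaw concerns your justification of \(\qr(\psi_{k_2}) = k_1\). You claim that after the last non-reusable pebble Spoiler needs ``exactly \(k_1\) more moves to complete the barrier.'' In the construction for \(k_1>1\), the last non-reusable cop covers the bridge vertex in the \emph{middle} of the leaf grid \(\dot{G}_{(k_1-1)\times 2r}\), after which the barrier must still be pushed through half of that grid, i.e.\ roughly \(r(k_1-1)\) further moves, not \(k_1\). More fundamentally, an argument of this type can only ever certify \emph{lower} bounds on quantifier rank: if Spoiler still needs \(m\) moves in the play, one gets \(\qr(\psi_{k_2}) \geq m\), never an exact value, since a distinguishing formula may contain arbitrarily many redundant quantifications below its innermost \(y\)-quantifier. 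So this step of your argument fails as stated and needs a different treatment.
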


The necessity of this pattern of (re)quantification rules out the possibility of various normal forms with respect to requantification for $\mathsf{C}^{(k_1,k_2)}$ one might have hoped to have.
In particular, it is not sufficient to quantify all non-requantifiable variables directly one after the other.

Regarding classical logics without restricted requantification, \Cref{thm:hierarchy} and \Cref{cor:no:normalform} yield that for $k_1, k_2 \geq 1$ the power of $\mathsf{C}^{(k_1,k_2)}$ to distinguish graphs is not identical to that of $\mathsf{C}^k, \mathsf{C}_r,$ or $\mathsf{C}^k_r$ for all $k,r \in \mathbb{N}$. 

\section{Space Complexity}\label{sec:space:compl}
\newcommand{\dotcup}{\mathbin{\dot\cup}}
In this section we investigate the space complexity of deciding whether two given graphs are $\mathsf{C}^{(k_1,k_2)}$-equivalent.
In principle, this can be achieved by testing $\owl_{(k_1,k_2)}^{(\infty)}(G) = \owl_{(k_1,k_2)}^{(\infty)}(H)$ by \Cref{thm:characterization}.
However, a naive implementation of $(k_1,k_2)$-OWL requires space $\Omega(n^{k_1+k_2})$ and hence provides no improvement compared to the situation with unrestricted reusability.
We seek to improve the space complexity to \(O\bigl(n^{k_1}\log n\bigr)\) when both requantifiable and non-requantifiable variables are involved. Here the $O$ notation hides factors depending on $k_1$ and $k_2$ but not on $n$.

To achieve this, we observe that the color \(\owl_{(k_1,k_2)}^{(r)}(G,\overline{\alpha})\)
only depends on the colors \(\owl_{(k_1,k_2)}^{(s)}(G,\overline{\beta})\)
with \(s<r\) where \(\beta|_{[y_{k_2}]}\) is an extension of \(\alpha|_{[y_{k_2}]}\).
This allows us to compute these colorings, while only ever remembering colors of assignments
with a few distinct \([y_{k_2}]\)-parts.
Moreover, we show that the \((k_1,k_2)\)-dimensional oblivious Weisfeiler-Leman algorithm
can equivalently be implemented by alternatingly refining with respect to the reusable dimensions until the coloring stabilizes, and refining with respect to the first unassigned non-requantifiable variable.
We use this to show that the iteration number of \((k_1,k_2)\)-OWL is at most \((k_2+1)n^{k_1}-1\).

\newcommand{\finereq}{\preceq}
\newcommand{\coarsereq}{\succeq}
\begin{definition}
    For colorings \(\chi\) and \(\chi'\) on a set \(S\),
    we say that \(\chi\) \emph{refines}
    \(\chi'\), written \(\chi\finereq\chi'\), if every \(\chi'\)-color class is a union of \(\chi\)-color classes.
    If \(\chi\finereq\chi'\) and \(\chi\coarsereq\chi'\), we write \(\chi\equiv\chi'\).
\end{definition}

In the context of colorings, it is natural to understand the oblivious Weisfeiler-Leman algorithm as
a \emph{refinement operator}, i.e., a function that maps every coloring to a refined coloring.
To make this formal, we define for every coloring \(\chi\) on assignments \(\alpha \colon [x_{k_1},y_{k_2}]\rightharpoonup V(G)\) the OWL-refinement
\begin{align*}
\owlref_{(k_1,k_2)}(\chi)(\alpha)\coloneqq \Bigl(\chi(\alpha),
	&\{\!\!\{ \chi(\alpha[x_i/w]) : w\in V(G) \}\!\!\}_{i\in [k_1]},\\
	&\{\!\!\{ \chi(\alpha[y_j/w]) : w\in V(G) \}\!\!\}_{j\in J(\alpha)}\Bigr).
\end{align*}
This refinement is precisely the refinement that is applied by OWL in each iteration. In particular, applying it \(r\) times to the initial coloring by atomic types
yields precisely the \(r\)-round OWL-coloring. That is,
\[\owlref_{(k_1,k_2)}^{(r)}(\operatorname{atp}_{k_1+k_2}(G))=\owl_{(k_1,k_2)}^{(r)}(G).\]
Note that OWL-refinement is \emph{monotone} in the sense that for all colorings \(\chi\) and \(\chi'\) with the property
\(\chi\finereq\chi'\) it also holds that \(\owlref_{(k_1,k_2)}(\chi)\finereq\owlref_{(k_1,k_2)}(\chi')\).

In order to space-efficiently deal with these refinements, we want to separate the refinements
with respect to reusable dimensions from those with respect to non-reusable dimensions.
To do this, note that the definition of \(\owlref_{(k_1,k_2)}\) still makes sense
for colorings assignments \(\alpha \colon [x_{k_1'},y_{k_2'}]\rightharpoonup V(G)\)
for \(k_1'\geq k_1\) or \(k_2'\geq k_2\), where the refinement just refines
with respect to some but not all of the dimensions.

Moreover, in order to handle the distinguishing power of \((k_1,k_2)\)-OWL on two different graphs,
we note that we can also simultaneously apply these refinement operators to colorings on assignments over two different graphs.

With this terminology at hand, we can clarify the intuition we may gain from \Cref{sec:role:of:reuse} regarding the employment of non-reusability.
That is, in order to distinguish graphs, it suffices to alternatingly refine with respect
to all requantifiable variables and a non-requantifiable variable.
\begin{lemma}\label{lem:refinement:nf}
    For all \(k_1+k_2\geq 1\), we have
    \[\owl_{(k_1,k_2)}^{(\infty)}(G)\equiv(\owlref_{(k_1,0)}^{(\infty)} \circ \owlref_{(0,k_2)})^{(k_2)}(\owl_{(k_1,0)}^{(\infty)}(G))\]
\end{lemma}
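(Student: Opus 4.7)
The plan is to prove the two directions of the refinement equivalence separately, via an easy monotonicity argument in one direction and a game-theoretic induction in the other.

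For the direction LHS $\finereq$ RHS, I would use monotonicity. The key observation is that $\owlref_{(k_1,k_2)}(\chi)$ is (up to $\equiv$) the common refinement of $\owlref_{(k_1,0)}(\chi)$ and $\owlref_{(0,k_2)}(\chi)$, and hence any $\owlref_{(k_1,k_2)}$-stable coloring is automatically a fixed point of both partial refinements. Moreover $\owl_{(k_1,0)}^{(\infty)}(G) \coarsereq \owl_{(k_1,k_2)}^{(\infty)}(G)$, because iterating the weaker refinement operator $\owlref_{(k_1,0)}$ starting from $\operatorname{atp}_{k_1+k_2}$ reaches a coarser fixed point than iterating $\owlref_{(k_1,k_2)}$. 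Monotonicity of both partial refinements then propagates this inequality through all iterated colorings in the definition of the RHS, giving $\owl_{(k_1,k_2)}^{(\infty)}(G) \finereq \text{RHS}$.

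The main work lies in the other direction. Setting $\chi_0 := \owl_{(k_1,0)}^{(\infty)}(G)$ and $\chi_{r+1} := \owlref_{(k_1,0)}^{(\infty)}(\owlref_{(0,k_2)}(\chi_r))$, the RHS equals $\chi_{k_2}$. Using the game characterization of \Cref{thm:characterization}, I would prove the following strengthening by induction on $r \in \{0, \dotsc, k_2\}$: whenever $\chi_r(G, \overline{\alpha}) = \chi_r(H, \overline{\beta})$, Duplicator wins the bijective $(k_1,k_2)$-pebble game from $(\alpha, \beta)$ subject to the additional restriction that Spoiler picks up at most $r$ non-reusable pebbles during play. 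Since in any play Spoiler picks up each non-reusable pebble at most once, the unrestricted game involves at most $k_2$ non-reusable pick-ups, and the case $r = k_2$ therefore yields exactly RHS $\finereq$ LHS.

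The base case $r = 0$ is the standard correspondence between $\owl_{(k_1,0)}^{(\infty)}$-equivalence and Duplicator's winning strategies in the reusable-only bijective pebble game on the graphs individualized by the already-placed non-reusable pebbles. For the inductive step I would split on Spoiler's first move: if Spoiler picks a reusable pebble, $\owlref_{(k_1,0)}^{(\infty)}$-stability of $\chi_{r+1}$ provides a Duplicator bijection after which $\chi_{r+1}$-equivalence still holds, and the induction hypothesis at level $r+1$ produces a winning continuation; if Spoiler picks a non-reusable pebble, the multiset information in $\owlref_{(0,k_2)}(\chi_r)$ embedded in $\chi_{r+1}$ yields a Duplicator bijection after which the new configurations are $\chi_r$-equivalent, and the induction hypothesis at level $r$ produces a winning continuation with one fewer non-reusable move allowed. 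The main obstacle will be the bookkeeping for arbitrarily long reusable subsequences between non-reusable moves; this is precisely why $\chi_{r+1}$ inserts the full stabilization $\owlref_{(k_1,0)}^{(\infty)}$ rather than a single $\owlref_{(k_1,0)}$, so that $\chi_r$-equivalence is robust enough to absorb any number of subsequent reusable moves before the induction hypothesis is invoked again.
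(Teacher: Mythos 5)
Your proposal is correct and follows essentially the same route as the paper: the coarse direction via monotonicity and fixed-point observations, and the substantive direction via induction on \(r\) over the colorings \(\chi_r\), using stability of \(\chi_{r+1}\) under \(\owlref_{(k_1,0)}\) so that Duplicator can absorb arbitrarily many reusable moves, and the multiset data of \(\owlref_{(0,k_2)}(\chi_r)\) to handle a non-reusable move before invoking the induction hypothesis. The only differences are cosmetic: the paper indexes its inductive claim by the number of already-placed non-reusable pebbles (\(\lvert\dom(\alpha)\cap[y_{k_2}]\rvert=k_2-r\)) rather than by a budget on Spoiler's non-reusable pick-ups, and your appeal to ``the induction hypothesis at level \(r+1\)'' after a reusable move should instead be phrased as Duplicator simply maintaining the \(\chi_{r+1}\)-invariant for as long as Spoiler plays reusable pebbles (exactly as the paper does), which removes the apparent circularity.
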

\begin{proof}
    Because \(\owl_{(k_1,k_2)}\) is a finer refinement than \(\owlref_{(k_1,0)}\) and than \(\owlref_{(0,k_2)}\),
    the direction \(\finereq\) is immediate.
    For the other direction,
    set
    \[\chi_r\coloneqq
        \left(\owlref_{(k_1,0)}^{(\infty)}\circ
        \owlref_{(0,k_2)}\right)^{(r)}\left(\owl_{(k_1,0)}^{(\infty)}(G)\right).\]

    We use the bijective pebble game and show, by induction on \(r\), that for every \((k_1,k_2)\)-configuration
    \((\alpha,\beta)\) over \(G\) with \(\lvert\operatorname{dom}(\alpha)\cap[y_{k_2}]\rvert=k_2-r\)
    such that \(\alpha\) and \(\beta\) have equal \(\chi_r\)-colors,
    (D) has a winning strategy in the game  \(\BP_{(k_1,k_2)}(G,G)\) with initial position \((\alpha,\beta)\). This then implies the equality of \(\owlref_{(k_1,k_2)}^{(\infty)}(\chi)\)-colors.

    For \(r=0\), the colors are precisely the colors computed by \((k_1,k_2)\)-OWL.
    Thus, (D) has a winning strategy by \Cref{thm:characterization}.

    Now, assume the claim is true for some \(r\) and let \((\alpha,\beta)\) be a \((k_1,k_2)\)-configuration
    with \(\lvert\operatorname{dom}(\alpha)\cap[y_{k_2}]\rvert=k_2-(r+1)\) such that \(\alpha\) and \(\beta\)
    have equal \(\chi_{r+1}\)-colors.
    By the construction of \((k_1,0)\)-OWL refinement, (D) can preserve the equality of \(\chi_{r+1}\)-colors
    as long as (S) picks up reusable pebble pairs. When (S) picks up a non-reusable pebble pair,
    (D) can play such that the resulting positions have the same \(\chi_r\)-colors.
    But then, (D) has a winning strategy by the induction hypothesis.
\end{proof}

Because classical \(\owl_{(k_1,0)}\)-refinement stabilizes after at most \(n^{k_1}-1\) rounds,
this scheme yields an upper bound on the iteration number of the oblivious Weisfeiler-Leman algorithm.
\begin{corollary}\label{cor:iteration:number}
    The sequence of colorings \(\owl_{(k_1,k_2)}^{(r)}\) computed on a graph \(G\) stabilizes after at most \((k_2+1)n^{k_1}-1\) rounds.
\end{corollary}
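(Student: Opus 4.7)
The plan is to couple the direct iteration of $\owlref_{(k_1,k_2)}$ with the staged computation from \Cref{lem:refinement:nf}, exploiting the fact that a single round of $\owlref_{(k_1,k_2)}$ dominates a single round of either $\owlref_{(k_1,0)}$ or $\owlref_{(0,k_2)}$. Set
\[\chi_r \coloneqq \owlref_{(k_1,k_2)}^{(r)}\bigl(\operatorname{atp}_{k_1+k_2}(G)\bigr),\qquad \rho_r \coloneqq \bigl(\owlref_{(k_1,0)}^{(\infty)}\circ\owlref_{(0,k_2)}\bigr)^{(r)}\bigl(\owlref_{(k_1,0)}^{(\infty)}(\operatorname{atp}_{k_1+k_2}(G))\bigr).\]
By \Cref{lem:refinement:nf}, $\rho_{k_2}\equiv\owl_{(k_1,k_2)}^{(\infty)}(G)$, so it suffices to show $\chi_{(k_2+1)n^{k_1}-1}\finereq\rho_{k_2}$. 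Because $\chi_r\coarsereq\owl_{(k_1,k_2)}^{(\infty)}(G)$ holds for every $r$ by construction, such an inequality would in fact force equality, witnessing stabilization.

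The technical heart of the argument is the inductive claim
\[\chi_{(r+1)n^{k_1}-1}\finereq\rho_r\qquad\text{for all }r\geq 0.\]
It uses two ingredients: monotonicity of all refinement operators (if $\chi\finereq\chi'$ then $\owlref(\chi)\finereq\owlref(\chi')$), and the pointwise domination $\owlref_{(k_1,k_2)}(\chi)\finereq\owlref_{(k_1,0)}(\chi)$ and $\owlref_{(k_1,k_2)}(\chi)\finereq\owlref_{(0,k_2)}(\chi)$, which is immediate from the definition since the color assigned by $\owlref_{(k_1,k_2)}$ contains all the multisets used by either restricted operator. The base case $r=0$ then follows by iterating the first domination $n^{k_1}-1$ times, using that $n^{k_1}-1$ rounds of $\owlref_{(k_1,0)}$ already reach $\rho_0$. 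For the inductive step, one additional $\owlref_{(k_1,k_2)}$ round together with monotonicity upgrades $\chi_{(r+1)n^{k_1}-1}\finereq\rho_r$ to $\chi_{(r+1)n^{k_1}}\finereq\owlref_{(0,k_2)}(\rho_r)$, and a further $n^{k_1}-1$ rounds of $\owlref_{(k_1,k_2)}$ dominate the corresponding $\owlref_{(k_1,0)}$-iteration, which stabilizes within that budget to $\rho_{r+1}$.

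Specializing the claim to $r=k_2$ yields $\chi_{(k_2+1)n^{k_1}-1}\finereq\rho_{k_2}\equiv\owl_{(k_1,k_2)}^{(\infty)}(G)$ and hence stabilization by round $(k_2+1)n^{k_1}-1$. The main subtlety, which the writeup must address explicitly, is that the classical bound of $n^{k_1}-1$ rounds for $\owlref_{(k_1,0)}$-stabilization has to be invoked from an arbitrary (not just the atomic-type) input coloring, since in the induction step the inner $\owlref_{(k_1,0)}^{(\infty)}$ is applied to $\owlref_{(0,k_2)}(\rho_r)$; this holds by the usual counting argument that the number of reusable-part color classes, for any fixed non-reusable part, strictly grows until stabilization and is bounded by $n^{k_1}$. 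One should also verify that the arithmetic of the induction is correctly calibrated: each ``round block'' consists of one $\owlref_{(k_1,k_2)}$ step playing the role of $\owlref_{(0,k_2)}$ plus $n^{k_1}-1$ steps playing the role of stabilizing $\owlref_{(k_1,0)}$, giving blocks of length $n^{k_1}$ and a total of $(k_2+1)n^{k_1}-1$ rounds after $k_2$ blocks on top of the initial block.
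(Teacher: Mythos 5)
Your proposal is correct and takes essentially the same approach as the paper: the paper obtains this corollary directly from \Cref{lem:refinement:nf} together with the observation that \(\owlref_{(k_1,0)}\)-refinement stabilizes within \(n^{k_1}-1\) rounds, which is precisely the staged scheme you couple the direct \(\owlref_{(k_1,k_2)}\)-iteration against. Your writeup is in fact more explicit than the paper's one-sentence justification, since you spell out the monotonicity/domination argument, the block arithmetic, and the need to invoke the \(\owlref_{(k_1,0)}\)-stabilization bound from an arbitrary starting coloring, all of which the paper leaves implicit.
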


We will now turn to the computation of the OWL-colorings.
Because the names of the OWL-colors consist of nested multisets, they can become exponentially long.
The usual way to deal with this is to either replace after each iteration round all color names
by numbers of logarithmic length, or to not compute the colors at all but only
consider the order on the variable assignments induced by the lexicographic ordering of their OWL-colors.
We will switch between these two viewpoints depending on suitability to the task at hand.
Accordingly, we use two different encodings of the colorings we deal with.
Consider a coloring \(\chi\colon M\to C\) and an order \(\leq\) on the set of colors \(C\).
We say that an algorithm is given \emph{oracle access to the ordering of \(\chi\)-colors}
if the algorithm has access to a function that, given two elements \(m,m'\in M\),
returns whether \(\chi(m)\leq \chi(m')\).
For the second way that our algorithms interact with colorings,
we call a coloring \(\chi'\colon M\to[|M|]\) a \emph{normalization of \(\chi\)}
if for all \(m,m'\in M\) we have \(\chi(m)\leq \chi(m')\) if and only if \(\chi'(m)\leq \chi'(m')\).
Now, we say that an algorithm is given a \emph{function table for \(\chi\)} if for some
normalization \(\chi'\) of \(\chi\) the algorithm is given an array \(A\) with \(A[m]=\chi'(m)\)
for all \(m\in M\) suitably encoded as numbers in \([|M|]\). Similarly,
we say that an algorithm \emph{computes a function table for \(\chi\)}
if it outputs such an array. Note that a function table can be stored
in space \(|M|\cdot\lceil\log_2|M|\rceil\in O(|M|\log |M|)\).

The main technical tool needed for the implementation of \(\owlref_{(k_1,k_2)}\)-refinements,
is the ability to compare multisets of previously computed colors when given oracle access
to a function comparing these previous colors.
\begin{lemma}\label{lem:lex:order:multisets}
	Given a natural number $n$ in unary, oracle access to a total order \(\preceq\) on \([n]\), and two multisets \(M\) and \(M'\) on \([n]\) of order at most \(n\),
	the lexicographic order of \(M\) and \(M'\) can be decided in logarithmic space using quadratic time.
	Also, the lexicographical order of tuples of colors can be computed in logarithmic space.
\end{lemma}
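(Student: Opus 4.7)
The plan is to reduce the lexicographic comparison of two sorted multisets to a simple condition on their multiplicity functions, which can then be evaluated in logarithmic space using only a few $O(\log n)$-bit registers.

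The first step is the combinatorial characterization. Sort $M$ in $\preceq$-decreasing order as $m_1\succeq\cdots\succeq m_k$ and similarly $M'$ as $m'_1\succeq\cdots\succeq m'_{k'}$, and let $i$ be the first position where the two sorted sequences disagree (treating a missing entry as smaller than any present entry). I would show by a short direct argument that the $\preceq$-largest element $c\in[n]$ at which the multiplicity functions $\operatorname{mult}_M$ and $\operatorname{mult}_{M'}$ disagree is exactly $c=\max_{\preceq}(m_i,m'_i)$, and that whichever multiset has the larger multiplicity at this $c$ is the lex-larger one. The key observation is that any $c'\succ c$ can appear only in positions $<i$ of either sorted sequence, and on those positions the two sequences agree; hence the multiplicity functions agree above $c$, while at $c$ itself they necessarily differ by the deviation at position $i$.

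Given this characterization, a logarithmic-space algorithm is immediate. Iterate $c$ over $[n]$ (possible since $n$ is given in unary) while maintaining a single $O(\log n)$-bit register $c^\ast$ that stores the $\preceq$-largest color seen so far at which the multiplicities differ, together with an ``unset'' flag. For each $c$, compute $\operatorname{mult}_M(c)$ and $\operatorname{mult}_{M'}(c)$ by a single scan through the input, testing equality via $a=b\iff a\preceq b\land b\preceq a$ through the oracle; each count is at most $n$ and fits in $O(\log n)$ bits. If the counts differ and either $c^\ast$ is unset or $c^\ast\preceq c$ (decided by the oracle), overwrite $c^\ast$ by $c$. After the loop, a final pair of scans at $c^\ast$ recomputes the two multiplicities and outputs the comparison according to the characterization. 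Each of the $O(n)$ outer iterations performs a scan of length $O(n)$, giving the claimed $O(n^2)$ running time while keeping the working space in $O(\log n)$.

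For the tuple case the procedure is simpler: scan the two tuples in parallel with a single $O(\log n)$-bit position counter, and return the oracle's verdict at the first position where the entries disagree (or compare lengths if one tuple is a prefix of the other). The main obstacle is establishing the multiplicity-based characterization of sorted-lex order; once that is in hand, the algorithmic part is a routine counter-and-pointer implementation, and the handling of differently sized multisets is absorbed into the characterization by treating missing entries as below the $\preceq$-minimum.
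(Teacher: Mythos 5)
Your proof is correct, but it follows a genuinely different route from the paper's. The paper performs a merge-style scan of the two multisets in \(\preceq\)-increasing order: it finds the minimum of each multiset, compares these minima and then their multiplicities, then repeatedly locates the next-smallest value while remembering only the current one, stopping at the first disagreement; correctness is immediate since this literally walks along the increasing-sorted sequences. You instead prove a global characterization --- the lexicographic order of the decreasing-sorted sequences is determined by which multiset has the larger multiplicity at the \(\preceq\)-largest value where the multiplicity functions disagree --- and then make a single pass over the entire ground set \([n]\), tracking the \(\preceq\)-largest disagreement value; both methods meet the \(O(n^2)\)-time and \(O(\log n)\)-space bounds. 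Two differences are worth noting. First, you decide the lexicographic order of decreasing-sorted sequences while the paper decides that of increasing-sorted ones; these are different total orders on multisets, but either serves the lemma's role in the paper, since \Cref{lem:space:owl:reusable,lem:space:owl:non-reusable} only need some fixed canonical total order that refines multiset equality. Second, your outer loop ranges over the value space \([n]\), whereas the paper's scan only ever touches values actually present in the multisets; in the intended applications the multisets have order \(n\) but their elements are colors drawn from a ground set of size roughly \(2(n+1)^{k_1}\), so the paper's argument applies verbatim with cost quadratic in the multiset order, while your loop would either have to range over that larger color space (still logarithmic space and polynomial time, but a worse polynomial) or be adapted to iterate over the elements occurring in the two multisets, using each as the candidate \(c\), which restores the quadratic bound. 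Your handling of the tuple case matches the paper's, which treats it as immediate.
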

\begin{proof}
	Consider two multisets
	\(M=\{\!\!\{s_1,\dots,s_n\}\!\!\}\) and \(M'=\{\!\!\{s_1',\dots,s_n'\}\!\!\}\).
	Note that we have enough space to store a constant number of elements of \([n]\).

	For a number \(i\in[n]\), we denote the number of occurrences of \(i\) in \(M\) or \(M'\)
	by \(M(i)\) and \(M'(i)\) respectively. Note that these numbers can be computed
	in logarithmic space and linear time by simply comparing \(i\) to all elements in either set.

	We start by finding the minimal element \(m_1\) of \(M\) and \(m_1'\) of \(M'\).
	If \(m_1\neq m_1'\), we return their order. Otherwise, if \(M(m_1)\neq M'(m_1)\),
	we return this order.

	Thus, assume \(m_1=m_1'\) and that they occur in both multisets the same number of times.
	Next, we find the second-smallest elements \(m_2\) and \(m_2'\) of both sets,
	and can now forget about \(m_1\) and \(m_1'\). We again compare \(m_2\) and \(m_2'\)
	and their number of occurrences and possibly return the order accordingly.
	Iteratively, we only need to remember the \(i\)-th smallest elements
	to find the \((i+1)\)-th smallest elements, and we iteratively
	compare these elements and their number of occurrences.
\end{proof}

This allows us to compute the order of \(\owlref_{(k_1,k_2)}(\chi)\)-colors in logarithmic space
when we are given oracle access to the order of \(\chi\)-colors.
Using the bound on the iteration number of \((k_1,k_2)\)-OWL from \Cref{cor:iteration:number},
and the fact that we can perform one iteration using only logarithmic additional space,
we immediately obtain an algorithm that can compare \(\owl_{(k_1,k_2)}^{(\infty)}\)-colors
using space at most \(O(n^{k_1}\log n)\), where we again dropped multiplicative factors
depending on \(k_1\) and \(k_2\).
However, this naive implementation will not run in polynomial time.
Indeed, because there are already \(n^{k_1+k_2}\) many variable assignments,
we do not have enough space to store even the (order of) colors computed in the previous round.
Instead, this naive algorithm recomputes polynomially many previous colors in every step,
which leads to a polynomially branching algorithm with exponential running time.

To remedy this, we make full use of the scheme from \Cref{lem:refinement:nf}.
While performing \(\owl_{(k_1,0)}\)-refinements, we are able to store
a function table with the previously computed colors for all assignments with the same
\([y_{k_2}]\)-part. This allows us to perform a full \(\owl_{(k_1,0)}^{(\infty)}\)-refinement
in polynomial time and the required space.
Only when performing one of the \(k_2\) many \(\owl_{(0,k_2)}\)-refinement steps
do we need to consider variable assignments with different \([y_{k_2}]\)-parts.
In this latter case, we cannot circumvent needing to compute colors for
these assignments polynomially many times. While this does again
lead to a polynomially branching algorithm, the depth of this branching
is bounded by \(k_2\), which leads to a polynomial running time increase
of \(n^{k_2}\).

For a fixed assignment \(\eta\colon[y_{k_2}]\rightharpoonup V(G)\),
we denote by \([[x_{k_1},y_{k_2}]\rightharpoonup V(G)]_{\eta}\)
the set of assignments \(\alpha\colon[x_{k_1},y_{k_2}]\rightharpoonup V(G)\)
whose \([y_{k_2}]\)-part is \(\eta\).
Because we only ever need to compare the colors of two assignments at a time,
it will always be sufficient to compute the OWL-coloring on sets of the form
\[[[x_{k_1},y_{k_2}]\rightharpoonup V(G)]_{\eta_G}\dotcup[[x_{k_1},y_{k_2}]\rightharpoonup V(H)]_{\eta_H}\]
for a \((0,k_2)\)-configuration \((\eta_G,\eta_H)\) over \(G\) and \(H\).
When restricting ourselves to assignments in such a set,
the coloring computed by \((k_1,0)\)-OWL can be computed as usual:
\begin{lemma}\label{lem:space:owl:reusable}
    Let \(k_1+k_2\geq 1\), \(G, H\) be graphs and \((\eta_G,\eta_H)\) be a \((0,k_2)\)-configuration over \(G, H\).
    Given a function table for a coloring \(\chi\) on \([[x_{k_1},y_{k_2}]\rightharpoonup V(G)]_{\eta_G}\dotcup[[x_{k_1},y_{k_2}]\rightharpoonup V(H)]_{\eta_H}\), 
    a function table for \(\owlref_{(k_1,0)}(\chi)\) can be computed in time \(n^{O(k_1)}\) and space \(O(k_1n^{k_1}\log n)\).
\end{lemma}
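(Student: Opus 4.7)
The plan is to produce the output function table entry by entry, setting $B[\alpha]$ to the rank of $\owlref_{(k_1,0)}(\chi)(\alpha)$ in the lex-ordered list of distinct refined colors that actually occur. Since the domain has size $N\le 2n^{k_1}$ and each entry of $B$ can be encoded in $O(k_1\log n)$ bits, storing the output already costs $O(k_1 n^{k_1}\log n)$, and I will aim to use only $O(k_1\log n)$ additional workspace beyond the input and output tables.

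The key subroutine is a comparison of two refined colors that never materializes either of them. By definition $\owlref_{(k_1,0)}(\chi)(\alpha)$ is the tuple consisting of $\chi(\alpha)$ together with the $k_1$ multisets $\{\!\!\{\chi(\alpha[x_i/w]):w\in V\}\!\!\}$, where $V$ is the vertex set of the graph to which $\alpha$ maps. Because the function table for $\chi$ provides constant-time oracle access to a total order on $\chi$-colors, I can invoke \Cref{lem:lex:order:multisets} coordinate by coordinate to compare two such tuples lexicographically in logarithmic space and time $O(k_1 n^2)$.

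To fill the entry $B[\alpha]$, I fix any canonical enumeration $\prec$ of the $N$ assignments and compute the rank as $1$ plus the number of $\alpha'$ that (i)~have refined color lex-strictly less than that of $\alpha$ and (ii)~are the $\prec$-first assignment in their refined-color class. Condition~(ii) is checked by a further inner loop over $\alpha''\prec\alpha'$ using the same comparison subroutine. The three nested loops of length $N$, each terminating in a single color comparison, yield total running time $n^{O(k_1)}$; the auxiliary space amounts to three assignment pointers, a few counters of bit-length $O(k_1\log n)$, and the logarithmic workspace of \Cref{lem:lex:order:multisets}, which together respect the claimed bound. The only mild obstacle is that a single refined color is itself a tuple of $k_1$ multisets of up to $n$ entries and so cannot be materialized within the budget; \Cref{lem:lex:order:multisets} is precisely the tool that lets me sidestep this by streaming comparisons against the stored function table for $\chi$.
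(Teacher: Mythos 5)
Your proof is correct and takes essentially the same approach as the paper: both use \Cref{lem:lex:order:multisets} with the stored function table for \(\chi\) as oracle to compare \(\owlref_{(k_1,0)}(\chi)\)-colors in logarithmic additional space without ever materializing the nested-multiset color names, and then fill the output table by a polynomial counting pass over all \(2(n+1)^{k_1}\) assignments. The only (cosmetic) difference is the normalization: the paper assigns to \(\alpha\) the number of assignments whose refined color is \(\leq\) that of \(\alpha\) (two nested loops), whereas you compute the rank among distinct occurring colors via an extra inner loop; both are valid normalizations within the stated time and space bounds.
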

\begin{proof}
    Note that
	\(\bigl|[[x_{k_1},y_{k_2}]\rightharpoonup V(G)]_{\eta_G}\dotcup[[x_{k_1},y_{k_2}]\rightharpoonup V(H)]_{\eta_H}\bigr|=2(n+1)^{k_1}\),
	which means that we can store a function tables for \(\chi\) and \(\owlref_{(k_1,0)}(\chi)\) in space
	\[O\left(2(n+1)^{k_1}\cdot\log\left(2(n+1)^{k_1}\right)\right)=O\left(k_1 n^{k_1}\log n\right).\]
	In addition to these two function tables, we will only need logarithmic space.

    In order to compute the function table for
    \(\owlref_{(k_1,0)}(\chi)\), we need to refine the coloring \(\chi\) with respect to
    the multisets
    \[\{\!\!\{\chi(\alpha[x_i/w]) : w\in V(G)\}\!\!\}\quad\text{or}\quad
        \{\!\!\{\chi(\alpha[x_i/w]) : w\in V(H)\}\!\!\}\]
    for all \(i\in[k_1]\). By using the function table for \(\chi\) as an oracle,
    we can compare these multisets in logarithmic additional space
    using \Cref{lem:lex:order:multisets}.

    This allows us to compare \(\owlref_{(k_1,0)}(\chi)\)-colors in the required space.
    Now, we simply start to compare each variable assignment \(\alpha\) with all other assignments
    and count the number of assignments whose color is less than or equal to \(\alpha\).
    Then, we use this count as the new color of \(\alpha\) and insert it into our function table.
\end{proof}

By applying \Cref{lem:space:owl:reusable} repeatedly until the coloring stabilizes,
and only ever storing the function table from the previous and current iteration round
we get the following:
\begin{corollary}\label{cor:space:owl:reusable}
    Let \(k_1+k_2\geq 1\), \(G\) and \(H\) be graphs, and \((\eta_G,\eta_H)\) be a \((0,k_2)\)-configuration over \(G\) and \(H\).
    Given a function table for a coloring \(\chi\) on \([[x_{k_1},y_{k_2}]\rightharpoonup V(G)]_{\eta_G}\dotcup[[x_{k_1},y_{k_2}]\rightharpoonup V(H)]_{\eta_H}\),
    a function table for \(\owlref_{(k_1,0)}^{(\infty)}(\chi)\) can be computed in time \(n^{O(k_1)}\) space \(O(k_1n^{k_1}\log n)\).
\end{corollary}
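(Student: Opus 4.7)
The plan is to iteratively apply \Cref{lem:space:owl:reusable} while maintaining function tables for only two consecutive iterations at a time. I would allocate two function tables $A$ and $B$ of size $O(k_1 n^{k_1}\log n)$, initialize $A$ with the given table for $\chi$, and then repeatedly invoke \Cref{lem:space:owl:reusable} to compute the \(\owlref_{(k_1,0)}\)-refinement of the coloring represented by $A$ into $B$, before swapping the roles of $A$ and $B$. Since only two such tables are kept in memory at any moment, the overall space usage matches the bound promised by a single invocation of \Cref{lem:space:owl:reusable}.

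To detect stabilization, before each swap I would check whether the partitions induced by $A$ and $B$ coincide. This can be done by iterating over all pairs of assignments in the set $[[x_{k_1},y_{k_2}]\rightharpoonup V(G)]_{\eta_G}\dotcup[[x_{k_1},y_{k_2}]\rightharpoonup V(H)]_{\eta_H}$ and verifying that two assignments have equal entries in $A$ if and only if they have equal entries in $B$; this comparison uses only logarithmic additional space on top of the two tables. As soon as no strict refinement occurs, I output the current table as the function table for $\owlref_{(k_1,0)}^{(\infty)}(\chi)$.

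For the time bound, since the set of assignments has exactly $2(n+1)^{k_1}$ elements, any strictly refining sequence has length at most $2(n+1)^{k_1}-1$, so the loop terminates within that many iterations. Each iteration, including the stabilization check, runs in time $n^{O(k_1)}$ by \Cref{lem:space:owl:reusable} together with \Cref{lem:lex:order:multisets}, yielding a total running time of $n^{O(k_1)}$.

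The main point to monitor, rather than a genuine obstacle, is that no intermediate data leak beyond the two function tables: the iteration counter fits in $O(k_1\log n)$ bits, and the stabilization check needs only a constant number of assignment indices beyond oracle access to $A$ and $B$. Consequently the space bound $O(k_1 n^{k_1}\log n)$ is preserved throughout the entire loop, completing the argument.
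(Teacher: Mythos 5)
Your proposal is correct and matches the paper's approach: the paper obtains \Cref{cor:space:owl:reusable} precisely by applying \Cref{lem:space:owl:reusable} repeatedly until stabilization while storing only the function tables of the previous and current round. Your added details (the partition-equality check for stabilization, the bound of $2(n+1)^{k_1}$ on the number of strict refinements, and the space accounting for the two tables) are exactly the routine bookkeeping the paper leaves implicit.
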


Now, we turn to refinements with respect to non-requantifiable variables.
\begin{lemma}\label{lem:space:owl:non-reusable}
    Let \(k_1+k_2\geq 1\), \(G\) and \(H\) be graphs, and \(\chi\) a coloring on \([[x_{k_1},y_{k_2}]\rightharpoonup V(G)]\dotcup[[x_{k_1},y_{k_2}]\rightharpoonup V(H)]\).

    Given oracle access to the order of \(\chi\)-colors, 
    we can compute for every \((0,k_2)\)-configuration \((\eta_G,\eta_H)\) the function table of \(\owlref_{(0,k_2)}(\chi)\)
    on \([[x_{k_1},y_{k_2}]\rightharpoonup V(G)]_{\eta_G}\dotcup[[x_{k_1},y_{k_2}]\rightharpoonup V(H)]_{\eta_H}\)
    using space \(O(k_1n^{k_1}\log n+ k_2\log n)\) and time \(n^{O(k_1)}\).
\end{lemma}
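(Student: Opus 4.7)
The plan is to build the function table for \(\owlref_{(0,k_2)}(\chi)\) on
\[S\coloneqq [[x_{k_1},y_{k_2}]\rightharpoonup V(G)]_{\eta_G}\dotcup[[x_{k_1},y_{k_2}]\rightharpoonup V(H)]_{\eta_H}\]
in the same rank-normalization style as \Cref{lem:space:owl:reusable}: for each \(\alpha\in S\) we count the number of \(\alpha'\in S\) whose \(\owlref_{(0,k_2)}(\chi)\)-color is at most that of \(\alpha\), and use this count as the new color. Since \(|S|=2(n+1)^{k_1}\), the resulting table fits in space \(O(k_1 n^{k_1}\log n)\). The essential difference from \Cref{lem:space:owl:reusable} is that the multisets defining \(\owlref_{(0,k_2)}(\chi)\) at \(\alpha\) involve assignments \(\alpha[y_j/w]\) whose \([y_{k_2}]\)-part differs from \(\eta_G\) (respectively \(\eta_H\)) and which therefore lie outside \(S\); we cannot tabulate their \(\chi\)-colors locally and must access \(\chi\) strictly through the given oracle.

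Because \(\dom(\eta_G)=\dom(\eta_H)\), the index set \(J\coloneqq J(\eta_G)=J(\eta_H)\subseteq[k_2]\) is well-defined and \(J(\alpha)=J\) for every \(\alpha\in S\). To compare the \(\owlref_{(0,k_2)}(\chi)\)-colors of two assignments \(\alpha,\alpha'\in S\), we first query the \(\chi\)-oracle on \((\alpha,\alpha')\); if their previous colors coincide, we loop over \(j\in J\) and, for each \(j\), invoke \Cref{lem:lex:order:multisets} to compare the multisets \(\{\!\!\{\chi(\alpha[y_j/w]):w\in V(G)\}\!\!\}\) and \(\{\!\!\{\chi(\alpha'[y_j/w]):w\in V(G)\}\!\!\}\) (using \(V(H)\) on the \(H\)-side accordingly). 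Crucially, \Cref{lem:lex:order:multisets} accesses the elements of its multisets purely through an order oracle, so each element comparison is implemented by a single query to the given \(\chi\)-oracle on the appropriately substituted assignments; the multisets themselves are never materialized, and the scan only requires walking through \(V(G)\) or \(V(H)\) with logarithmic bookkeeping.

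The main thing to verify is the space bound. On top of the function table, at any moment the algorithm needs to store only the two current assignments \(\alpha,\alpha'\in S\) (indices into the table, each of bit-length \(O(k_1\log n)\)), the current index \(j\in[k_2]\), the vertices currently being scanned, and the logarithmic workspace of \Cref{lem:lex:order:multisets}; this totals \(O(k_1\log n+k_2\log n)\) additional space, well within the claimed \(O(k_1 n^{k_1}\log n+k_2\log n)\). For the running time, the outer enumeration ranges over \(O(n^{2k_1})\) assignment pairs; each pair triggers at most \(|J|\le k_2\) multiset comparisons of cost \(O(n^2)\) oracle queries by \Cref{lem:lex:order:multisets}, for a total of \(n^{O(k_1)}\). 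The hardest part is purely bookkeeping: one must verify that the oracle-based multiset comparison never tacitly requires listing \(\chi\)-colors of assignments outside \(S\), and that each recursive call to compare multisets releases its workspace before the next element is examined — this is exactly what the on-the-fly scan built into \Cref{lem:lex:order:multisets} provides.
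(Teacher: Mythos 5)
Your proposal is correct and follows essentially the same route as the paper's proof: compare \(\owlref_{(0,k_2)}(\chi)\)-colors lexicographically via \Cref{lem:lex:order:multisets} (accessing \(\chi\) only through the given oracle, which is what makes the out-of-\(S\) assignments \(\alpha[y_j/w]\) unproblematic), and then normalize by assigning to each \(\alpha\) the number of \(\beta\) in the restricted set whose color is at most that of \(\alpha\). Your write-up just spells out the bookkeeping (the loop over \(j\in J\), the oracle-query count, and the space accounting) that the paper leaves implicit.
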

\begin{proof}
    Note that we have enough space to hold the function table.
    In addition, we will only need logarithmic space.

    Using \Cref{lem:lex:order:multisets}, we can compute the lexicographic ordering
    of \(\owlref_{(0,k_2)}(\chi)\)-colors with logarithmic additional space.
    We can then compute the function table by assigning to each assignment \(\alpha\)
    as the new color
    the number of assignments $\beta$ in
    \([[x_{k_1},y_{k_2}]\rightharpoonup V(G)]_{\eta_G}\dotcup
    [[x_{k_1},y_{k_2}]\rightharpoonup V(H)]_{\eta_H}\)
    such that \[\owlref_{(0,k_2)}(\chi)(\beta)\leq_{\text{lex}}\owlref_{(0,k_2)}(\chi)(\alpha),\]
    which is a number in \([2(n+1)^{k_1}]\).
\end{proof}

Together, these two statements allow us to compare the colors computed by the \((k_1,k_2)\)-dimensional
oblivious Weisfeiler-Leman algorithm in a time- and space-efficient manner.
\begin{theorem}\label{thm:space:complexity}
    Let \(k_1+k_2\geq 1\) be fixed.
    For all \((k_1,k_2)\)-configurations \((\alpha,\beta)\) over graphs \(G\) and \(H\),
    we can decide whether
    \(G,\alpha_1\equiv_{\mathsf{C}^{(k_1,k_2)}} H,\alpha_2\) using space \(O\bigl(k_1(k_2+1)n^{k_1}\log n+(k_2)^2\log n\bigr)\)
    and polynomial time.
\end{theorem}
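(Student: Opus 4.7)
The plan is to combine Theorem~\ref{thm:characterization}, Lemma~\ref{lem:refinement:nf}, Corollary~\ref{cor:space:owl:reusable}, and Lemma~\ref{lem:space:owl:non-reusable} into a recursive procedure whose recursion depth matches the $k_2$ outer iterations of the normal form. By Theorem~\ref{thm:characterization} the question reduces to comparing $\owl_{(k_1,k_2)}^{(\infty)}(G,\overline\alpha)$ and $\owl_{(k_1,k_2)}^{(\infty)}(H,\overline\beta)$, and by Lemma~\ref{lem:refinement:nf} this final coloring is equivalent to the one obtained by starting from $\owl_{(k_1,0)}^{(\infty)}$ and applying the composite refinement $\owlref_{(k_1,0)}^{(\infty)}\circ\owlref_{(0,k_2)}$ precisely $k_2$ times.

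Write $\chi_r$ for the coloring produced after $r$ of these composite applications (so $\chi_0=\owl_{(k_1,0)}^{(\infty)}$ and $\chi_{k_2}$ is what must be tested). For each $(0,k_2)$-configuration $(\eta_G,\eta_H)$ let $D_{\eta_G,\eta_H}\coloneqq [[x_{k_1},y_{k_2}]\rightharpoonup V(G)]_{\eta_G}\dotcup [[x_{k_1},y_{k_2}]\rightharpoonup V(H)]_{\eta_H}$. The key subroutine $\mathrm{Table}_r(\eta_G,\eta_H)$ outputs a function table for $\chi_r$ restricted to $D_{\eta_G,\eta_H}$. The base case $r=0$ is handled directly by Corollary~\ref{cor:space:owl:reusable} starting from the initial atomic-type coloring on $D_{\eta_G,\eta_H}$. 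For $r\geq 1$, I first invoke Lemma~\ref{lem:space:owl:non-reusable} to produce the function table of $\owlref_{(0,k_2)}(\chi_{r-1})$ on $D_{\eta_G,\eta_H}$, and then apply Corollary~\ref{cor:space:owl:reusable} to stabilize under $\owlref_{(k_1,0)}^{(\infty)}$, yielding the table for $\chi_r$. The oracle queries to $\chi_{r-1}$ required in the first phase only ever compare assignments of the form $\alpha[y_j/w]$ whose $[y_{k_2}]$-parts are configurations $(\eta_G',\eta_H')$ that differ from $(\eta_G,\eta_H)$ in at most one coordinate; each such query is answered on demand by a recursive call to $\mathrm{Table}_{r-1}(\eta_G',\eta_H')$, whose table is consulted and immediately discarded before the next query. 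To decide the original question I run $\mathrm{Table}_{k_2}(\alpha|_{[y_{k_2}]},\beta|_{[y_{k_2}]})$ and compare its entries at $\overline\alpha$ and $\overline\beta$.

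The space analysis is then an induction on the recursion depth: at each of the $k_2+1$ levels at most one function table of size $O(k_1 n^{k_1}\log n)$ and one $(0,k_2)$-configuration of size $O(k_2\log n)$ are simultaneously live, and only one branch of the recursion tree is resident at any time, so the costs add to $O\bigl(k_1(k_2+1)n^{k_1}\log n+k_2^2\log n\bigr)$ up to logarithmic bookkeeping. For the time, each level performs $n^{O(k_1)}$ work and fires $n^{O(1)}$ recursive calls into the level below, so $T(r)=n^{O(1)}\cdot T(r-1)$ yields a polynomial total running time for fixed $k_1$ and $k_2$. The main obstacle is the bookkeeping that keeps the recursion streaming rather than tree-shaped in memory: concretely, the outer loop of Lemma~\ref{lem:space:owl:non-reusable} must issue its oracle queries sequentially and let each recursive subcomputation fully reclaim its working space before the next query is fired, so that the $O(k_1 n^{k_1}\log n)$ space per level composes additively rather than multiplicatively along the $k_2+1$ levels. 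A minor subsidiary issue is that Lemma~\ref{lem:space:owl:non-reusable} is phrased with oracle access to colors on the \emph{full} assignment domain, whereas $\mathrm{Table}_{r-1}$ only returns a table on one configuration; this is resolved by the observation that every query fired by that lemma lands inside a single configuration determined by the query itself, so the restricted tables suffice.
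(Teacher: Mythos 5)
Your proposal follows essentially the same route as the paper's own proof: reduce to comparing stable \((k_1,k_2)\)-OWL colors via Theorem~\ref{thm:characterization}, apply the normal form of Lemma~\ref{lem:refinement:nf}, and build the colorings \(\chi_r\) by an induction of depth \(k_2\) in which function tables on single \((0,k_2)\)-configurations are computed via Corollary~\ref{cor:space:owl:reusable} and Lemma~\ref{lem:space:owl:non-reusable}, with oracle queries answered by on-demand recursive calls whose working space is reclaimed between queries, so that space adds over the \(k_2+1\) levels while time multiplies.

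One technical point needs repair. You define \(\mathrm{Table}_r(\eta_G,\eta_H)\) only for configurations whose first part lies over \(G\) and whose second part lies over \(H\), and you claim that every oracle query fired by Lemma~\ref{lem:space:owl:non-reusable} lands inside one such configuration. But the multiset comparisons of Lemma~\ref{lem:lex:order:multisets} must locate the minimal element \emph{within} each multiset \(\{\!\!\{\chi_{r-1}(\alpha[y_j/w]) : w\in V(G)\}\!\!\}\), which forces comparisons of two assignments that both live over \(G\) (with different \([y_{k_2}]\)-parts), and likewise two assignments both over \(H\). These queries do not fit the signature of your subroutine. This is precisely why the paper states its induction for all pairs \(G_1,G_2\in\{G,H\}\), including the same-graph pairs \((G,G)\) and \((H,H)\); once you widen your recursion in the same way (which changes nothing in the space or time analysis), your argument goes through as written.
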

\begin{proof}
    By \Cref{lem:refinement:nf}, we have
    \[\owl_{(k_1,k_2)}^{(\infty)}(G)\equiv
        \left(\owlref_{(k_1,0)}^{(\infty)}\circ
        \owlref_{(0,k_2)}\right)^{(k_2)}\left(\owl_{(k_1,0)}^{(\infty)}(G)\right).\]
    and similarly for \(H\).
    We show by induction on \(r\) that we can compute for every pair of graphs \(G_1,G_2\in\{G,H\}\)    
    and every \((0,k_2)\)-configuration
    \((\eta_1,\eta_2)\) over \(G_1\) and \(G_2\), a function table of
    \[\chi_r\coloneqq \left(\owlref_{(k_1,0)}^{(\infty)}\circ
        \owlref_{(0,k_2)}\right)^{(r)}\left(\owl_{(k_1,0)}^{(\infty)}\right)\]
    on \([[x_{k_1},y_{k_2}]\rightharpoonup V(G_1)]|_{\eta_1}\dotcup[[x_{k_1},y_{k_2}]\rightharpoonup V(G_2)]|_{\eta_2}\)
    using time \(n^{O((k_1+1)(r+1))}\) and space \(O\bigl(k_1(r+1)n^{k_1}\log n+k_2(r+1)\log n\bigr)\),
    where \(\owl_{(k_1,0)}^{(\infty)}(G_1,G_2)\) is the common coloring computed by \((k_1,k_2)\)-OWL
    on both \(G_1\) and \(G_2\).

    If \(r=0\), we only need to compute the classical OWL-coloring.
    To do this, we first note that we can compute a function table listing the atomic types
    of assignments, each encoded as numbers in \([2(n+1)^{k_1}]\).
    Then, the claim follows from \Cref{cor:space:owl:reusable}.

    For the induction step, assume we can compute function tables
    for (restrictions of) the coloring \(\chi_r\) for every fixed
    \((0,k_2)\)-configuration \((\eta_1,\eta_2)\) over \(G_1\) and \(G_2\)
    in the required time and space.
    This in particular implies that we can compute the order of \(\chi_r\)-colors
    of arbitrary assignments in time \(n^{O((k_1+1)(r+1))}\) and space
    \(O\bigl((r+1)(k_1n^{k_1}\log n+k_2\log n)\bigr)\).

    For every fixed \((0,k_2)\)-configuration \((\eta_1,\eta_2)\),
    we can thus compute a function table for the refined coloring
    \(\owlref_{(0,k_2)}(\chi_r)\) on \([[x_{k_1},y_{k_2}]\rightharpoonup V(G)]_{\eta_1,\eta_2}\)
    using space
    \(O\bigl((r+2)(k_1n^{k_1}\log n+k_2\log n)\bigr)\)
    by \Cref{lem:space:owl:non-reusable}.
    Because the algorithm from \Cref{lem:space:owl:non-reusable} runs in time \(n^{O(k_1)}\),
    it can make at most \(n^{O(k_1)}\) comparisons of previously computed colors, which means that
    this step takes time at most \(n^{O(k_1)}\cdot n^{O((k_1+1)r)}=n^{O((k_1+1)(r+2))}\).
    
    Using \Cref{cor:space:owl:reusable}, we can refine this to a function table of
    \[\chi_{r+1}=\owlref_{(k_1,0)}^{(\infty)}\circ\owlref_{(0,k_2)}(\chi_r).\]

    For \(r=k_2\), this yields the claim.
\end{proof}

\section{Graphs Identified by Logics with Restricted Requantification}\label{sec:planar:and:tree:depth}
One of the reasons for the theoretical interest in finite variable counting logics, in particular for graph isomorphism testing,
is that for many graph classes, some fixed number of variables already suffices
to identify every graph in this graph class. We say that a logic \emph{identifies} a graph $G$ if it distinguishes $G$ from every non-isomorphic graph.
This is e.g.{\@} the case for graphs of bounded rank-width \cite{grohe_rank_width}, graphs of bounded genus \cite{grohe_minors}
and more generally, graphs excluding some fixed minor \cite{grohe_minors,grohe_minors_book}.
In the light of logics with restricted quantification, it is thus natural to ask whether such results can be refined.
That is, we ask whether there are natural graph classes such that already a bounded number of non-requantifiable variables
together with only few requantifiable variables suffice to identify every graph in this class.
We answer this question affirmatively.
First, we show that the logic \(\mathsf{C}^{(0,d+1)}\) identifies all graphs of tree-depth at most \(d\).
Second, we refine previous work in which it was shown that \(\mathsf{C}^{(4,0)}\) identifies all planar graphs \cite{kiefer_planar}.
By closely inspecting the arguments, we show that already \(\mathsf{C}^{(2,2)}\) suffices to identify all \(3\)-connected planar graphs.

\subsection*{Graphs of bounded tree-depth}
Tree-depth is a graph parameter that intuitively measures how close a graph is to a star \cite{nesetril_tree-depth}.
Let $G$ be a graph and let $G_1,\dotsc,G_p$ be the connected components of $G$. Then the \emph{tree-depth} of $G$ is inductively defined as
$$ \td(G) \coloneqq \begin{cases}
        1                                & \text{if } |G| = 1                  \\
        1 + \min_{v \in V(G)} \td(G - v) & \text{if } p=1 \text{ and } |G| > 1 \\
        \max_{i \in [p]} \td(G_i)        & \text{otherwise.}
    \end{cases} $$
Note that a graph has tree-depth \(1\) if and only if it is an independent set, and tree-depth \(2\) if and only if it is a disjoint union of isolated vertices and stars.
Intuitively, graphs of bounded tree-depth are those which, by repeatedly deleting one vertex in each connected component, can be eliminated in a bounded number of rounds. 
We start by showing how to simulate this deletion of vertices by pebbling them instead.

\begin{definition}
    Let $G$ be a graph with vertex coloring $\chi_G$ and let $v \in V(G)$.
    We define the colored graph $G \wr v$ as the graph $G - v$ with the new coloring $\chi_{G \wr v}$ defined by setting
    $$ \chi_{G \wr v}(w) \coloneqq \begin{cases}
            (\chi_G(w), 1) & \text{if } \{v,w\} \in E(G)    \\
            (\chi_G(w), 0) & \text{if } \{v,w\} \notin E(G)
        \end{cases}$$
    for all $w \in V(G) \setminus \{v\}$.
\end{definition}

\begin{lemma}[{\cite[Lemma 4.5]{grohe_tree-depth}}] \label{lem:bp_init}
    Let $k_2 \geq 2$, $G, H$ be graphs with $|G| = |H| \geq 2$ and $v \in V(G), w \in V(H)$ with $\chi_G(v) = \chi_H(w)$.
    Then the following are equivalent:
    \begin{enumerate}
        \item (D) has a winning strategy for $\BP_{(0,k_2)}(G,H)$ with initial position given by
              $\alpha(y_1) = v, \beta (y_1) = w$ and $\dom(\alpha)=\dom(\beta)=\{y_1\}$.
        \item (D) has a winning strategy for $\BP_{(0, k_2-1)}(G \wr v, H \wr w)$.
    \end{enumerate}
\end{lemma}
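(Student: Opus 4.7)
The plan is to prove the equivalence by translating winning strategies between the two games. The guiding intuition is that in game (1), the non-reusable pebble pair $y_1$ is already placed on $(v,w)$ and remains there for the rest of the play; its only effect is to impose, on every future configuration, the constraint that adjacency and non-adjacency to $v$ respectively $w$ be preserved. This is precisely the information encoded by the coloring $\chi_{G \wr v}$, $\chi_{H \wr w}$ in game (2). Thus the two games differ only in whether this adjacency information is tracked by a pebble pair or baked into the vertex colors, and the correspondence of pebbles is $y_{j+1}$ in game (1) with $y_j$ in game (2).

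For $(1) \Rightarrow (2)$, I would first argue that without loss of generality every bijection $f \colon V(G) \to V(H)$ selected by (D) in a winning strategy satisfies $f(v) = w$. Indeed, if $f(v) \neq w$, then (S) could choose to pebble $v$ itself with the next pair, producing a configuration in which $v \in V(G)$ is pebbled by both $y_1$ and the new pair, but mapped in $V(H)$ to two distinct vertices $w$ and $f(v)$, violating the partial isomorphism property. Moreover, preservation of the partial isomorphism at such a move forces the restriction $f|_{V(G) \setminus \{v\}}$ to be color-preserving with respect to $\chi_G$ and to preserve adjacency to $v$; hence it is a color-preserving bijection $V(G \wr v) \to V(H \wr w)$. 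We translate (D)'s strategy to game (2) by using these restrictions. A straightforward induction on the remaining rounds shows that a configuration $(\alpha', \beta')$ in game (2) induces a partial isomorphism $G \wr v \to H \wr w$ if and only if the extended configuration $(\alpha'[y_1/v], \beta'[y_1/w])$ induces a partial isomorphism $G \to H$.

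For $(2) \Rightarrow (1)$, the translation is reversed: given a winning strategy for (D) in $\BP_{(0,k_2-1)}(G \wr v, H \wr w)$, extend every bijection $f' \colon V(G) \setminus \{v\} \to V(H) \setminus \{w\}$ it prescribes to $f \colon V(G) \to V(H)$ with $f(v) \coloneqq w$. Color-preservation of $f'$ with respect to $\chi_{G \wr v}$ encodes exactly the preservation of adjacency to $v, w$ together with the original colors, so the extension $f$ is precisely what is needed for the resulting configurations in game (1) to be partial isomorphisms. Under the pebble identification $y_j \leftrightarrow y_{j+1}$, this defines a winning strategy for (D) in game (1). The assumption $k_2 \geq 2$ ensures the pebble budgets on both sides match, and $|G| = |H| \geq 2$ together with $\chi_G(v) = \chi_H(w)$ guarantees that the initial configuration in game (1) already induces a partial isomorphism, so (D) does not lose immediately.

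The main obstacle is the bookkeeping inside the partial-isomorphism check, specifically the \emph{without loss of generality} step in direction $(1) \Rightarrow (2)$: one must argue that among all winning bijections that (D) could play, the constraint $f(v) = w$ is essentially forced, and that such bijections automatically respect the coloring of $G \wr v$, $H \wr w$. Once this is established, both translations of strategies are routine inductions over the number of remaining rounds.
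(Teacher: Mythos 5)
The paper never proves this lemma itself --- it is imported verbatim as Lemma~4.5 of \cite{grohe_tree-depth} --- so there is no in-paper argument to compare against, and your proposal has to stand on its own. It essentially does: both translations are the natural ones, and the key points are right. In a winning strategy for (D) in game (1), every bijection $f$ must indeed satisfy $f(v)=w$ (otherwise (S) pebbles $v$, and then $\alpha(z)\mapsto\beta(z)$ is not even a well-defined function, so the configuration cannot induce a partial isomorphism and (S) wins on the spot); moreover $f$ must preserve $\chi_G$-colors and adjacency to the pebbled pair $(v,w)$, since otherwise (S) wins by placing the fresh pair on a witness $u,f(u)$. These facts together say exactly that $f$ restricts to a color-preserving bijection $V(G\wr v)\to V(H\wr w)$, and your stated invariant --- $(\alpha',\beta')$ induces a partial isomorphism of $G\wr v, H\wr w$ if and only if $(\alpha'[y_1/v],\beta'[y_1/w])$ induces one of $G,H$ --- is precisely what makes the round-by-round induction close; verifying it uses $\chi_G(v)=\chi_H(w)$ for the colors of $v,w$ and the second color component of $\chi_{G\wr v},\chi_{H\wr w}$ for edges into $v,w$.

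One spot needs more care than your fixed ``pebble identification $y_j\leftrightarrow y_{j+1}$'' suggests, namely in the direction $(2)\Rightarrow(1)$: nothing prevents (S) from spending a fresh pebble pair of game (1) on $v$ itself, which lands on $(v,f(v))=(v,w)$. Then no pebble is placed in the simulated game (2), and the index-shift correspondence between placed pebbles breaks. This is harmless --- the new pair merely duplicates $y_1$, the configuration induces the same partial isomorphism as before, and (D) continues the simulation with (S) simply having one fewer pebble available --- but a clean write-up should phrase the invariant as ``the game-(1) pairs placed on positions other than $(v,w)$ are in bijection with the placed game-(2) pairs,'' rather than as a fixed renaming. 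Note that this issue cannot arise in the direction $(1)\Rightarrow(2)$, since moves of (S) in game (2) never involve $v$ or $w$. With that adjustment, your argument is complete and is the standard strategy-translation proof one would expect for this statement.
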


We can now prove our result on identification of graphs of bounded tree-depth:
\begin{theorem}\label{thm:logic:vs:tree:depth}
    For all $d \geq 1$, the logic $\mathsf{C}^{(0,d+1)}$ identifies all colored graphs of tree-depth at most $d$.
\end{theorem}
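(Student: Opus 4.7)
\begin{proof_sketch}
My plan is to prove by induction on $d$ that for every colored graph $G$ with $\td(G) \leq d$ there is a sentence $\psi_G \in \mathsf{C}^{(0,d+1)}$ such that $H \models \psi_G$ if and only if $H \cong G$; by \Cref{thm:characterization} this is equivalent to Spoiler having a winning strategy in $\BP_{(0,d+1)}(G,H)$ for every $H \not\cong G$. The base case $d = 1$ is immediate: graphs of tree-depth at most $1$ are colored independent sets, characterized in $\mathsf{C}^{(0,2)}$ by conjoining the counting formulas $\exists^{=n_c} y_1\, U_c(y_1)$ for every color $c$ with $\forall y_1 \forall y_2\, \neg E(y_1,y_2)$.

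For the inductive step the key technical ingredient is, for every connected rooted colored graph $(C,v)$ with $\td(C) \leq d$ and $v$ a reducing vertex (meaning $\td(C - v) \leq d - 1$), a formula $\varphi_{(C,v)}(x) \in \mathsf{C}^{(0,d)}$ expressing ``the connected component of $x$ in $H$ is isomorphic to $C$ via an isomorphism sending $x$ to $v$''. This formula is built recursively from the structure of $C \wr v$, which has tree-depth at most $d - 1$ by choice of $v$. Because $C$ is connected, every connected component of $C \wr v$ contains at least one flag-$1$ vertex (i.e., a neighbor of $v$); by the inductive hypothesis each such subcomponent together with a chosen reducing root is defined by a formula in $\mathsf{C}^{(0,d-1)}$. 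Relativizing those formulas to $H - x$ (replacing the color predicates $U_{(c,0)}(z)$ and $U_{(c,1)}(z)$ by $U_c(z) \land \neg E(x,z)$ and $U_c(z) \land E(x,z)$ respectively, and guarding every quantifier $\exists z$ by $z \neq x$) and then counting the multiplicity of each rooted subtype with a single outer counting quantifier yields $\varphi_{(C,v)}$ of rank $d$. The flag-$1$ observation guarantees that the vertices contributing to the count are exactly those whose component in $H - x$ meets $N_H(x)$, which coincide with the components of $x$'s component in $H$ minus $x$.

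With these formulas the identifying sentence for any $G$ with $\td(G) \leq d$ takes the form
\[\psi_G = \bigwedge_{[C]} \exists^{= n_{[C]}} x\, \varphi_{(C, v_C)}(x) \;\land\; \exists^{= |V(G)|} x\, (x = x),\]
where $[C]$ ranges over the isomorphism types of connected components of $G$, $v_C$ is a fixed reducing vertex of $C$, and $n_{[C]}$ is the number of vertices of $G$ at which $\varphi_{(C,v_C)}$ holds. All counting quantifiers appear in parallel and no variable is requantified, so $\psi_G \in \mathsf{C}^{(0,d+1)}$ with rank $d + 1$. The first conjunct then forces $m_H([C]) = m_G([C])$ for every component type $[C]$ occurring in $G$, while the total vertex count rules out extra components of other types, so $H \cong G$. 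The hardest part will be the disconnected case: iterating \Cref{lem:bp_init} directly in the pebble game fails as soon as two or more components of $G$ simultaneously realize the maximum tree-depth, since one pebble move can lower the tree-depth only within its own component and the remaining $d$ pebbles are then insufficient to handle the others. The component-counting sentence above sidesteps this obstacle by confining the structural analysis of each component to one free variable.
\end{proof_sketch}
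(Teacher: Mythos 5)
Your route is genuinely different from the paper's: you construct explicit defining sentences by induction on tree-depth, with syntactic relativization (guarded quantifiers and flag colors) playing the role that \Cref{lem:bp_init} plays in the paper, whereas the paper argues purely game-theoretically. Its proof shows, by induction on $k_2$, that if $\operatorname{noc}(G,C)\neq\operatorname{noc}(H,C)$ for some connected $C$ with $\td(C)\leq k_2$ (where $\operatorname{noc}$ counts connected components isomorphic to $C$), then Spoiler wins $\BP_{(0,k_2+1)}(G,H)$; the disconnectedness issue you raise at the end is handled there by counting \emph{$C$-shrinking} vertices, so the game route does not fail---it just cannot proceed by naively iterating \Cref{lem:bp_init}. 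Both arguments share the same skeleton (top-level component counting, then root elimination with recoloring); yours would additionally produce explicit identifying sentences, while the paper's avoids all formula bookkeeping.

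That bookkeeping is where your sketch has a genuine gap. You assert that the vertices contributing to your subtype counts are \emph{exactly} those whose component in $H-x$ meets $N_H(x)$. Only one inclusion holds: a component isomorphic to a subtype of $C\wr v$ meets $N_H(x)$ by the flag-$1$ observation, but a component of $H-x$ meeting $N_H(x)$ need not be isomorphic to any subtype of $C\wr v$, and such extra material attached to $x$ is invisible to your counts, so $\varphi_{(C,v)}(x)$ as described does not express that the component of $x$ is isomorphic to $C$. This is already fatal: for a singleton component $C=K_1$ the set of subtypes is empty and $\varphi_{(K_1,v)}(x)$ degenerates to a pure color check, so for the uniformly colored graphs $G$ (a path on three vertices plus two isolated vertices) and $H$ (that path plus a $K_2$), both of order $5$ and tree-depth $2$, every count occurring in $\psi_G$ takes the same value on $H$ as on $G$, yet $H\not\cong G$. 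The repair is cheap: conjoin $\exists^{=\deg_C(v)}y_1\,E(x,y_1)$ to $\varphi_{(C,v)}(x)$ at every level of the recursion (reusing the level-one variable, hence no new variables and no requantification). Then the components your counts account for contain exactly $\deg_C(v)$ neighbors of $x$, i.e.\ all of them, so every component of $x$'s component minus $x$ is one of the counted ones, and your correctness argument closes. Separately, your variable accounting is off by one (or abuses notation): a formula of $\mathsf{C}^{(0,d)}$ with one free variable has quantifier rank at most $d-1$, and $d$ variables in total are provably insufficient for the rooted formula (already for stars, $d=2$: two non-requantifiable variables cannot distinguish the center of a star of degree $2$ from a vertex of $K_3$). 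The correct count is $d$ \emph{bound} variables plus the free root variable; with this reading your final conclusion $\psi_G\in\mathsf{C}^{(0,d+1)}$---which is what the theorem needs---is exactly right.
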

\begin{proof}
    For two graphs \(G\) and \(C\), denote by \(\operatorname{noc}(G,C)\) the number of connected components of \(G\) that are isomorphic to \(C\).
    Using the equivalence of the logic and bijective pebble game, it suffices to prove the following claim, which
    lends itself better to an inductive proof:
    \begin{claim}
        Let \(G\) and \(H\) be colored graphs, and \(C\) a connected, colored graph of tree-depth at most \(k_2\).
        If \(\operatorname{noc}(G,C)\neq\operatorname{noc}(H,C)\), then
        (S) has a winning strategy for the game $\BP_{(0,k_2+1)}(G, H)$.
    \end{claim}
    \begin{claimproof}
        We argue by induction on \(k_2\).
        If \(k_2=1\), then \(C\) is a single vertex. Thus, \(G\) and \(H\) differ in the number of isolated vertices of some specific color,
        which allows (S) to win in \(2\) rounds.

        For the induction step, assume that the claim is true for \(k_2\).
        Now, consider a graph \(C\) with \(\operatorname{td}(C)\leq k_2+1\)
        and assume w.l.o.g.~that \(\operatorname{noc}(G,C)>\operatorname{noc}(H,C)\).
        By the definition of tree-depth, \(C\) contains a vertex \(c\) such that \(\operatorname{td}(C-c)\leq k_2\).
        Let \(s\) be the number of such vertices.

        We call a vertex \(v\) in either \(G\) or \(H\) \emph{\(C\)-shrinking} if it is contained
        in a connected component \(C_v\) isomorphic to \(C\), and \(\operatorname{td}(C_v-v)\leq k_2\).
        The number of \(C\)-shrinking vertices in \(G\) and \(H\) is \(s\cdot\operatorname{noc}(G,C)\)
        and \(s\cdot\operatorname{noc}(H,C)\) respectively. In particular, \(G\) contains more \(C\)-shrinking
        vertices than \(H\).

        Now, we describe the winning strategy for (S) in the game $\BP_{(0,k_2+2)}(G, H)$.
        First, (S) picks up the (unused) pebble pair \(y_1\), and (D) picks a bijection \(f\colon V(G)\to V(H)\).
        Then there exist some \(C\)-shrinking vertex \(v\in V(G)\) such that its image \(f(v)\) is not \(C\)-shrinking
        in \(H\). Then (S) places the pebble pair on these two vertices. By \Cref{lem:bp_init}, it now suffices to argue
        that (D) has a winning strategy for the game $\BP_{(0, k_2+1)}(G \wr v, H \wr f(v))$.
        For this, let \(C_v\) be the connected component of \(v\) in \(G\),
        and consider the connected components \(C_v^{(1)},\dots,C_v^{(\ell)}\)of \(C_v\wr v\subseteq G\wr v\).
        If for some \(i\in[\ell]\), we have \(\operatorname{noc}(G \wr v,C_v^{(i)})\neq\operatorname{noc}(H \wr f(v),C_v^{(i)})\), then we are done by
        the induction hypothesis. Thus, we are left with the case that \(\operatorname{noc}(G \wr v,C_v^{(i)})=\operatorname{noc}(H \wr f(v),C_v^{(i)})\).
        Note that the vertex-colorings of \(G\wr v\) and \(H\wr f(v)\) ensure that all connected components
        isomorphic to \(C_v^{(i)}\) for some \(i\) are incident to \(v\) or \(f(v)\) respectively.
        Thus, all copies of \(C_v^{(i)}\) in \(G\) lie in \(C_v\).

        In this case, there is an isomorphism \(\varphi\) between the subgraphs induced by \(G\wr v\) and \(H\wr f(v)\)
        on the union of connected components isomorphic to \(C_v^{(i)}\) for some \(i\in[\ell]\).
        The vertex-colorings of \(G\wr v\) and \(H\wr f(v)\) further ensure that \(\varphi\) can be extended
        by \(\varphi(v)\coloneqq f(v)\), so that its domain is all of \(C_v\). Thus, \(\varphi\) now embeds
        \(C_v\) into the connected component of \(f(v)\), which might, however, have additional vertices
        attached to \(f(v)\). If the image of \(C_v\) under \(\varphi\) was the whole connected component of \(f(v)\),
        then \(f(v)\) would be \(C\)-shrinking, which contradicts our assumption. Thus, there are additional vertices
        attached to \(f(v)\). Thus, \(v\) and \(f(v)\) have distinct degrees, which allows (S) to win in one further round.
    \end{claimproof}
    The theorem now follows by applying the claim to all connected components of \(G\).
\end{proof}
Note that using \Cref{thm:space:complexity} on the space complexity of $\mathsf{C}^{(k_1,k_2)}$-equivalence, the theorem in particular reproves the statement
that isomorphism of graphs of bounded tree-depth can be decided in logarithmic space~\cite{das_logspace}.

Note moreover that because \(\mathsf{C}^{(1,1)}\) can count the number of connected components isomorphic to a fixed colored star by \Cref{lem:deg_seq},
the above inductive proof also shows that for \(d\geq 2\), the logic \(\mathsf{C}^{(1,d-1)}\) identifies all colored graphs of tree-depth at most \(d\).

\subsection*{3-connected planar graphs}
The next class of graphs we consider are \(3\)-connected planar graphs. These naturally appear in the proof that \(\mathsf{C}^4\) identifies
all planar graphs, which starts by reducing the claim for arbitrary planar graphs to \(3\)-connected planar graphs via the decomposition into triconnected components \cite{kiefer_planar}. We recall their proof that \(\mathsf{C}^4\) identifies every \(3\)-connected planar graph
and show that it actually already yields that \(\mathsf{C}^{(2,2)}\) suffices.

The underlying technical lemma is the following:
\begin{lemma}[{\cite[Lemma 23]{kiefer_planar}}] \label{lem:planar_individualize}
    Let $G$ be a $3$-connected planar graph and let $v_1,v_2,v_3 \in V(G)$.
    If $v_1,v_2,v_3$ lie on a common face of $G$, then $\wl^{(\infty)}_{1}(G_{(v_1,v_2,v_3)})$ is a discrete coloring.
\end{lemma}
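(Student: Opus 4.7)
The plan relies on Whitney's theorem: every $3$-connected planar graph has an essentially unique embedding into the sphere (up to reflection). Individualizing three vertices $v_1,v_2,v_3$ lying on a common face $F$ pins down $F$ as the outer face and fixes an orientation for the cyclic order in which $v_1, v_2, v_3$ appear along its boundary. Consequently, the combinatorial embedding of $G_{(v_1,v_2,v_3)}$ is completely rigid, and in particular $G_{(v_1,v_2,v_3)}$ has no non-trivial automorphism.

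The first step is to argue that $1$-WL already identifies each vertex lying on $F$: because $v_1,v_2,v_3$ already carry distinct colors, the neighbors of $v_1,v_2,v_3$ along the face boundary can be pried apart by their neighborhood-color-multisets (they see different combinations of individualized colors), and this propagates cyclically around $F$. The rigidity of the embedding prevents two boundary vertices from being exchanged in a color-preserving way.

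The main work is an induction over a BFS-like expansion away from $F$. Given that the stable coloring $\chi \coloneqq \wl_1^{(\infty)}(G_{(v_1,v_2,v_3)})$ already assigns distinct colors to all vertices of some "processed" subgraph $S \supseteq V(F)$, I would show that every vertex in $V(G) \setminus S$ adjacent to $S$ is also distinguished. The key planarity input is this: if two vertices $u \neq u'$ outside $S$ have the same multiset of colors into $S$, then by rigidity of the embedding (applied relative to the already-identified vertices of $S$), $u$ and $u'$ would have to occupy combinatorially equivalent positions in the embedding — but that contradicts the uniqueness established via Whitney's theorem together with the fact that all of $S$ is already pointwise fixed. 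Iterating the inductive step over all BFS-layers from $V(F)$ eventually puts every vertex into its own color class.

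The main obstacle is bridging the gap between rigidity of the embedding (which only yields triviality of $\mathrm{Aut}(G_{(v_1,v_2,v_3)})$) and discreteness of the $1$-WL coloring (which is in general strictly finer than the orbit partition). I expect the heart of the proof to be a careful structural lemma saying: in a $3$-connected planar graph with an already-identified connected subgraph $S$, no two external vertices share the same multiset of edges into $S$; equivalently, every "face corner" on the boundary of $S$ is distinguishable from every other. This is where $3$-connectivity is used crucially, since it both feeds Whitney's theorem and ensures that removing any two vertices still leaves the graph connected, which prevents ambiguities when comparing local neighborhoods across the frontier of $S$.
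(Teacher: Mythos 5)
Your proposal is, as you yourself concede, a plan with a hole at its center, and the hole is genuine: there is no way to pass from rigidity of the embedding to discreteness of the stable coloring along the route you describe. Note first that the containment in your parenthetical is backwards: the orbit partition of $\mathrm{Aut}(G_{(v_1,v_2,v_3)})$ \emph{refines} the stable coloring $\wl^{(\infty)}_{1}(G_{(v_1,v_2,v_3)})$, not the other way around --- every orbit lies inside a color class, but a color class may merge many orbits. Hence triviality of the automorphism group (which Whitney's theorem does give you) puts no bound at all on how coarse the stable coloring is; there exist rigid graphs, for example rigid strongly regular graphs, on which color refinement fails to split even a single pair of vertices. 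For the same reason the inner step of your induction is a non sequitur: from ``$u$ and $u'$ have equal multisets of colors into $S$'' you obtain no isomorphism, no automorphism, and no equivalence of embedded positions that rigidity could contradict. Equal $1$-WL data simply does not produce a map of the graph to itself; that is exactly the difference between WL-equivalence and isomorphism, and exactly why this lemma requires a substantive proof. Your opening claim that the individualized colors ``propagate cyclically around $F$'' rests on the same invalid inference.

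Moreover, the structural lemma you hope will carry the induction is false as stated: in a $3$-connected planar graph two vertices $u\neq u'$ outside an identified connected subgraph $S$ can each have exactly one neighbor in $S$, namely one and the same vertex $s$, with all their remaining neighbors outside $S$; then their color multisets into $S$ coincide and your frontier cannot advance. Color refinement is not a one-way outward sweep from the individualized vertices --- whether $u$ and $u'$ are eventually separated may depend on information that travels through the unprocessed part of the graph and flows back many rounds later, and any correct argument must cope with this. It is worth knowing that the paper you are reading does not prove this statement at all: it imports it verbatim as Lemma~23 of \cite{kiefer_planar}, where the proof is a lengthy and delicate analysis of stable colorings of $3$-connected planar graphs (exploiting, among other things, the severe restrictions that planarity places on the biregular bipartite graphs between two color classes, and $3$-connectivity to rule out small separators), rather than an appeal to uniqueness of embeddings. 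So your sketch identifies the right difficulty but does not overcome it, and the missing piece is not a routine lemma but essentially the entire content of the cited result.
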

In the classical setting, from this lemma one can obtain that $4$-WL, i.e., $\mathsf{C}^5$ identifies all $3$-connected planar graphs.
We make use of our framework and show that instead it suffices to use non-reusable resources to cover the individualized vertices.
\begin{corollary}[compare {\cite[Corollary 24]{kiefer_planar}}] \label{cor:identify}
    The logic $\mathsf{C}^{(2,3)}$ identifies all $3$-connected planar graphs.
\end{corollary}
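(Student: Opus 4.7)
The plan is to use the three non-reusable pebble pairs $y_1, y_2, y_3$ of $\BP_{(2,3)}$ to individualize a triple of vertices sharing a common face of $G$, and then to employ the two reusable pairs $x_1, x_2$ to simulate a run of 1-WL on the resulting individualized graphs. By \Cref{thm:characterization}, it suffices to exhibit, for every graph $H$ with $G\not\cong H$, a winning strategy for Spoiler in $\BP_{(2,3)}(G,H)$.

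Since $G$ is $3$-connected and planar, it admits a planar embedding in which some face is bounded by a cycle of length at least~$3$; Spoiler fixes three vertices $v_1,v_2,v_3$ on such a face in advance. In three consecutive opening rounds, Spoiler picks up $y_1, y_2, y_3$ in turn and places them on $v_1,v_2,v_3$, so Duplicator's bijections force some vertices $w_1,w_2,w_3\in V(H)$. If the resulting partial map $v_i\mapsto w_i$ ever fails to be a partial isomorphism, Spoiler wins immediately. Otherwise, the game enters a configuration $(\alpha,\beta)$ in which all non-reusable pebbles are fixed and only the reusable pairs $x_1,x_2$ remain available.

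From this configuration, the continuation of $\BP_{(2,3)}(G,H)$ is, by construction, exactly the game $\BP_{(2,0)}(G_{(v_1,v_2,v_3)},H_{(w_1,w_2,w_3)})$: the pinned non-reusable pebbles impose the same constraint on admissible bijections as the fresh distinct colors used in individualization. Because $G\not\cong H$, the individualized graphs are non-isomorphic as well, since any isomorphism between them would preserve the unique colors on $v_i,w_i$ and hence restrict to an isomorphism $G\to H$. Now \Cref{lem:planar_individualize} yields that $\wl_1^{(\infty)}(G_{(v_1,v_2,v_3)})$ is discrete, and it is standard that a stable 1-WL coloring assigning distinct colors to every vertex fully determines the graph up to isomorphism; thus 1-WL distinguishes $G_{(v_1,v_2,v_3)}$ from any non-isomorphic graph, in particular from $H_{(w_1,w_2,w_3)}$. \Cref{lem:wl:vs:owl} transports this distinction to 2-OWL, and \Cref{thm:characterization} applied with $(k_1,k_2)=(2,0)$ then provides a winning strategy for Spoiler in $\BP_{(2,0)}(G_{(v_1,v_2,v_3)},H_{(w_1,w_2,w_3)})$. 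Prefixing the three opening rounds yields the required winning strategy in $\BP_{(2,3)}(G,H)$.

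The main obstacle is the bridge between ``discrete stable 1-WL coloring on $G_{(v_1,v_2,v_3)}$'' and ``distinguishability from every non-isomorphic graph'': if the multisets of stable colors on $G_{(v_1,v_2,v_3)}$ and $H_{(w_1,w_2,w_3)}$ agreed, discreteness would force a color-preserving bijection which, by the fixed-point property of the stable coloring, would be an isomorphism, contradicting $G\not\cong H$. A secondary subtlety is verifying that ``individualization by non-reusable pebbles'' is interchangeable with ``individualization by distinct vertex colors''; this is immediate from the definition of atomic types used to initialize both the bijective pebble game and the oblivious Weisfeiler-Leman algorithm.
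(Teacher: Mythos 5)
Your proof is correct and follows essentially the same route as the paper: both individualize three vertices on a common face via \Cref{lem:planar_individualize}, transfer the discrete $1$-WL coloring to $2$-OWL via \Cref{lem:wl:vs:owl}, and then lift the resulting $\mathsf{C}^{(2,0)}$-identification of the individualized graph to $\mathsf{C}^{(2,3)}$ using the three non-requantifiable variables. The only difference is presentational: the paper performs the lifting on the formula side, existentially quantifying $y_1,y_2,y_3$ in a defining formula $\varphi(y_1,y_2,y_3)$, whereas you perform it on the game side by prefixing three pebbling rounds to Spoiler's $\BP_{(2,0)}$-strategy --- two equivalent readings of \Cref{thm:characterization}.
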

\begin{proof}
    Let \(G\) be a \(3\)-connected planar graph.
    By  \Cref{lem:planar_individualize}, there are vertices $v_1,v_2,v_3 \in V(G)$ such that $\wl^{(\infty)}_{1}(G_{(v_1,v_2,v_3)})$ is a discrete coloring.
    Then by {\Cref{lem:wl:vs:owl}}, also $\owl^{(\infty)}_{(2,0)}(G_{(v_1,v_2,v_3)})$ is a discrete coloring,
    which implies that \(G_{(v_1,v_2,v_3)}\) is identified by \(\mathsf{C}^{(2,0)}\). Thus, there exists a formula \(\varphi(y_1,y_2,y_3)\in \mathsf{C}^{(2,3)}\)
    which defines the graph \(G\) with three individualized vertices represented by \(y_1,y_2\) and \(y_3\) up to isomorphism.
    But then, the formula $\exists y_1\exists y_2\exists y_3 \varphi(y_1,y_2,y_3)$ identifies \(G\).
\end{proof}

In order to improve the identification result from \(\mathsf{C}^5\) to \(\mathsf{C}^4\) and from \(\mathsf{C}^{(2,3)}\) to \(\mathsf{C}^{(2,2)}\),
one observes that for almost all \(3\)-connected graphs, the individualization of just \(2\) vertices already suffices for the above claim,
and the exceptions, where indeed, \(3\) vertices are necessary are somewhat rare.
\begin{definition}
    A \(3\)-connected planar graph is called an \emph{exception} if there are no two vertices \(v_1,v_2\in V(G)\) such that
    \(\wl^{(\infty)}_{1}(G_{(v_1,v_2)})\) is discrete.
\end{definition}
The crucial tool in lowering the number of variables needed is an explicit classification of all exceptions \cite{kiefer_planar}.
This allows to prove the following:
\begin{lemma}\label{lem:exceptions_identified}
    All exceptions are identified by \(\mathsf{C}^{(2,2)}\).
\end{lemma}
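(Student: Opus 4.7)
The plan is to leverage the explicit classification of exceptions that underlies the proof that $\mathsf{C}^4$ identifies all $3$-connected planar graphs. That classification enumerates exceptions as a short list of well-understood families (e.g.\ bipyramids, prisms/antiprisms, platonic-type graphs). For each family I would exhibit a winning strategy for Spoiler in the bijective $(2,2)$-pebble game $\BP_{(2,2)}(G,H)$ against any non-isomorphic $H$; by \Cref{thm:characterization}, this yields a defining sentence in $\mathsf{C}^{(2,2)}$.

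First, I would dispose of the easy confusers: if $G$ and $H$ differ in degree sequence, $\mathsf{C}^{(0,2)}$ already distinguishes them by \Cref{lem:deg_seq}, and if they differ in neighborhood-degree sequence, $\mathsf{C}^{(1,1)}$ does; both logics lie below $\mathsf{C}^{(2,2)}$, so such $H$ require no extra argument. This reduces the problem to confusers sharing the neighborhood-degree multiset with the exception $G$ under consideration, which drastically narrows the candidate set for each family in the classification.

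For the remaining confusers, Spoiler plays as follows. At the start of the game, Spoiler uses the two non-reusable pebble pairs $y_1,y_2$ to pin down a carefully chosen pair $(v_1,v_2)$ in $G$; Duplicator's bijections $f_1,f_2$ determine images $(w_1,w_2)$ in $H$. From this position onward, the continuation of $\BP_{(2,2)}$ is precisely $\BP_{(2,0)}$ on the individualized graphs $G_{(v_1,v_2)}$ and $H_{(w_1,w_2)}$, and by \Cref{thm:characterization} combined with \Cref{lem:wl:vs:owl} a winning strategy there exists whenever $\wl_1^{(\infty)}(G_{(v_1,v_2)}) \neq \wl_1^{(\infty)}(H_{(w_1,w_2)})$. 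Crucially, this is strictly weaker than demanding the stable coloring on $G_{(v_1,v_2)}$ to be discrete — which by definition of exception fails for $G$. It suffices that for \emph{every} choice of image $(w_1,w_2)$ in a non-isomorphic $H$, the resulting $1$-WL stable colorings on the individualized graphs disagree. I would verify this family-by-family by comparing the (non-discrete but highly structured) partition of $G_{(v_1,v_2)}$ under $\wl_1^{(\infty)}$ against what would have to appear in a non-isomorphic planar graph with matching neighborhood-degree sequence.

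The main obstacle is exactly this family-by-family verification for infinite parameterized exception families with large automorphism groups, such as the bipyramids: here every pair $(v_1,v_2)$ leaves sizeable automorphism orbits, so one must argue that the \emph{quotient} information encoded in the non-discrete stable coloring (orbit sizes, distances to the individualized pair, adjacency patterns between orbits) is still enough to rule out every non-isomorphic planar $H$ with matching coarse invariants. In each case this reduces to a small combinatorial check tied to the structural description of the family in the classification; the argument is finite but not entirely uniform, which is why the proof must proceed by cases rather than a single generic strategy.
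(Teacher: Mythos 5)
Your reduction of the game is correct: once both non-reusable pairs are placed, the rest of $\BP_{(2,2)}$ is $\BP_{(2,0)}$ on the individualized graphs, which by \Cref{thm:characterization} and \Cref{lem:wl:vs:owl} is governed by $1$-WL; the preprocessing via \Cref{lem:deg_seq} is also fine. The genuine gap is the core step: pinning a \emph{fixed} pair $(v_1,v_2)$ in $G$ and then relying on $\wl_1^{(\infty)}$ to separate $G_{(v_1,v_2)}$ from $H_{(w_1,w_2)}$ for every non-isomorphic $H$ and every placement provably cannot work for the symmetric exceptions. Take $G$ the icosahedron and $v_1v_2$ an edge; let $u,u'$ be the common neighbors of $v_1,v_2$ and $p,p'$ the remaining neighbors of $v_1$, with matching $u\sim p$, $u'\sim p'$. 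The orbit partition of the stabilizer of $(v_1,v_2)$ consists of four singletons $\{v_1\},\{v_2\},\{\overline{v_1}\},\{\overline{v_2}\}$ (the bars denoting antipodes) and four $2$-element classes joined pairwise by perfect matchings or empty bipartite graphs. Switching the matching between $\{u,u'\}$ and $\{p,p'\}$ (i.e., replacing $u\sim p$, $u'\sim p'$ by $u\sim p'$, $u'\sim p$) produces a $5$-regular graph $H$ on $12$ vertices that is \emph{not} isomorphic to $G$ (in $H$ the neighborhood of $u$ induces a path, not a $5$-cycle), yet the same partition is equitable in $H$ with identical class sizes, identical internal edges, and identical quotient degrees. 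By the standard equitable-partition characterization of color refinement, $G_{(v_1,v_2)}$ and $H_{(v_1,v_2)}$ are $1$-WL-indistinguishable, so Duplicator defeats your strategy simply by choosing bijections with $f_i(v_i)=v_i$. Since the singleton set $\{v_1,v_2,\overline{v_1},\overline{v_2}\}$ contains pairs of every distance type and the icosahedron is distance-transitive, \emph{no} choice of pinned pair escapes this; this residual symmetry is exactly what makes these graphs exceptions in the first place. Note also that $H$ is necessarily non-planar (a planar graph with $12$ vertices and $30$ edges would be a triangulation), so your restriction of the confusers to planar $H$ — which is unjustified anyway, since identification quantifies over \emph{all} non-isomorphic graphs — would have excluded precisely the confusers that break the argument.

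This is why the paper argues differently for two of the three families. Bipyramids are handled essentially as you propose (there, individualizing two adjacent cycle vertices does make the coloring discrete up to the two apexes, and the resulting colored graph is identified by color refinement). But the platonic solids and the rhombic dodecahedron are identified by writing down in $\mathsf{C}^{(2,2)}$ that the graph is distance-regular with a prescribed intersection array — possible because distance-$d$ formulas $\varphi_d(y_1,y_2)$ need only the two requantifiable variables for the internal path computations — and then invoking the fact that distance-regular graphs on at most $14$ vertices are determined by their parameters; the triakis-type graphs are handled by degree distinctions plus explicit local conditions. The corresponding Spoiler strategies place the $y$-pebbles \emph{adaptively}, targeting witnesses of failure on the $H$-side via $f_i^{-1}$, rather than pinning a pair of $G$ in advance; that extra freedom is what your proposal gives up, and it is not recoverable by any finite "quotient information" check of the kind you describe.
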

\begin{proof}
    The exceptions are the following:
    \begin{itemize}
        \item all bipyramids, i.e., cycles with two additional non-adjacent but otherwise universal vertices,
        \item all platonic solids besides the dodecahedron, and the rhombic dodecahedron,
        \item the triakis tetrahedron, the tetrakis hexahedron and the triakis octahedron, i.e., the graphs obtained
              from the tetrahedron, the hexahedron and the octahedron by adding one vertex per face, whose neighborhood consists of the vertices on that face.
    \end{itemize}
    Because even color refinement identifies every graph with at most \(5\) vertices, and the bipyramid of order \(6\) is the octahedron,
    we start with bipyramids of order at least \(7\). We individualize two adjacent vertices of the cycle. Then, color refinement
    computes a discrete coloring on the underlying cycle, while the two pyramid tips get the same color, which is, however,
    distinct from all other colors. This graph is identified by color refinement.

    Next, consider the platonic solids and the rhombic dodecahedron. All of these are distance-regular graphs of diameter at most \(4\),
    with at most \(14\) vertices. Note that for every \(d\in\mathbb{N}\), there exists a formula
    \(\varphi_d(y_1,y_2)\in\mathsf{C}^{(2,2)}\) stating that \(y_1\) and \(y_2\) have distance
    \(d\). This implies that in \(\mathsf{C}^{(2,2)}\) we can express that a graph \(G\) is distance-regular with a given
    parameter set.
    Because every distance-regular graph of order at most \(14\) is determined by its parameters \cite{distance-regular},
    \(\mathsf{C}^{(2,2)}\) identifies all platonic solids and the rhombic dodecahedron.

    For the last case, note that the vertices of the platonic solid and the added vertices for every face have distinct degrees.
    As moreover, adding these vertices does not change the distance between any two original vertices, \(\mathsf{C}^{(2,2)}\)
    can still express that the underlying platonic solid is of the correct type. Additionally, we can express that the neighborhood
    of each added face vertex is a cycle of the correct length, and that no two face vertices share more than \(2\) common neighbors.
    This identifies the last class of exceptions.
\end{proof}

This finally allows us to prove identification by \(\mathsf{C}^{(2,2)}\):
\begin{theorem}\label{thm:3-con:plan:graph:vs:logic}
    Every \(3\)-connected planar graph is identified by \(\mathsf{C}^{(2,2)}\).
    \begin{proof}
        Let \(G\) be a \(3\)-connected planar graph.
        If \(G\) is an exception, this follows from \Cref{lem:exceptions_identified}.
        If \(G\) is not an exception, the claim can be proven as in \Cref{cor:identify},
        where we instead only need to individualize \(2\) instead of \(3\) vertices.
    \end{proof}
\end{theorem}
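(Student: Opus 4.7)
The plan is to split the proof into two cases depending on whether $G$ is an exception in the sense of the preceding definition. For the generic case, I would observe that if $G$ is not an exception, then by definition there exist vertices $v_1,v_2 \in V(G)$ such that $\wl_1^{(\infty)}(G_{(v_1,v_2)})$ is a discrete coloring. By \Cref{lem:wl:vs:owl}, this implies that $\owl_{(2,0)}^{(\infty)}(G_{(v_1,v_2)})$ is also discrete, so \Cref{thm:characterization} (applied with $k_1=2, k_2=0$) gives that $\mathsf{C}^{(2,0)}$ identifies the colored graph $G_{(v_1,v_2)}$ up to isomorphism. For the exceptional case, I would simply invoke \Cref{lem:exceptions_identified} directly.

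The remaining step, exactly as in the proof of \Cref{cor:identify}, is to turn the $\mathsf{C}^{(2,0)}$-identification of the individualized graph into a $\mathsf{C}^{(2,2)}$-identification of $G$ itself. Concretely, the distinguishing sentence $\psi \in \mathsf{C}^{(2,0)}$ obtained above refers to the two new colors of $v_1,v_2$; replacing each atomic formula $U_{c_i}(z)$ testing for one of these new colors by the equality $z = y_i$ yields a formula $\varphi(y_1,y_2) \in \mathsf{C}^{(2,2)}$ with $y_1,y_2$ as its only free variables. Since $y_1$ and $y_2$ appear only as free variables in $\varphi$, they are not requantified, so $\exists y_1 \exists y_2 \, \varphi(y_1,y_2) \in \mathsf{C}^{(2,2)}$ is a well-formed sentence that identifies $G$.

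The only genuine obstacle is the verification that the translation from a $\mathsf{C}^{(2,0)}$-sentence distinguishing $G_{(v_1,v_2)}$ from other colored graphs into a $\mathsf{C}^{(2,2)}$-sentence distinguishing $G$ from other (uncolored) graphs respects the non-requantifiability restriction on $y_1,y_2$. This is immediate once one notes that the translation only introduces $y_1,y_2$ inside atomic equalities beneath the two outer quantifiers $\exists y_1 \exists y_2$, so each of $y_1,y_2$ is quantified exactly once and never falls in the scope of its own quantifier. Thus the main content of the argument reduces to \Cref{lem:planar_individualize}, the structural characterization of the exceptions, and the exceptional case handled by \Cref{lem:exceptions_identified}; the latter is where the real combinatorial work (distance-regularity, face incidence, degree distinctions) already took place.
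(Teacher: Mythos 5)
Your proposal is correct and takes essentially the same route as the paper: the same case split on whether $G$ is an exception, with \Cref{lem:exceptions_identified} covering the exceptions and the argument of \Cref{cor:identify} (using the definition of a non-exception to individualize only two vertices instead of three) covering the rest. The additional detail you supply---replacing the individualizing color predicates by equalities with $y_1,y_2$ and checking that this respects non-requantifiability---is exactly the step the paper leaves implicit in \Cref{cor:identify}.
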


It is unclear how precisely this result generalizes to all planar graphs. Moreover,
it is known that all graphs of Euler genus \(g\) are identified by \(\mathsf{C}^{4g+4}\) \cite{grohe_genus},
and we would expect that also here, for sufficiently connected graphs only a very small number of the variables must be requantified.

\section{Outlook}

In this work, we establish a refined framework for the logical description of graphs by means of the requantification of variables. 
We indicate some open questions for future work in vastly different directions. 

Towards structural graph theory, the newly defined cops-and-robber game $\CR^{(k_1,k_2)}$ defines a two-parametric family of graph classes, which contain the tree-width and tree-depth graphs as subclasses. 
It will be interesting to obtain graph-theoretic characterizations for these classes and to study them from an algorithmic and logical point of view. 

From a practical viewpoint, the space complexity is generally the roadblock to a use of higher-dimensional Weisfeiler-Leman in isomorphism testing and graph neural networks.
The introduction of non-requantifiable variables is a technique to limit space complexity, so it needs to be investigated whether problems that arise in practice can be solved by this technique.

In another direction, it will be interesting to investigate other classes of graphs that are known to have bounded Weisfeiler-Leman dimension. 
Using the new variant with restricted reusability, it may be possible to obtain a more fine-grained complexity measure and more space-efficient algorithms for such graph classes.
In particular, it seems highly plausible that (sufficiently connected) bounded genus graphs require only a limited number of requantifiable variables. 
This might allow to design easier fixed-parameter tractable results for graph isomorphism, for example on bounded genus graphs (see \cite{grohe_genus}). 
However, for this there are further restrictions, as non-reusable variables must be choosable from FPT-size bounded sets.

\bibliographystyle{plainurl}
\bibliography{bibliography}

\begin{thebibliography}{10}

\bibitem{DBLP:conf/nips/BarceloGRR21}
Pablo Barcel{\'{o}}, Floris Geerts, Juan~L. Reutter, and Maksimilian Ryschkov.
\newblock Graph neural networks with local graph parameters.
\newblock In {\em Advances in Neural Information Processing Systems 34: Annual
  Conference on Neural Information Processing Systems 2021, NeurIPS 2021,
  December 6-14, 2021, virtual}, pages 25280--25293, 2021.
\newblock URL:
  \url{https://proceedings.neurips.cc/paper/2021/hash/d4d8d1ac7e00e9105775a6b660dd3cbb-Abstract.html}.

\bibitem{DBLP:conf/iclr/BevilacquaFLSCB22}
Beatrice Bevilacqua, Fabrizio Frasca, Derek Lim, Balasubramaniam Srinivasan,
  Chen Cai, Gopinath Balamurugan, Michael~M. Bronstein, and Haggai Maron.
\newblock Equivariant subgraph aggregation networks.
\newblock In {\em The Tenth International Conference on Learning
  Representations, {ICLR} 2022, Virtual Event, April 25-29, 2022}.
  OpenReview.net, 2022.
\newblock URL: \url{https://openreview.net/forum?id=dFbKQaRk15w}.

\bibitem{cfi_opt}
Jin{-}yi Cai, Martin F{\"{u}}rer, and Neil Immerman.
\newblock An optimal lower bound on the number of variables for graph
  identification.
\newblock {\em Comb.}, 12(4):389--410, 1992.
\newblock \href {https://doi.org/10.1007/BF01305232}
  {\path{doi:10.1007/BF01305232}}.

\bibitem{Cai2023StronglyRegularGraphs}
Jinzhuan Cai, Jin Guo, Alexander~L. Gavrilyuk, and Ilia Ponomarenko.
\newblock A large family of strongly regular graphs with small
  {Weisfeiler-Leman} dimension.
\newblock {\em arXiv:2312.00460 [math.CO]}, 2023.
\newblock arXiv.
\newblock URL: \url{https://doi.org/10.48550/arXiv.2312.00460}.

\bibitem{das_logspace}
Bireswar Das, Murali~Krishna Enduri, and I.~Vinod Reddy.
\newblock Logspace and {FPT} algorithms for graph isomorphism for subclasses of
  bounded tree-width graphs.
\newblock In M.~Sohel Rahman and Etsuji Tomita, editors, {\em {WALCOM:}
  Algorithms and Computation - 9th International Workshop, {WALCOM} 2015,
  Dhaka, Bangladesh, February 26-28, 2015. Proceedings}, volume 8973 of {\em
  Lecture Notes in Computer Science}, pages 329--334. Springer, 2015.
\newblock \href {https://doi.org/10.1007/978-3-319-15612-5\_30}
  {\path{doi:10.1007/978-3-319-15612-5\_30}}.

\bibitem{datta_logspace}
Samir Datta, Nutan Limaye, Prajakta Nimbhorkar, Thomas Thierauf, and Fabian
  Wagner.
\newblock Planar graph isomorphism is in log-space.
\newblock {\em {ACM} Trans. Comput. Theory}, 14(2):8:1--8:33, 2022.
\newblock \href {https://doi.org/10.1145/3543686} {\path{doi:10.1145/3543686}}.

\bibitem{Dawar2007}
Anuj Dawar and David Richerby.
\newblock The power of counting logics on restricted classes of finite
  structures.
\newblock In Jacques Duparc and Thomas~A. Henzinger, editors, {\em Computer
  Science Logic}, pages 84--98, Berlin, Heidelberg, 2007. Springer Berlin
  Heidelberg.
\newblock \href {https://doi.org/10.1007/978-3-540-74915-8_10}
  {\path{doi:10.1007/978-3-540-74915-8_10}}.

\bibitem{Dell2018}
Holger Dell, Martin Grohe, and Gaurav Rattan.
\newblock {L}ov{\'{a}}sz meets {Weisfeiler and Leman}.
\newblock In Ioannis Chatzigiannakis, Christos Kaklamanis, D{\'{a}}niel Marx,
  and Donald Sannella, editors, {\em 45th International Colloquium on Automata,
  Languages, and Programming, {ICALP} 2018, July 9-13, 2018, Prague, Czech
  Republic}, volume 107 of {\em LIPIcs}, pages 40:1--40:14. Schloss Dagstuhl -
  Leibniz-Zentrum f{\"{u}}r Informatik, 2018.
\newblock URL: \url{https://doi.org/10.4230/LIPIcs.ICALP.2018.40}, \href
  {https://doi.org/10.4230/LIPICS.ICALP.2018.40}
  {\path{doi:10.4230/LIPICS.ICALP.2018.40}}.

\bibitem{Dvorak2010}
Zdenek Dvor{\'{a}}k.
\newblock On recognizing graphs by numbers of homomorphisms.
\newblock {\em J. Graph Theory}, 64(4):330--342, 2010.
\newblock URL: \url{https://doi.org/10.1002/jgt.20461}, \href
  {https://doi.org/10.1002/JGT.20461} {\path{doi:10.1002/JGT.20461}}.

\bibitem{DBLP:conf/stoc/ElberfeldK14}
Michael Elberfeld and Ken{-}ichi Kawarabayashi.
\newblock Embedding and canonizing graphs of bounded genus in logspace.
\newblock In {\em Symposium on Theory of Computing, {STOC} 2014, New York, NY,
  USA, May 31 - June 03, 2014}, pages 383--392. {ACM}, 2014.
\newblock \href {https://doi.org/10.1145/2591796.2591865}
  {\path{doi:10.1145/2591796.2591865}}.

\bibitem{elberfeld_logspace}
Michael Elberfeld and Pascal Schweitzer.
\newblock Canonizing graphs of bounded tree width in logspace.
\newblock {\em {ACM} Trans. Comput. Theory}, 9(3):12:1--12:29, 2017.
\newblock \href {https://doi.org/10.1145/3132720} {\path{doi:10.1145/3132720}}.

\bibitem{Fluck2024}
Eva Fluck, Tim Seppelt, and Gian~Luca Spitzer.
\newblock Going deep and going wide: Counting logic and homomorphism
  indistinguishability over graphs of bounded treedepth and treewidth.
\newblock In Aniello Murano and Alexandra Silva, editors, {\em 32nd {EACSL}
  Annual Conference on Computer Science Logic, {CSL} 2024, February 19-23,
  2024, Naples, Italy}, volume 288 of {\em LIPIcs}, pages 27:1--27:17. Schloss
  Dagstuhl - Leibniz-Zentrum f{\"{u}}r Informatik, 2024.
\newblock URL: \url{https://doi.org/10.4230/LIPIcs.CSL.2024.27}, \href
  {https://doi.org/10.4230/LIPICS.CSL.2024.27}
  {\path{doi:10.4230/LIPICS.CSL.2024.27}}.

\bibitem{fuerer_grid}
Martin F{\"{u}}rer.
\newblock {Weisfeiler-Lehman} refinement requires at least a linear number of
  iterations.
\newblock In Fernando Orejas, Paul~G. Spirakis, and Jan van Leeuwen, editors,
  {\em Automata, Languages and Programming, 28th International Colloquium,
  {ICALP} 2001, Crete, Greece, July 8-12, 2001, Proceedings}, volume 2076 of
  {\em Lecture Notes in Computer Science}, pages 322--333. Springer, 2001.
\newblock \href {https://doi.org/10.1007/3-540-48224-5\_27}
  {\path{doi:10.1007/3-540-48224-5\_27}}.

\bibitem{grohe_minors}
Martin Grohe.
\newblock Fixed-point definability and polynomial time on graphs with excluded
  minors.
\newblock {\em J. {ACM}}, 59(5):27:1--27:64, 2012.
\newblock \href {https://doi.org/10.1145/2371656.2371662}
  {\path{doi:10.1145/2371656.2371662}}.

\bibitem{grohe_minors_book}
Martin Grohe.
\newblock {\em Descriptive Complexity, Canonisation, and Definable Graph
  Structure Theory}, volume~47 of {\em Lecture Notes in Logic}.
\newblock Cambridge University Press, 2017.
\newblock \href {https://doi.org/10.1017/9781139028868}
  {\path{doi:10.1017/9781139028868}}.

\bibitem{grohe_tree-depth}
Martin Grohe.
\newblock Counting bounded tree depth homomorphisms.
\newblock In Holger Hermanns, Lijun Zhang, Naoki Kobayashi, and Dale Miller,
  editors, {\em {LICS} '20: 35th Annual {ACM/IEEE} Symposium on Logic in
  Computer Science, Saarbr{\"{u}}cken, Germany, July 8-11, 2020}, pages
  507--520. {ACM}, 2020.
\newblock \href {https://doi.org/10.1145/3373718.3394739}
  {\path{doi:10.1145/3373718.3394739}}.

\bibitem{grohe_gnn}
Martin Grohe.
\newblock The logic of graph neural networks.
\newblock In {\em 36th Annual {ACM/IEEE} Symposium on Logic in Computer
  Science, {LICS} 2021, Rome, Italy, June 29 - July 2, 2021}, pages 1--17.
  {IEEE}, 2021.
\newblock \href {https://doi.org/10.1109/LICS52264.2021.9470677}
  {\path{doi:10.1109/LICS52264.2021.9470677}}.

\bibitem{grohe_genus}
Martin Grohe and Sandra Kiefer.
\newblock A linear upper bound on the {Weisfeiler-Leman} dimension of graphs of
  bounded genus.
\newblock In Christel Baier, Ioannis Chatzigiannakis, Paola Flocchini, and
  Stefano Leonardi, editors, {\em 46th International Colloquium on Automata,
  Languages, and Programming, {ICALP} 2019, July 9-12, 2019, Patras, Greece},
  volume 132 of {\em LIPIcs}, pages 117:1--117:15. Schloss Dagstuhl -
  Leibniz-Zentrum f{\"{u}}r Informatik, 2019.
\newblock URL: \url{https://doi.org/10.4230/LIPIcs.ICALP.2019.117}, \href
  {https://doi.org/10.4230/LIPICS.ICALP.2019.117}
  {\path{doi:10.4230/LIPICS.ICALP.2019.117}}.

\bibitem{Grohe2023}
Martin Grohe, Moritz Lichter, Daniel Neuen, and Pascal Schweitzer.
\newblock Compressing {CFI} graphs and lower bounds for the {Weisfeiler-Leman}
  refinements.
\newblock In {\em 64th {IEEE} Annual Symposium on Foundations of Computer
  Science, {FOCS} 2023, Santa Cruz, CA, USA, November 6-9, 2023}, pages
  798--809. {IEEE}, 2023.
\newblock \href {https://doi.org/10.1109/FOCS57990.2023.00052}
  {\path{doi:10.1109/FOCS57990.2023.00052}}.

\bibitem{grohe_rank_width}
Martin Grohe and Daniel Neuen.
\newblock Canonisation and definability for graphs of bounded rank width.
\newblock {\em {ACM} Trans. Comput. Log.}, 24(1):6:1--6:31, 2023.
\newblock \href {https://doi.org/10.1145/3568025} {\path{doi:10.1145/3568025}}.

\bibitem{Guo2023WeisfeilerLemanPerm}
Jin Guo, Alexander~L. Gavrilyuk, and Ilia Ponomarenko.
\newblock On the {Weisfeiler-Leman} dimension of permutation graphs.
\newblock {\em arXiv:2305.15861 [math.CO]}, May 2023.
\newblock arXiv.
\newblock URL: \url{https://arxiv.org/abs/2305.15861}.

\bibitem{hella_logical_hierarchies}
Lauri Hella.
\newblock Logical hierarchies in {PTIME}.
\newblock {\em Inf. Comput.}, 129(1):1--19, 1996.
\newblock URL: \url{https://doi.org/10.1006/inco.1996.0070}, \href
  {https://doi.org/10.1006/INCO.1996.0070} {\path{doi:10.1006/INCO.1996.0070}}.

\bibitem{DBLP:conf/lics/HellingsGBG23}
Jelle Hellings, Marc Gyssens, Jan~Van den Bussche, and Dirk~Van Gucht.
\newblock Expressive completeness of two-variable first-order logic with
  counting for first-order logic queries on rooted unranked trees.
\newblock In {\em Proceedings of the 38th Annual {ACM}/{IEEE} Symposium on
  Logic in Computer Science ({LICS})}, pages 1--13, 2023.
\newblock \href {https://doi.org/10.1109/LICS56636.2023.10175828}
  {\path{doi:10.1109/LICS56636.2023.10175828}}.

\bibitem{DBLP:journals/iandc/Immerman86}
Neil Immerman.
\newblock Relational queries computable in polynomial time.
\newblock {\em Inf. Control.}, 68(1-3):86--104, 1986.
\newblock \href {https://doi.org/10.1016/S0019-9958(86)80029-8}
  {\path{doi:10.1016/S0019-9958(86)80029-8}}.

\bibitem{DBLP:books/daglib/0095988}
Neil Immerman.
\newblock {\em Descriptive complexity}.
\newblock Graduate texts in computer science. Springer, 1999.
\newblock \href {https://doi.org/10.1007/978-1-4612-0539-5}
  {\path{doi:10.1007/978-1-4612-0539-5}}.

\bibitem{immerman_fo_approach}
Neil Immerman and Eric Lander.
\newblock Describing graphs: {A} first-order approach to graph canonization.
\newblock {\em Complexity Theory Retrospective}, pages 59--81, 1990.
\newblock \href {https://doi.org/10.1007/978-1-4612-4478-3_5}
  {\path{doi:10.1007/978-1-4612-4478-3_5}}.

\bibitem{Kiefer_2020}
Sandra Kiefer.
\newblock The {Weisfeiler-Leman} algorithm: an exploration of its power.
\newblock {\em {ACM} {SIGLOG} News}, 7(3):5--27, 2020.
\newblock \href {https://doi.org/10.1145/3436980.3436982}
  {\path{doi:10.1145/3436980.3436982}}.

\bibitem{kiefer_planar}
Sandra Kiefer, Ilia Ponomarenko, and Pascal Schweitzer.
\newblock The {Weisfeiler-Leman} dimension of planar graphs is at most 3.
\newblock {\em J. {ACM}}, 66(6):44:1--44:31, 2019.
\newblock \href {https://doi.org/10.1145/3333003} {\path{doi:10.1145/3333003}}.

\bibitem{Li2023WeisfeilerLeman}
Haiyan Li, Ilia Ponomarenko, and Peter Zeman.
\newblock On the {Weisfeiler-Leman} dimension of some polyhedral graphs.
\newblock {\em arXiv:2305.17302 [math.CO]}, May 2023.
\newblock arXiv.
\newblock URL: \url{https://doi.org/10.48550/arXiv.2305.17302}.

\bibitem{tuprints24244}
Moritz Lichter.
\newblock {\em Continuing the Quest for a Logic Capturing Polynomial Time --
  Potential, Limitations, and Interplay of Current Approaches}.
\newblock PhD thesis, Technische Universit{\"a}t Darmstadt, Darmstadt, 2023.
\newblock URL: \url{http://tuprints.ulb.tu-darmstadt.de/24244/}, \href
  {https://doi.org/https://doi.org/10.26083/tuprints-00024244}
  {\path{doi:https://doi.org/10.26083/tuprints-00024244}}.

\bibitem{lindell_logspace}
Steven Lindell.
\newblock A logspace algorithm for tree canonization (extended abstract).
\newblock In S.~Rao Kosaraju, Mike Fellows, Avi Wigderson, and John~A. Ellis,
  editors, {\em Proceedings of the 24th Annual {ACM} Symposium on Theory of
  Computing, May 4-6, 1992, Victoria, British Columbia, Canada}, pages
  400--404. {ACM}, 1992.
\newblock \href {https://doi.org/10.1145/129712.129750}
  {\path{doi:10.1145/129712.129750}}.

\bibitem{DBLP:conf/nips/0001RM20}
Christopher Morris, Gaurav Rattan, and Petra Mutzel.
\newblock {W}eisfeiler and {L}eman go sparse: {T}owards scalable higher-order
  graph embeddings.
\newblock In {\em Advances in Neural Information Processing Systems 33: Annual
  Conference on Neural Information Processing Systems 2020, NeurIPS 2020,
  December 6-12, 2020, virtual}, 2020.
\newblock URL:
  \url{https://proceedings.neurips.cc/paper/2020/hash/f81dee42585b3814de199b2e88757f5c-Abstract.html}.

\bibitem{nesetril_tree-depth}
Jaroslav Nesetril and Patrice~Ossona de~Mendez.
\newblock Tree-depth, subgraph coloring and homomorphism bounds.
\newblock {\em Eur. J. Comb.}, 27(6):1022--1041, 2006.
\newblock URL: \url{https://doi.org/10.1016/j.ejc.2005.01.010}, \href
  {https://doi.org/10.1016/J.EJC.2005.01.010}
  {\path{doi:10.1016/J.EJC.2005.01.010}}.

\bibitem{otto_bounded_var}
Martin Otto.
\newblock {\em Bounded Variable Logics and Counting: {A} Study in Finite
  Models}, volume~9 of {\em Lecture Notes in Logic}.
\newblock Cambridge University Press, 2017.
\newblock \href {https://doi.org/10.1017/9781316716878}
  {\path{doi:10.1017/9781316716878}}.

\bibitem{DBLP:conf/asl/PikhurkoV09}
Oleg Pikhurko and Oleg Verbitsky.
\newblock Logical complexity of graphs: {A} survey.
\newblock In Martin Grohe and Johann~A. Makowsky, editors, {\em Model Theoretic
  Methods in Finite Combinatorics - {AMS-ASL} Joint Special Session,
  Washington, DC, USA, January 5-8, 2009}, volume 558 of {\em Contemporary
  Mathematics}, pages 129--180. American Mathematical Society, 2009.

\bibitem{DBLP:conf/lics/Tucker-Foltz21}
Jamie Tucker{-}Foltz.
\newblock Inapproximability of unique games in fixed-point logic with counting.
\newblock In {\em 36th Annual {ACM/IEEE} Symposium on Logic in Computer
  Science, {LICS} 2021, Rome, Italy, June 29 - July 2, 2021}, pages 1--13.
  {IEEE}, 2021.
\newblock \href {https://doi.org/10.1109/LICS52264.2021.9470706}
  {\path{doi:10.1109/LICS52264.2021.9470706}}.

\bibitem{DBLP:conf/lics/Bergerem19}
Steffen van Bergerem.
\newblock Learning concepts definable in first-order logic with counting.
\newblock In {\em 34th Annual {ACM/IEEE} Symposium on Logic in Computer
  Science, {LICS} 2019, Vancouver, BC, Canada, June 24-27, 2019}, pages 1--13.
  {IEEE}, 2019.
\newblock \href {https://doi.org/10.1109/LICS.2019.8785811}
  {\path{doi:10.1109/LICS.2019.8785811}}.

\bibitem{DBLP:phd/dnb/Bergerem23}
Steffen van Bergerem.
\newblock {\em Descriptive complexity of learning}.
\newblock PhD thesis, {RWTH} Aachen University, Germany, 2023.
\newblock \href {https://doi.org/10.18154/RWTH-2023-02554}
  {\path{doi:10.18154/RWTH-2023-02554}}.

\bibitem{distance-regular}
E.~R. van Dam, J.~H. Koolen, and H.~Tanaka.
\newblock Distance-regular graphs.
\newblock {\em Electronic Journal of Combinatorics}, 1(DynamicSurveys), 2018.
\newblock \href {https://doi.org/10.37236/4925} {\path{doi:10.37236/4925}}.

\bibitem{DBLP:conf/stoc/Vardi82}
Moshe~Y. Vardi.
\newblock The complexity of relational query languages (extended abstract).
\newblock In {\em Proceedings of the 14th Annual {ACM} Symposium on Theory of
  Computing, May 5-7, 1982, San Francisco, California, {USA}}, pages 137--146.
  {ACM}, 1982.
\newblock \href {https://doi.org/10.1145/800070.802186}
  {\path{doi:10.1145/800070.802186}}.

\bibitem{DBLP:conf/iclr/0002CWLF23}
Qing Wang, Dillon~Ze Chen, Asiri Wijesinghe, Shouheng Li, and Muhammad Farhan.
\newblock {$\mathscr{N}$}-{WL}: {A} new hierarchy of expressivity for graph
  neural networks.
\newblock In {\em The Eleventh International Conference on Learning
  Representations, {ICLR} 2023, Kigali, Rwanda, May 1-5, 2023}. OpenReview.net,
  2023.
\newblock URL: \url{https://openreview.net/pdf?id=5cAI0qXxyv}.

\end{thebibliography}

\end{document}